\newcommand{\allnotes}[1]{}
\newtheorem{definition}{Definition}
\newcommand{\captionfonts}{\bf \footnotesize}
\long\def\@makecaption#1#2{%
	\vskip\abovecaptionskip
	\sbox\@tempboxa{{\captionfonts #1: #2}}%
	\ifdim \wd\@tempboxa >\hsize
	{\captionfonts #1: #2\par}
	\else
	\hbox to\hsize{\hfil\box\@tempboxa\hfil}%
	\fi
	\vskip\belowcaptionskip}
\newcommand{\squishlist}{
  \begin{list}{$\bullet$}{
    \setlength{\itemsep}{0pt}       \setlength{\parsep}{3pt}
    \setlength{\topsep}{3pt}        \setlength{\partopsep}{0pt}
    \setlength{\leftmargin}{1em}    \setlength{\labelwidth}{1em}
    \setlength{\labelsep}{0.5em} } }
\newcommand{\squishend}{
  \end{list} }
\newcommand{\squishenum}{
  \begin{enumerate}{}{
    \setlength{\itemsep}{0pt}       \setlength{\parsep}{3pt}
    \setlength{\topsep}{3pt}        \setlength{\partopsep}{0pt}
    \setlength{\leftmargin}{1em}    \setlength{\labelwidth}{1em}
    \setlength{\labelsep}{0.5em} } }
\newcommand{\squishenumend}{
\end{enumerate} }
\newcommand{\shortarrow}[1][4pt]{\!\mathrel{%
   \vcenter{\hbox{\rule[-.5\fontdimen8\textfont3]{#1}{\fontdimen8\textfont3}}}%
   \mkern-4mu\hbox{\usefont{U}{lasy}{m}{n}\symbol{41}}}\!}
\setlist[enumerate]{leftmargin=1.5em,topsep=3pt,itemsep=0pt}
\definecolor{light-gray}{gray}{0.94}
\renewcommand{\allnotes}[1]{\textit{#1}}
\newcolumntype{x}[1]{>{\centering\arraybackslash\hspace{0pt}}p{#1}}
\begin{document}

\title{Efficient Direct-Connect Topologies for Collective Communications\thanks{Distribution Statement ``A'' (Approved for Public Release, Distribution Unlimited).}}

\author{
{\rm Liangyu Zhao}\\
University of Washington
\and
{\rm Siddharth Pal}\\
Raytheon BBN
\and
{\rm Tapan Chugh}\\
University of Washington
\and
{\rm Weiyang Wang}\\
MIT CSAIL
\and
{\rm Jason Fantl}\\
Raytheon BBN
\and
{\rm Prithwish Basu}\\
Raytheon BBN
\and
{\rm Joud Khoury}\\
Raytheon BBN
\and
{\rm Arvind Krishnamurthy}\\
University of Washington
}

\maketitle
\begin{abstract}
We consider the problem of distilling efficient network topologies for collective communications. We provide an algorithmic framework for constructing direct-connect topologies optimized for the latency vs. bandwidth trade-off associated with the workload.
Our approach synthesizes many different topologies and schedules for a given cluster size and degree and then identifies the appropriate topology and schedule for a given workload.
Our algorithms start from small, optimal base topologies and associated communication schedules and use techniques that can be iteratively applied to derive much larger topologies and schedules.
Additionally, we incorporate well-studied large-scale graph topologies into our algorithmic framework by producing efficient collective schedules for them using a novel polynomial-time algorithm.
Our evaluation uses multiple testbeds and large-scale simulations to demonstrate significant performance benefits from our derived topologies and schedules. 
\end{abstract}

\pagestyle{plain}

\section{Introduction}

Collective communication operations involve concurrently aggregating and distributing data on a cluster of nodes and are used in both machine learning (ML) and high-performance computing (HPC).
With the improved computational capabilities of accelerators, collective operations are a significant overhead in large-scale distributed ML training~\cite{sergeev2018horovod,gibiansky2017bringing,wang2020blink}.

An emerging approach to meet these challenging demands has been to employ various forms of optical circuit switching to achieve higher bandwidths at reasonable capital expenditure and energy costs~\cite{sipml,topo-opt,sip-switch,x-nest,hybrid-opt,tpuv4,tpuv4fabric}.
Hosts communicate using a limited number of optical circuits that can be reconfigured at timescales appropriate for the hardware (see \S\ref{sec:fabric}), thus exposing network topology as a configurable component.
We refer to this setting as \emph{direct-connect} with circuits configured and fixed for an appropriate duration.

Existing optical-circuits-based ML systems~\cite{sipml, sip-switch,x-nest,hybrid-opt} fit this direct-connect model but do not exploit the flexibility topology reconfiguration offers. 
Collective operations such as allreduce are still limited to a few well-known algorithms that can fit the degree constraints of the optical fabric (e.g., rings, multi-rings, tori, and trees with bounded degrees) and accept the consequent performance tradeoffs.
For example, ring allreduce, while bandwidth-efficient, has a high graph diameter, causing high total-hop latency.
A double binary tree, on the other hand, has a logarithmic diameter but suffers from load imbalances and bandwidth inefficiencies. Conversely, the broader spectrum of well-known collective algorithms that achieve desired latency and bandwidth (e.g., recursive-doubling, Bruck algorithm)~\cite{thakur2005optimization,coll-survey} use dynamic communication patterns ideal for switch networks but are ill-suited for degree-constrained direct-connect networks.

To fill this gap, we seek to identify new custom-built topologies and communication schedules for direct-connect networks. We pose the following question: \emph{How to efficiently construct high-performance direct-connect topologies and communication schedules for collectives given the network's performance characteristics and degree constraints?}

This question poses several challenges. First, jointly optimizing \emph{both} the network topology and the corresponding communication schedule is intractable at a large scale. Prior efforts reduce the search cost by optimizing only one or the other (e.g., schedules for a given topology~\cite{wang2020blink,SCCL,shah2021synthesizing} or topology permutations while retaining a ring schedule~\cite{topo-opt}).
The combination of topological structure and communication schedule as degrees of freedom explodes the search space, making this a seemingly intractable problem.
Second, the optimization must carefully consider the workload and the network's performance characteristics when distilling a topology and schedule. For example, minimizing the topology's diameter is ideal not only for latency-sensitive allreduce at small data sizes but also for all-to-all throughput; however, this could come at the cost of load imbalance across links in bandwidth-sensitive allreduce at large data sizes.
Finally, lowering the synthesized schedules to the underlying hardware and runtimes~\cite{ncclRAR, msccl, oneccl} in an efficient way requires careful scheduling to achieve the desired performance in practice.

Our work addresses these issues by developing an algorithmic toolchain for quickly synthesizing efficient topologies and schedules for collective communications.
\squishenum
\item We devise a range of \textbf{expansion techniques} for synthesizing custom large-scale network topologies and schedules. The expansions start with small, optimal topologies and communication schedules and expand them to achieve near-optimal large-scale topologies and schedules.
\item We devise a \textbf{polynomial-time schedule generation algorithm} to produce optimal collective communication schedules for large-scale topologies with certain symmetry properties. This exposes many well-known topologies as options for the direct-connect network fabric.
\item We devise a \textbf{topology enumeration and search algorithm} to identify the best option for a target cluster and workload by exploring the \emph{Pareto-efficient} options that provide different tradeoffs for bandwidth efficiency, total-hop latency, and also all-to-all throughput.
\item We develop \textbf{compilers} to realize our optimized schedules. We offer efficient implementations for both GPUs and CPUs and integrate with ML frameworks (e.g., PyTorch) through the MSCCL~\cite{msccl} and oneCCL~\cite{oneccl} runtimes.
\squishenumend

We evaluate our approach using two testbeds: a 12-node GPU cluster capable of topology reconfiguration, and torus clusters on Frontera~\cite{frontera} supercomputer with up to 54 CPU nodes.
Our techniques reduce collective communication times by $>\!30\%$ for DNN training on the GPU testbed and up to $3.1\!\times$ for HPC workloads on the supercomputer. Simulations for large-scale DNN training show up to an order of magnitude reduction in total communication time from topology and schedule optimization. Our schedule generation algorithm is orders of magnitude faster than the state-of-the-art (e.g., SCCL~\cite{SCCL} and TACCL~\cite{shah2021synthesizing}), capable of producing schedules for topologies with thousands of nodes in a minute.

\section{Background \& Related Work}

\subsection{Network Fabric}
\label{sec:fabric}

Our work identifies topologies and schedules helpful for a broad range of settings, such as \emph{switchless physical circuits}, \emph{patch-panel optical circuits}, and \emph{optical circuit switches}. While these options differ in cost and reconfigurability~\cite{topo-opt}, they are all significantly cheaper than packet-switch solution~\cite{tpuv4,tpuv4fabric,topo-opt} and can benefit from our work.

\emph{Switchless physical circuits} require the least amount of fabric hardware.  However, the topology must remain reasonably static for long periods, as the reconfiguration is manual. \emph{Patch-panel optical circuits} provide a higher degree of reconfigurability by using a mechanical solution (e.g., robotic arms) to perform physical reconfigurations through a patch panel. The reconfigurations occur on the scale of minutes, but the patch panel itself can scale to a large number of duplex ports and is reasonably cheap (e.g., 1008 ports at \$100 per port~\cite{patch-panel}).
Both options can benefit from a carefully curated topology optimized for the workload, but they require it to remain static for a job given the reconfiguration costs.

Commercial \emph{optical circuit switches (OCS)} can perform reconfigurations in $\approx$10ms, are more expensive than patch panels, and scale to fewer ports~\cite{polatis,calient} (e.g., Polatis 3D-MEMS switch has 384 ports at \$520 per port~\cite{polatis}). Though OCSes support faster reconfigurations, the delays are still too high to support the rewiring of the circuits \emph{during} a typical collective operation.\footnote{Research prototypes~\cite{rotornet,sirius} support $\mu s$ to $ns$ reconfigurations using overlay-hop relays and Valiant load balancing (VLB). While this is valuable for generic workloads, collectives have structured communication patterns, and it would be ideal to realize them without incurring the VLB overheads.} Thus, they cannot take advantage of algorithms designed for full-bisection switches, such as recursive halving/doubling \cite{thakur2005optimization,coll-survey}, that exploit high logical degree over time to provide both latency and bandwidth optimality. Thus, OCSes can also benefit from the custom-built and low-degree topologies synthesized by our approach.  

All of these optical technologies allow for a shared cluster to be split into multiple subclusters for running separate jobs~\cite{tpuv4}, so each job can be configured with its own topology. Further, unidirectional topologies are technically feasible on optical testbeds. 
Unidirectionality gives greater freedom in topology design and can enable lower-diameter networks.

\noindent \textbf{Evaluation Target:} In this paper, we use a reconfigurable optical patch panel to configure and evaluate different topologies. Given the high reconfiguration costs for the patch panel, we identify an efficient topology that will remain static for the duration of a job. Nevertheless, our techniques could be used to derive topologies for finer reconfiguration timescales if the hardware can efficiently support them.

\subsection{Related Work}\label{sec:relatedwork}
Several existing optical-circuit-based ML systems~\cite{sipml, sip-switch,x-nest,hybrid-opt,topo-opt} fit the direct-connect model; however, they rely on existing implementations of collectives. Typically, communication libraries for ML training~\cite{sergeev2018horovod,nccl,gibiansky2017bringing,noordhuis2017accelerating} offer either ring collective for high-latency bandwidth-optimal transfers or tree collective, which has logarithmic latency but suffers from load-imbalances across links. Other topologies such as mesh, tori, hypercubes, etc., have also been explored in HPC systems~\cite{bokhari1992complete, ho1986distributed, barnett1994interprocessor, patarasuk2009bandwidth,conc_comp, chan2006collective, faizian2017random}, but their bandwidth-latency tradeoff choices are limited as well. Bandwidth and latency optimal collectives for switch networks such as recursive-doubling, Bruck algorithm~\cite{coll-survey,thakur2005optimization}, BlueConnect~\cite{cho2019blueconnect}, etc., are unsuitable for direct-connect networks, because their one-to-one communication patterns fail to utilize all available links, and they assume a fully connected network.

Our work uniquely considers joint optimization of \emph{both} the network topology and the corresponding collective communication schedule at a large scale, while prior work either optimizes one or the other. For instance, TopoOpt~\cite{topo-opt} generates customized shifted-ring topologies to optimize concurrent collective and non-collective communications for hybrid data-parallel~\cite{li2020pytorch,li2014scaling,dean2012large} and model-parallel~\cite{shoeybi2019megatron,narayanan2021efficient,jia2019beyond} DNN training, respectively. The collective communications in TopoOpt still use existing ring collectives. Consequently, when data or expert parallelism, for example, dominate the workload, TopoOpt's performance suffers from similar latency issues present in existing ring collectives (see \S\ref{sec:arperf}).
Our effort is complementary as it synthesizes new topologies and schedules for collectives that span the entire cluster, but we do not optimize sub-cluster communications for hybrid parallelism. Extending our work to jointly optimize topologies and schedules for hybrid parallelism is future work.

Recent work like Blink~\cite{wang2020blink}, SCCL~\cite{SCCL}, and TACCL~\cite{shah2021synthesizing} also focus on generating a collective schedule for a given topology. However, they all involve NP-hard optimizations that severely limit their scalability. SCCL is capable of generating optimal schedules, but it fails to generate a schedule in a reasonable time when the topology size is beyond 30 nodes. TACCL improves the scalability of SCCL by using communication sketches and also handles switch networks, but it sacrifices schedule performance and is still limited in scalability (see \S\ref{sec:schedeval}).
In our approach, we either synthesize the schedule along with the topology or rely on a polynomial-time schedule generation technique that is provably optimal for networks with certain symmetry properties.

Generic large-scale topologies are typically not optimized for collective communications but for general datacenter traffic~\cite{jellyfish,xpander,dragonfly,slimfly,polarfly,spectralfly}.
Our framework can incorporate any degree-constrained regular topology (e.g., low-diameter expander graphs~\cite{rolim1998bisecting,genkautz}) and generate candidate schedules.

\subsection{All-to-All Throughput}\label{sec:alltoallmain}

While we optimize collectives like allreduce, reduce-scatter, and allgather, the performance of all-to-all communication is also crucial for training DNN models like Mixture of Experts (MoE)~\cite{gshard,switchtran,deepspeedmoe,lina} and Deep Learning Recommendation Model (DLRM)~\cite{naumov2019deep,naumov2020deep}. Unlike other collectives, the scheduling of all-to-all can be easily formulated and efficiently solved as multi-commodity flow (MCF) problem~\cite{basu2024efficient,alltoall1,alltoall2,alltoall3,xpander}. However, the graph diameter of the underlying topology is critical for all-to-all throughput~\cite{basu2024efficient,diameter1,diameter2,topo-opt,rotornet}.
The intuition behind this is that if the nodes are far from each other, then all-to-all flows cost more \textit{bandwidth tax}~\cite{rotornet2,topo-opt} (i.e., the bandwidth of the flow multiplied by the length of the flow). With a fixed \textit{network capacity} (i.e., the total number of links times link bandwidth), longer flows reduce the available bandwidth for each flow, thus decreasing all-to-all throughput. In this work, we construct topologies and associated schedules that are high-performance in both allreduce-type collectives and all-to-all by deriving efficient allreduce-type schedules on either existing or our synthesized low-diameter topologies.

\section{Formal Model of Collective Communications}\label{sec:collectiveintro}

\begin{table}[tb]
    \centering
    \resizebox{\columnwidth}{!}{
    \begin{tabular}{|c|l||c|l|}
        \hline
        $M$ & total data size &
        $\alpha$ & single-hop latency \\
        $N$ & number of nodes &
        $B$ & total egress bandwidth of a node \\
        $S$ & data shard ($|S|=\frac{M}{N}$) &
        $B/d$ & bandwidth of a single link \\
        $C$ & data chunk ($C\subseteq S$) &
        $T_L(A)$ & total-hop latency of schedule \\
        $d$ & degree of topology &
        $T_B(A)$ & bandwidth runtime of schedule \\
        $V_G$ & vertex/node set of $G$ &
        $T^*_L\!(\!N\!,\!d)$ & Moore optimality (Def \ref{def:mooreopt}) \\
        $E_G$ & edge/link set of $G$ &
        $T^*_B(N)$ & bandwidth optimality $\frac{M}{B}\cdot\frac{N-1}{N}$ \\
        $D(\!G\!)$ & graph diameter of $G$ &
        $N^+_x(u)$ & nodes at distance $x$ from $u$ \\
        \cline{1-2}
        \multicolumn{2}{|c||}{Table~\ref{table:toposummary} for graph symbols} &
        $N^-_x(u)$ & nodes at distance $x$ to $u$ \\
        \hline
    \end{tabular}
    }
    \caption{Summary of Important Notations}
    \label{tab:notations}
\end{table}

We provide a formal model of \textbf{\emph{reduce-scatter}}, \textbf{\emph{allgather}}, and \textbf{\emph{allreduce}} collectives.
In each operation, there are $N$ nodes operating on a vector of data of total size $M$.
The data can be divided into $N$ shards.
In \emph{reduce-scatter}, each node $i$ reduces the $i$-th shard from all other nodes;
in \emph{allgather}, each node $i$ broadcasts the $i$-th shard to all other nodes;
in \emph{allreduce}, each node $i$ ends up with the fully reduced vector of data.

Throughout the paper, we only elaborate on \emph{allgather} schedule construction because the other two collectives are direct transformations. Since \emph{allgather} and \emph{reduce-scatter} are, respectively, simultaneous broadcasts and reductions for each node, we can \textbf{construct \emph{reduce-scatter} schedules} in bidirectional topologies by simply reversing the communications in \emph{allgather} schedules~\cite{conc_comp}.
In unidirectional topologies, we utilize graph transposition to achieve a similar transformation~(Appendix~\ref{sec:reducescatterallgather}).\footnote{The main text of this paper focuses on high-level ideas of various techniques. We provide detailed mathematical analyses in the appendix.}
\textbf{To construct an \emph{allreduce} schedule,} we concatenate \emph{reduce-scatter} and \emph{allgather}.

\subsection{Communication Topology \& Schedule}

\begin{figure}[tb]
    \centering
    \scalebox{0.8}{
        \begin{tikzpicture}[roundnode/.style={circle,draw=black,minimum size=7mm}]
	\node[roundnode,label=170:{$S=\begin{bmatrix} C_1 \\ C_2 \end{bmatrix}$}]	(0)	at (-1.5*0.7, 1.5*0.7) {$a$};
	\node[roundnode]	(2)	at (1.5*0.7, 1.5*0.7)	{$c$};
	\node[roundnode]	(1) at (1.5*0.7, -1.5*0.7)	{$b$};
	\node[roundnode]	(3) at (-1.5*0.7, -1.5*0.7)	{$d$};
	
	\path[-latex,anchor=south,line width=0.4mm] (0) edge [bend left=25] node {$((a,S),(a,c),1)$} (2);
	\path[-latex] (2) edge [bend left=25] (0);
	\path[-latex,anchor=west,line width=0.4mm] (2) edge [bend left=25] node {$((a,C_1),(c,b),2)$} (1);
	\path[-latex] (1) edge [bend left=25] (2);
	\path[-latex,anchor=north,line width=0.4mm] (3) edge [bend right=25] node {$((a,C_2),(d,b),2)$} (1);
	\path[-latex] (1) edge [bend right=25] (3);
	\path[-latex,anchor=east,line width=0.4mm] (0) edge [bend right=25] node {$((a,S),(a,d),1)$} (3);
	\path[-latex] (3) edge [bend right=25] (0);
\end{tikzpicture}
    }
    \caption{The allgather schedule of complete bipartite graph $K_{2,2}$. \normalfont Shard $S$ is divided into two half chunks $C_1$ and $C_2$. From $a$, at the 1st comm step, $a$ sends the entire shard $S$ to both $c$ and $d$. At the 2nd comm step, $c$ and $d$ send the two half chunks $C_1$ and $C_2$ respectively to $b$. Thus, every node receives the full shard from $a$. By applying similar broadcast from $c,b,d$ in parallel, we have a complete BW-optimal \emph{allgather} schedule with $T_L\!=\!2\alpha,T_B\!=\!\frac{M}{B}\!\cdot\!\frac{3}{4}$.}
    \label{fig:notationexample}
\end{figure}

The network topology is modeled as a directed graph (digraph) $G=(V,E)$, where $V$ denotes the set of nodes ($|V|=N$) and $E$ denotes the set of directed links/edges.
The direct-connect network imposes a constraint that all nodes have degree $d$, which is the number of connection ports on each host and is typically low and independent of $N$.

A communication \textbf{algorithm} $(G,A)$ uses the communication \textbf{schedule} $A$ on topology $G$. Schedule $A$ can be specified as what chunk $C$ is communicated over which link in which \textbf{communication (comm) step $t$}. We define \textbf{chunk} $C$ as a subset of shard $S$. Both $C$ and $S$ are specified as \emph{index sets} of elements. Typically, $S$ is interval $[0,1]$ representing the whole shard, and $C$ is some subinterval. We denote $v$'s chunk $C$ as $(v,C)$, which is a subset of $v$'s starting shard $(v,S)$.
Let $((v,C),(u,w),t)$ denote that $v$'s chunk $C$ is sent by node $u$ to its neighbor $w$ at comm step $t$. Schedule $A$ then is specified as a list of tuples $((v,C),(u,w),t)$.
Figure~\ref{fig:notationexample} gives an example of an allgather schedule in such a tuple notation.
Within a schedule, chunks can be different-sized subsets of $S$.
Appendix~\ref{sec:reducescatterallgather} gives formal definitions of reduce-scatter/allgather schedules.

\subsection{Cost Model}\label{sec:costmodelmain}

We use the well-known $\alpha$-$\beta$ cost model~\cite{hockney1994communication}.
The cost of sending a message of size $H$ over a link is $\alpha + \beta H$. This cost comprises two components: the constant single-hop latency $\alpha$ and a bandwidth component $\beta$, which is the inverse of link bandwidth, i.e., $\beta=\frac{1}{b}$.
This simple model has been shown to be appropriate for GPU interconnects~\cite{li2019evaluating, SCCL, shah2021synthesizing}.
In our analysis, we use node bandwidth $B$ with $B=db$.
In this paper, we focus on homogeneous networks, although some techniques also support heterogeneous ones (Appendix~\ref{sec:heterogeneous}).

The runtime of a schedule $A$ can be broken down into a total-hop latency component and a bandwidth component. The \textbf{total-hop latency} component $T_L(A)\!=\!t_{\max}\alpha$, where $t_{\max}$ is the number of comm steps. $T_L(A)$ represents the cost of performing schedule $A$ on an infinitesimal amount of data. The bandwidth component $T_B(A)$, or \textbf{bandwidth (BW) runtime}, is the sum of the BW runtime of each comm step, i.e., $\sum_tT_B(A_t)$. The BW runtime of comm step $t$ is the max amount of data transmitted by a link within the comm step, divided by link bandwidth $b\!=\!B/d$.
For $N$-node $d$-regular graphs, the optimal runtime of both \emph{reduce-scatter} and \emph{allgather} is approximately $\alpha\log_d N\!+\!\frac{1}{B}\!\cdot\!\frac{M(N-1)}{N}$. The 1st term represents the total-hop latency required for communicating across the diameter of a topology, while the 2nd term represents the transmission time for any node to send/recv $N\!-\!1$ shards in reduce-scatter/allgather. \textbf{One should not confuse total-hop latency with overall latency,} which is the sum of total-hop latency and BW runtime. We omit the computational time of reduction and discuss this in Appendix~\ref{sec:computeopt}.

We analyze the optimality of total-hop latency and BW runtime separately. An algorithm $(G,A)$ is optimal in one component if no algorithm with the same $N,d$ can perform better. For BW runtime, an algorithm is \textbf{bandwidth (BW) optimal} iff its $T_B$ equals $T_B^*(N)\!\coloneqq\!\frac{M}{B}\!\cdot\!\frac{N-1}{N}$.
For total-hop latency, given $G$, the lowest $T_L$ achievable is $\alpha\!\cdot\! D(G)$, where $D(G)$ is the graph diameter of $G$. Thus, the optimal total-hop latency equals the smallest diameter of any $N$-node $d$-regular graph, which remains an open question in graph theory~\cite{moore_bound}. Therefore, we define \emph{Moore optimality} based on \emph{Moore bound}, which provides a lower bound for diameter given $N,d$ and thus a well-defined $T_L^*(N,d)$. An algorithm is \textbf{Moore optimal} iff $T_L\!=\!T_L^*(N,d)$. 
Moore optimal topologies have the lowest diameter, which is also ideal for all-to-all throughput.
Appendix~\ref{app-sec:optimality} gives formal definitions of optimalities.

\emph{Ring allreduce}~\cite{ncclRAR} has a total-hop latency that is linear in $N$, while the BW runtime is optimal. \emph{Double binary trees} (DBT)~\cite{ncclDBT}, on the other hand, offers the advantage of logarithmic total-hop latency but has suboptimal BW performance. Our work offers a range of topologies that are Pareto-efficient in total-hop latency and BW performance.

\section{Overview of Our Approach}
\label{sec:overview}

Direct-connect topologies can typically be categorized as either \textbf{low-hop} topologies, which have low diameters (e.g., expander graphs) suited for all-to-all throughput and small-data allgather/reduce-scatter/allreduce, or \textbf{load-balanced} topologies, which have simplistic structure (e.g., ring, torus) with easy load-balanced schedule for large-data allgather/reduce-scatter/allreduce (see Table~\ref{tab:tradeoffs}). We seek to jointly identify network topologies and schedules that achieve high performance in both categories to the extent possible. Specifically, this entails the challenging task of constructing load-balanced allgather\footnote{Recall that we construct allreduce and reduce-scatter by transforming allgather schedules.} schedules for low-hop topologies.

\begin{table}[tb]
    \centering
    \resizebox{\columnwidth}{!}{
    \begin{tabular}{|l|c|c|c|}
        \hline
        \multirow{2}{*}{\shortstack[c]{Topology Type}} & \multirow{2}{*}{\shortstack[c]{Small-Data Allreduce \\ (Total-Hop Latency $T_L$)}} & \multirow{2}{*}{\shortstack[c]{Large-Data Allreduce \\ (BW Perf $T_B$)}} & \multirow{2}{*}{\shortstack[c]{All-to-All \\ Throughput}} \\
        & & & \\
        \hline
        \textbf{Low-Hop} & \checkmark & -- & \checkmark \\
        \hline
        \textbf{Load-Balanced} & -- & \checkmark & -- \\
        \hline
    \end{tabular}
    }
    \caption{The tradeoffs of low-hop topology vs. load-balanced topology. \normalfont Reduce-scatter and allgather perform similarly to allreduce.}
    \label{tab:tradeoffs}
\end{table}

At a small scale, one could handpick a topology such as the complete bipartite graph $K_{2,2}$ defined at $N\!=\!4,d\!=\!2$. $K_{2,2}$ is both low-hop and load-balanced that a Moore- and BW-optimal allgather could be manually constructed (Figure~\ref{fig:notationexample}).
But how do we scale the topology and the schedule to larger sizes? Our work approaches this problem with two tools: \textit{expansion techniques}~(\S\ref{sec:expansion}) and \textit{BFB schedule generation}~(\S\ref{sec:shortestpathschedule}).

\textbf{Expansion Techniques:} Given a base topology and its schedule, expansion techniques can expand it into a larger topology and associated schedule with minimal loss in performance. We call the resulting topologies \textbf{synthesized topologies}. The base topologies are small in scale, such as $K_{2,2}$ in Figure~\ref{fig:notationexample}, for which straightforward schedules exist or an exhaustive search for the schedule is feasible. The line graph expansion, for example, can then expand $K_{2,2}$ and its schedule in Figure~\ref{fig:notationexample} to an allgather for $N\!=\!4\!\cdot\! 2^n$, for arbitrarily large $n$, while retaining a node degree of 2. Multiple expansion techniques can be composed to achieve the desired $N$ and $d$.

\textbf{Breadth-First-Broadcast (BFB) Schedule Generation:} Besides synthesized topologies, we can use known topologies from graph theory (e.g., twisted torus, expander graphs). We call them \textbf{generative topologies} as they can be instantiated at various $N$ and $d$. Generative topologies are often low-hop, beneficial for total-hop latency and all-to-all throughput. The problem, though, is that efficient load-balanced collective schedules are not known for many of these topologies, and existing schedule generators~\cite{SCCL, shah2021synthesizing} are intractable at moderate to large scale.
Our work offers BFB, a polynomial-time schedule generator that can yield high-performance schedules for large-scale topologies. For allgather, it performs a breadth-first broadcast from each node and uses linear programs to balance the workload on links. Although not always optimal, BFB schedules are provably optimal for many topologies exhibiting certain symmetries. For instance, BFB can generate a schedule with the lowest total-hop latency and BW optimality on any torus, including those with unequal dimensions.

With expansion techniques and BFB schedules, our \textit{topology finder} (\S\ref{sec:topofinder}) assembles a large pool of topologies and schedules, identify \textit{Pareto-efficient} ones from a low-hop vs. load-balanced perspective, and select from them for a given workload.
When two options are Pareto-efficient, one must be better than the other in either low-hop (i.e., total-hop latency) or load-balanced (i.e., BW performance) but not in both.
We choose low-hop options for workloads requiring all-to-all throughput and small-data allgather/reduce-scatter/allreduce, and load-balanced options for large-data allgather/reduce-scatter/allreduce.
Finally, the \textit{compiler} (\S\ref{sec:implementation}) lowers the chosen topology and schedule to the runtime and hardware.

\begin{figure*}[htb]
    \centering
    \begin{minipage}{0.65\textwidth}
        \centering
        \begin{subfigure}{.32\textwidth}
            \centering
            \scalebox{0.75}{
\begin{tikzpicture}[roundnode/.style={circle,draw=black,minimum size=7mm}]
	\node[roundnode,line width=0.4mm]	(0)	at (-1.5*0.9, 1.5*0.9) {$a$};
	\node[roundnode]	(2)	at (1.5*0.9, 1.5*0.9)	{$c$};
	\node[roundnode]	(1) at (1.5*0.9, -1.5*0.9)	{$b$};
	\node[roundnode]	(3) at (-1.5*0.9, -1.5*0.9)	{$d$};
	\path[-latex,anchor=south,line width=0.4mm] (0) edge [bend left=25] node {\textbf{1}} (2);
	\path[-latex] (2) edge [bend left=25] (0);
	\path[-latex,anchor=west,line width=0.4mm] (2) edge [bend left=25] node {\textbf{2}} (1);
	\path[-latex] (1) edge [bend left=25] (2);
	\path[-latex,anchor=south,line width=0.4mm] (3) edge [bend left=25] node {\textbf{2}} (1);
	\path[-latex] (1) edge [bend left=25] (3);
	\path[-latex,anchor=west,line width=0.4mm] (0) edge [bend left=25] node {\textbf{1}} (3);
	\path[-latex] (3) edge [bend left=25] (0);
\end{tikzpicture}
}
            \caption{$K_{2,2}$ $(N\!=\!4,d\!=\!2)$}
        \end{subfigure}
        \hfill
        \begin{subfigure}{.33\textwidth}
            \centering
            \scalebox{0.75}{
\begin{tikzpicture}[roundnode/.style={circle,draw=black,minimum size=1mm,line width=0.25mm,fill=white},roundnode2/.style={circle,draw=gray,minimum size=1mm}]
	\node[roundnode2]	(0)	at (-1.5*0.9, 1.5*0.9) {\textcolor{gray}{$a$}};
	\node[roundnode2]	(2)	at (1.5*0.9, 1.5*0.9)	{\textcolor{gray}{$c$}};
	\node[roundnode2]	(1) at (1.5*0.9, -1.5*0.9)	{\textcolor{gray}{$b$}};
	\node[roundnode2]	(3) at (-1.5*0.9, -1.5*0.9)	{\textcolor{gray}{$d$}};
	\path[-latex,draw=gray,fill=gray] (0) edge [bend left=35] (2);
	\path[-latex,draw=gray,fill=gray] (2) edge [bend left=35] (0);
	\path[-latex,draw=gray,fill=gray] (2) edge [bend left=35] (1);
	\path[-latex,draw=gray,fill=gray] (1) edge [bend left=35] (2);
	\path[-latex,draw=gray,fill=gray] (3) edge [bend left=35] (1);
	\path[-latex,draw=gray,fill=gray] (1) edge [bend left=35] (3);
	\path[-latex,draw=gray,fill=gray] (0) edge [bend left=35] (3);
	\path[-latex,draw=gray,fill=gray] (3) edge [bend left=35] (0);
	
	\node[roundnode] (20) at (0, 0.9*0.9) {$ca$};
	\node[roundnode] (02) at (0, 2.2*0.9) {$ac$};
	\node[roundnode] (03) at (-0.9*0.9, 0) {$ad$};
	\node[roundnode] (30) at (-2.2*0.9, 0) {$da$};
	\node[roundnode] (31) at (0, -0.9*0.9) {$db$};
	\node[roundnode] (13) at (0, -2.2*0.9) {$bd$};
	\node[roundnode] (12) at (0.9*0.9, 0) {$bc$};
	\node[roundnode] (21) at (2.2*0.9, 0) {$cb$};
	
	\path[-latex,line width=0.25mm] (02) edge [bend left=40] (20);
	\path[-latex,line width=0.25mm] (20) edge [bend left=40] (02);
	\path[-latex,line width=0.25mm] (03) edge [bend left=40] (30);
	\path[-latex,line width=0.25mm] (30) edge [bend left=40] (03);
	\path[-latex,line width=0.25mm] (13) edge [bend left=40] (31);
	\path[-latex,line width=0.25mm] (31) edge [bend left=40] (13);
	\path[-latex,line width=0.25mm] (12) edge [bend left=40] (21);
	\path[-latex,line width=0.25mm] (21) edge [bend left=40] (12);
	
	\path[-latex,line width=0.25mm] (02) edge [bend left=40] (21);
	\path[-latex,line width=0.25mm] (21) edge [bend left=40] (13);
	\path[-latex,line width=0.25mm] (13) edge [bend left=40] (30);
	\path[-latex,line width=0.25mm] (30) edge [bend left=40] (02);
	
	\path[-latex,line width=0.25mm] (20) edge [bend left=40] (03);
	\path[-latex,line width=0.25mm] (03) edge [bend left=40] (31);
	\path[-latex,line width=0.25mm] (31) edge [bend left=40] (12);
	\path[-latex,line width=0.25mm] (12) edge [bend left=40] (20);
\end{tikzpicture}
}
            \caption{$L(K_{2,2})$ $(N\!=\!8,d\!=\!2)$}
            \label{fig:linegraphtopology}
        \end{subfigure}
        \hfill
        \begin{subfigure}{.33\textwidth}
            \centering
            \scalebox{0.75}{
\begin{tikzpicture}[roundnode/.style={circle,draw=black,minimum size=1mm,line width=0.25mm,fill=white},roundnode2/.style={circle,draw=gray,minimum size=1mm}]
	\node[roundnode,line width=0.4mm] (20) at (0, 0.9*0.9) {$ca$};
	\node[roundnode] (02) at (0, 2.2*0.9) {$ac$};
	\node[roundnode] (03) at (-0.9*0.9, 0) {$ad$};
	\node[roundnode] (30) at (-2.2*0.9, 0) {$da$};
	\node[roundnode] (31) at (0, -0.9*0.9) {$db$};
	\node[roundnode] (13) at (0, -2.2*0.9) {$bd$};
	\node[roundnode] (12) at (0.9*0.9, 0) {$bc$};
	\node[roundnode] (21) at (2.2*0.9, 0) {$cb$};
	
	\path[-latex] (02) edge [bend left=40] (20);
	\path[-latex,line width=0.4mm,anchor=east] (20) edge [bend left=40] node {\textbf{1}} (02);
	\path[-latex,line width=0.4mm,anchor=north] (03) edge [bend left=40] node {\textbf{2}} (30);
	\path[-latex] (30) edge [bend left=40] (03);
	\path[-latex] (13) edge [bend left=40] (31);
	\path[-latex,line width=0.4mm,anchor=west] (31) edge [bend left=40] node {\textbf{3}} (13);
	\path[-latex] (12) edge [bend left=40] (21);
	\path[-latex,line width=0.4mm,anchor=north] (21) edge [bend left=40] node {\textbf{3}} (12);
	
	\path[-latex,line width=0.4mm,anchor=south west] (02) edge [bend left=40] node {\textbf{2}} (21);
	\path[-latex,line width=0.4mm,anchor=north west] (21) edge [bend left=40] node {\textbf{3}} (13);
	\path[-latex] (13) edge [bend left=40] (30);
	\path[-latex] (30) edge [bend left=40] (02);
	
	\path[-latex,line width=0.4mm,anchor=south east] (20) edge [bend left=40] node {\textbf{1}} (03);
	\path[-latex,line width=0.4mm,anchor=north east] (03) edge [bend left=40] node {\textbf{2}} (31);
	\path[-latex,line width=0.4mm,anchor=north west] (31) edge [bend left=40] node {\textbf{3}} (12);
	\path[-latex] (12) edge [bend left=40] (20);
\end{tikzpicture}
}
            \caption{Broadcast from $ca$}
            \label{fig:linegraphschedule}
        \end{subfigure}
        \caption{The complete bipartite topology $K_{2,2}$ with its line graph $L(K_{2,2})$. \normalfont Figure (a) shows the base topology and broadcast paths from $a$ to $c,b,d$ in $A_{K_{2,2}}$ (see figure \ref{fig:notationexample}). The number next to edge shows the comm step using the edge. Figure (b) shows the expanded topology. Observe that every edge in $K_{2,2}$ becomes a vertex in $L(K_{2,2})$, and two vertices are connected if the corresponding edges in $K_{2,2}$ have one's head node being the other's tail node. Figure (c) shows the broadcast paths of node $ca$, transformed from the broadcast paths of $a$ in figure (a). At the 1st comm step, by step 1 of def \ref{def:linegraph}, $ca$ broadcasts its shard to all its neighbors: $((ca,S),(ca,ac),1)$, $((ca,S),(ca,ad),1)$. The rest of the broadcast paths are transformed from $A_{K_{2,2}}$ by step 2 of def \ref{def:linegraph}, e.g. $((a,C_1),(c,b),2)\mapsto\{((ca,C_1),(cb,bc),3),((ca,C_1),(cb,bd),3)\}$. Each of the nodes $bc$ and $bd$ receives $C_1,C_2$ from its two in-neighbors, just like $b$ does in $A_{K_{2,2}}$.}
        \label{fig:linegraph}
    \end{minipage}
    \hfill
    \begin{minipage}{0.33\textwidth}
        \centering
        \includegraphics[width=\textwidth]{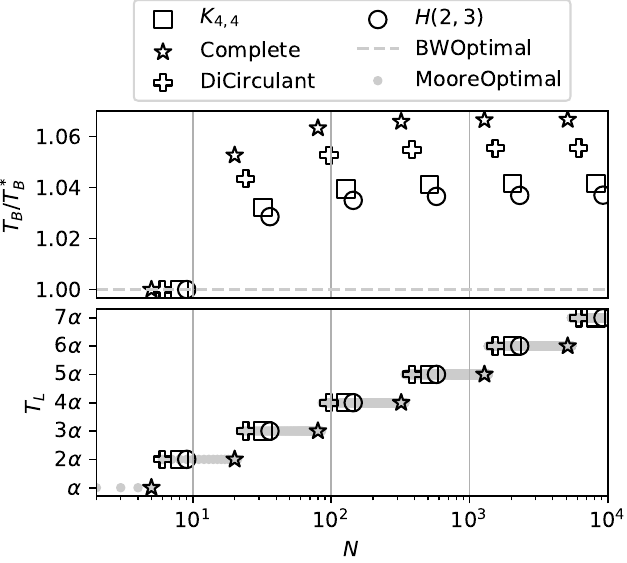}
        \caption{Line graph expansion on Moore and BW optimal degree-4 base graphs: \normalfont complete bipartite graph $K_{4,4}$, complete graph, directed circulant graph, and Hamming graph $H(2,3)$. $T_B^*\!=\!\frac{M}{B}\!\cdot\!\frac{N-1}{N}$ is the optimal BW runtime.}
        \label{fig:asymptotictrend}
    \end{minipage}
\end{figure*}

\section{Expansion Techniques}\label{sec:expansion}

We present three techniques that can be applied to construct near-optimal large-scale \emph{synthesized topologies} and schedules by expanding small-scale topologies and associated schedules.
The three techniques provide different options for increasing the size of the topology and the per-node degree, while preserving either total-hop latency or BW optimality of the base graph and schedule (Table \ref{tab:expansiontechniques}). While we describe the techniques in the context of allgather, corollary \ref{thm:scheduleswitchmap} in \S\ref{sec:reducescatterallgather} implies equivalent constructions for \textbf{reduce-scatter} and \textbf{allreduce}.

\subsection{Line Graph Expansion}\label{sec:linegraph}

We borrow the line graph transformation from graph theory~\cite{linegraph}, which transforms an input graph $G$ into a larger graph $L(G)$ as follows: every edge in $G$ becomes a node in $L(G)$, and two nodes in $L(G)$ are adjacent if the corresponding edges are adjacent in $G$ (Definition~\ref{linegraphdef}).

\noindent \textbf{Intuition:} Line graph expands an $N$-node degree-$d$ topology into a $dN$-node topology. The degree $d$ remains the same, which is crucial since the degree is often limited by hardware constraints like the number of available ports. While the number of nodes grows by $d$-fold, the diameter of the topology only increases by one, which is also optimal for total-hop latency and all-to-all performance. In addition, the paths in the base topology are mapped into the expanded topology, allowing the comm schedule for the base to be expanded as well. Line graph expansion can be applied multiple times to scale the topology and schedule to arbitrarily large sizes.

Figure~\ref{fig:linegraphtopology} gives an example of the line graph of the complete bipartite graph $K_{2,2}$. 
Any (shortest) path $w_0\!\shortarrow w_1\!\shortarrow\dots\!\shortarrow w_n$ in $K_{2,2}$ also becomes a (shortest) path $w_{-\!1}w_0\shortarrow w_0w_1\shortarrow\dots\shortarrow w_{n-1}w_n\shortarrow w_nw_{n+1}$ in $L(K_{2,2})$ from $w_{-\!1}w_0$ to $w_nw_{n+1}$, for any $w_{-\!1},w_{n+1}$ provided that $w_{-\!1}w_0\!\neq\! w_nw_{n+1}$.

Given an allgather schedule $A_G$ for $G$, we construct schedule $A_{L(G)}$ for $L(G)$. Pick any node $v'\!v$ in $L(G)$. It needs to broadcast its shard to every other node in $L(G)$. Pick any other node, say, $uu'$. For each element $x$ of $v'\!v$'s shard, we want to send $x$ to $uu'$. Since $v$ broadcasts $x$ to every other node in $A_G$, there is a path $v\shortarrow w_1\shortarrow\dots\shortarrow w_{n-1}\shortarrow u$ in $G$ along which $x$ is sent to $u$ in $A_G$. Thus, the path $v'\!v\shortarrow vw_1\shortarrow w_1w_2\shortarrow\dots\shortarrow w_{n-1} u\shortarrow uu'$ can be utilized to send $x$ from $v'\!v$ to $uu'$ in $L(G)$. 
\begin{restatable}[Schedule of Line Graph]{definition}{linescheduledef}\label{def:linegraph}
Given an allgather schedule $A_G$ for topology $G$, let $A_{L(G)}$ be the schedule for line graph $L(G)$ containing:
\begin{enumerate}[label=\arabic*.]
    \item $((v'\!v,S),(v'\!v,vu),1)$ for each edge $(v'\!v,vu)\in E_{L(G)}$ with $v'\!v\neq vu$. {\normalfont \bf [Insert the 1st comm step in $A_{L(G)}$.]}
    \item $((v'\!v,C),(uw,ww'),t+1)$ for each $((v,C),(u,w),t)\!\in\! A_G$ and $v'\!v \neq ww'$. {\normalfont \bf [Adapt $A_G$ to form $A_{L(G)}$.]}
\end{enumerate}
\end{restatable}
At the 1st comm step, $x$ is broadcasted by $v'\!v$ to every neighbor (e.g., $vw_1$). Then, for every $((v,C),(w_i,w_{i+1}),t)$ in $A_G$ with $x\in C$, there is $((v'\!v,C),(w_iw_{i+1},w_{i+1}w_{i+2}),t+1)$ in $A_{L(G)}$ that takes $x$ from $w_iw_{i+1}$ to $w_{i+1}w_{i+2}$ and, eventually, to $uu'$ ($v\!=\!w_0,u\!=\!w_n$). Since $x$ and $uu'$ are picked arbitrarily, $v'\!v$ broadcasts every element of its shard to all nodes in $L(G)$. Figure~\ref{fig:linegraphschedule} shows an example of schedule construction.

As for the performance of $A_{L(G)}$, we leave the mathematical details in Appendix~\ref{app-subsec:linegraph}. In practice, one can apply line graph expansion repeatedly to scale the topology and schedule indefinitely. The more optimal the base topology and schedule are, the more optimal the expanded topology and schedule will be. Figure~\ref{fig:asymptotictrend} shows how the performance evolves as we continuously apply line graph expansion to several Moore and BW optimal base graphs. The total-hop latency always remains Moore optimal. The BW performance deviates from optimality $T_B^*$ but remains a constant factor away asymptotically. A key observation in Figure~\ref{fig:asymptotictrend} is that the larger the size of the base graph is, the closer the expanded schedule is to BW optimality. 
Line graph expansion is notable for its ability to construct indefinitely large-scale topologies without increasing degree $d$. The expansion is also efficient in terms of diameter, making it ideal for all-to-all as well.

\subsection{Degree Expansion}

\textbf{Intuition:} While line graph expansion expands the number of nodes, degree expansion additionally expands the topology degree. Taking a base topology $G$, we make $n$ copies of it and connect two nodes in different copies if they are adjacent in $G$. This process forms an expanded topology $G*n$, which multiplies both the number of nodes and the degree of $G$ by $n$. Because the connections in $G*n$ are derived from $G$, similar to line graph expansion, we can map paths from $G$ to $G*n$ to expand the communication schedule of $G$ as well.

Figure~\ref{fig:degreeexampletopology} gives an example of expanding a 4-node unidirectional ring into an 8-node degree-2 topology (see formal definition of degree expanded topology in Definition~\ref{def:degreeexp}). 
Based on the input schedule $A_G$ for $G$, we construct a schedule $A_{G*n}$ for $G*n$. For any data traveling along $v\shortarrow w^{(1)}\shortarrow\dots\shortarrow w^{(m)}\shortarrow u$ in $A_G$, $A_{G*n}$ has the data travel along $v_i\shortarrow w^{(1)}_i\shortarrow\dots\shortarrow w^{(m)}_i\shortarrow u_j$ for all $i,j$. That is, data is transmitted within the $i$-th copy of $G$, except at the last step. With this construction, any node $u_i$ has broadcasted the data to all other nodes except its own copies $u_j$s. We add an additional comm step for $u_j$ to collect the data from its in-neighbors (see Figure~\ref{fig:degreeexampleschedule}).

\begin{restatable}[Degree Expanded Schedule]{definition}{degexpscheduledef}\label{def:degexpschedule}
Given an allgather schedule $A_G$ for $G$, construct $A_{G*n}$ for $G*n$:
\begin{enumerate}[label=\arabic*.]
    \item For all $i,\!j$ including $i\!=\!j$ and for each $(\!(v,\!C),(u,\!w),t\!)\!\in\! A_G$, add $((v_j,C),(u_j,w_i),t)$ to $A_{G*n}$;
    \item\label{degexp1ststep} Divide shard $S$ into equal-sized chunks $C_1,\dots,C_{nd}$. Given $u_i,u_j\in V_{G*n}$ with $i\neq j$, add $((u_i,C_\alpha),(v_\alpha,u_j),t_{\max}+1)$ to $A_{G*n}$ for each $(v_1,u_j),\dots,(v_{nd},u_j)\in E_{G*n}$, where $t_{\max}$ is the max comm step in $A_G$.
\end{enumerate}
\end{restatable}

Unlike line graph expansion, degree expansion preserves BW optimality. This is because the expanded broadcast paths from copies of an original node are totally disjoint from each other (Figure~\ref{fig:degreeexampleschedule}). However, degree expansion does not preserve Moore optimality. While line graph expansion does not change degree, degree expansion increases it, reducing the number of comm steps required for Moore optimality (\S\ref{app-subsec:degree}).

\begin{figure}[tb]
    \begin{subfigure}{.24\columnwidth}
        \captionsetup{skip=-11pt}
        \centering
        \scalebox{0.7}{
\begin{tikzpicture}[roundnode/.style={circle,draw=black,minimum size=1mm,fill=white}]
	\node[roundnode]	(0)	at (-0.9, 0.9) {$a$};
	\node[roundnode]	(1)	at (0.9, 0.9)	{$b$};
	\node[roundnode]	(2) at (0.9, -0.9)	{$c$};
        \node[roundnode]	(3) at (-0.9, -0.9)	{$d$};
	
	\path[-latex,line width=0.5mm] (0) edge [bend left=30] (1);
	\path[-latex,line width=0.5mm] (1) edge [bend left=30] (2);
        \path[-latex,line width=0.5mm] (2) edge [bend left=30] (3);
	\path[-latex] (3) edge [bend left=30] (0);

        \node[roundnode,opacity=0] (21) at (0, -1.5) {$c_1$};
\end{tikzpicture}
}
        \caption{$G(N\!=\!4,d\!=\!1\!)$}
    \end{subfigure}
    \hfill
    \begin{subfigure}{.37\columnwidth}
        \captionsetup{skip=-11pt}
        \centering
        \scalebox{0.68}{
\begin{tikzpicture}[roundnode/.style={circle,draw=black,minimum size=1mm,fill=white}]
        \node[roundnode] (01) at (-1.5, 1.5) {$a_1$};
	\node[roundnode] (11) at (1.5, 1.5) {$b_1$};
	\node[roundnode] (21) at (1.5, -1.5) {$c_1$};
        \node[roundnode] (31) at (-1.5, -1.5) {$d_1$};
	\node[roundnode]	(02) at (-0.65, 0.65) {$a_2$};
	\node[roundnode] (12) at (0.65, 0.65)	{$b_2$};
	\node[roundnode] (22) at (0.65, -0.65) {$c_2$};
        \node[roundnode] (32) at (-0.65, -0.65) {$d_2$};
	
	\path[-latex] (01) edge [bend left=35] (11);
	\path[-latex] (11) edge [bend left=35] (21);
        \path[-latex] (21) edge [bend left=35] (31);
	\path[-latex] (31) edge [bend left=35] (01);
 
	\path[-latex] (02) edge [bend left=25] (12);
	\path[-latex] (12) edge [bend left=25] (22);
	\path[-latex] (22) edge [bend left=25] (32);
	\path[-latex] (32) edge [bend left=25] (02);
 
	\path[-latex] (01) edge [bend left=40] (12);
	\path[-latex] (11) edge [bend left=40] (22);
        \path[-latex] (21) edge [bend left=40] (32);
	\path[-latex] (31) edge [bend left=40] (02);
 
	\path[-latex] (02) edge [bend left=40] (11);
	\path[-latex] (12) edge [bend left=40] (21);
	\path[-latex] (22) edge [bend left=40] (31);
	\path[-latex] (32) edge [bend left=40] (01);
\end{tikzpicture}
}
        \caption{$G*2$ $(N\!=\!8,d\!=\!2)$}
        \label{fig:degreeexampletopology}
    \end{subfigure}
    \begin{subfigure}{.37\columnwidth}
        \captionsetup{skip=-11pt}
        \centering
        \scalebox{0.68}{
\begin{tikzpicture}[roundnode/.style={circle,draw=black,minimum size=1mm,fill=white}]
        \node[roundnode,draw=red,line width=0.4mm] (01) at (-1.5, 1.5) {$a_1$};
	\node[roundnode] (11) at (1.5, 1.5) {$b_1$};
	\node[roundnode] (21) at (1.5, -1.5) {$c_1$};
        \node[roundnode] (31) at (-1.5, -1.5) {$d_1$};
	\node[roundnode,draw=blue,line width=0.4mm]	(02) at (-0.65, 0.65) {$a_2$};
	\node[roundnode] (12) at (0.65, 0.65)	{$b_2$};
	\node[roundnode] (22) at (0.65, -0.65) {$c_2$};
        \node[roundnode] (32) at (-0.65, -0.65) {$d_2$};
	
	\path[-latex,line width=0.5mm,red] (01) edge [bend left=35] (11);
	\path[-latex,line width=0.5mm,red] (11) edge [bend left=35] (21);
        \path[-latex,line width=0.5mm,red] (21) edge [bend left=35] (31);
	\path[-latex,line width=0.5mm,blue] (31) edge [bend left=35] (01);
 
	\path[-latex,line width=0.5mm,blue] (02) edge [bend left=25] (12);
	\path[-latex,line width=0.5mm,blue] (12) edge [bend left=25] (22);
	\path[-latex,line width=0.5mm,blue] (22) edge [bend left=25] (32);
	\path[-latex,line width=0.5mm,red] (32) edge [bend left=25] (02);
 
	\path[-latex,line width=0.5mm,red] (01) edge [bend left=40] (12);
	\path[-latex,line width=0.5mm,red] (11) edge [bend left=40] (22);
        \path[-latex,line width=0.5mm,red] (21) edge [bend left=40] (32);
	\path[-latex,line width=0.5mm,red] (31) edge [bend left=40] (02);
 
	\path[-latex,line width=0.5mm,blue] (02) edge [bend left=40] (11);
	\path[-latex,line width=0.5mm,blue] (12) edge [bend left=40] (21);
	\path[-latex,line width=0.5mm,blue] (22) edge [bend left=40] (31);
	\path[-latex,line width=0.5mm,blue] (32) edge [bend left=40] (01);
\end{tikzpicture}
}
        \caption{Broadcasts from $a_1,a_2$}
        \label{fig:degreeexampleschedule}
    \end{subfigure}
    \caption{4-node unidirectional ring and its degree expansion to $d\!=\!2$. \normalfont Figure (a) shows the base topology and broadcast path from $a$ to $b,c,d$. Figure (b) shows the expanded topology. Figure (c) shows the broadcast paths from $a_1$ and $a_2$ to other nodes marked in red and blue respectively. For any $u\neq a$, the path from $a_i$ to $u_j$ stays in $i$ until the very last step when it jumps to $u_j$, e.g., $a_1\!\shortarrow\! b_1\!\shortarrow\! c_2$ and $a_2\!\shortarrow\! b_2\!\shortarrow\! c_2\!\shortarrow\! d_1$. For $a_i$ to $a_j$, each in-neighbor of $a_j$ is tasked with sending an equal portion of $a_i$'s shard to $a_j$ in the end; for example, $d_1$ and $d_2$ each send half of $a_1$'s shard to $a_2$ and also each send half of $a_2$'s shard to $a_1$. The broadcast paths are disjoint from each other, resulting in BW optimality.}
    \label{fig:degreeexample}
\end{figure}

\subsection{Cartesian Product Expansion}\label{sec:cartesianpower}

The Cartesian product of two graphs $G_1,G_2$ is an expanded graph $G_1\square G_2$ with size and degree equal to the product of $G_1,G_2$'s sizes and the sum of their degrees, respectively.
\begin{restatable}[Cartesian Product]{definition}{cartesianprodtopodef}\label{def:cartesianprodtopodef}
The Cartesian product digraph $G_1\square G_2$ of digraphs $G_1$ and $G_2$ has vertex set $V_{G_1}\times V_{G_2}$ with vertex $\mathbf u=(u_1,u_2)$ connected to $\mathbf v=(v_1,v_2)$ iff either $(u_1,v_1)\in E_{G_1}$ and $u_2=v_2$; or $u_1=v_1$ and $(u_2,v_2)\in E_{G_2}$.
\end{restatable}
This definition generalizes to the Cartesian product of $n$ digraphs: $G_1\square G_2\square\dots\square G_n$. When $G_1\!=\!\dots\!=\!G_n\!=\!G$, the product is denoted as Cartesian power $G^{\square n}$.
We use Cartesian power and product in our topology and schedule expansion.

\noindent \textbf{Intuition:} The Cartesian product $G_1\square G_2\square\dots\square G_n$ consists of $n$ dimensions, with connections in dimension $i$ identical to $G_i$. Taking the schedules of $G_1,G_2,\dots,G_n$, we can balance the amount of traffic going through each dimension to achieve high BW performance. Cartesian product expansion greatly expands the set of topologies we construct, because we can take several of our base or constructed topologies to create a new product topology and its efficient schedule.

\noindent \textbf{Cartesian Power Expansion:} This is a special case of performing Cartesian product on identical graphs. Given a $d$-regular $G$ and schedule $A_G$, we can construct a schedule $A_{G^{\square n}}$ for $G^{\square n}$, which is $nd$-regular and has $|V_G|^n$ nodes. This technique helps generate efficient topologies, including some well-known ones like hypercube and Hamming graph.
We describe how to construct allgather schedules for Cartesian power graphs by using $\ell\times\ell$ torus ($\text{$\ell$-ring}^{\square 2}$) as an example. Taking \text{$\ell$-ring} allgather schedule $A$, a typical allgather schedule on $\ell\times\ell$ torus, as in hierarchical ring allreduce~\cite{hierarring}, is to perform schedule $A$ along rings on one dimension first and then the other dimension. Consider two schedules: $A^{(1)}$ performs allgather on vertical rings first and then horizontal ones; $A^{(2)}$ performs allgather in the opposite order. $A^{(1)},A^{(2)}$ use disjoint set of links at any comm step. Thus, we divide each shard in $\ell\times\ell$ torus into two halves and let them be allgathered by $A^{(1)},A^{(2)}$ separately. The combined schedule, where $A^{(1)}$ and $A^{(2)}$ are performed \emph{simultaneously}, is BW-optimal. The total-hop latency would be $2T_L(A)$.

The above torus schedule has appeared in previous literature~\cite{multiporttorus}. It can be generalized to generate schedules for Cartesian power of arbitrary topologies (see Appendix~\ref{app-subsec:cartesian}).

\noindent \textbf{Cartesian Product Expansion}\label{sec:spathcartesianexp} One can also construct a schedule for the Cartesian product of distinct topologies. For example, an $a\times b\times c$ 3D torus is the Cartesian product of three rings with lengths $a,b,c$. This schedule cannot be directly generated by construction but requires an additional BFB schedule generation technique that we introduce in \S\ref{sec:shortestpathschedule}. If individual topologies have BW-optimal BFB schedules, as in the case of any torus, then the schedule generated for the Cartesian product is BW-optimal (Theorem~\ref{thm:spathcartesianexpopt}).

\begin{table}[tb]
    \centering
    \resizebox{\columnwidth}{!}{
    \begin{tabular}{|l|c|c|c|c|c|}
        \hline
        Expansion Techniques & \# of Nodes & Deg & Moore & BW & Perf \\
        \hline
        Line Graph Exp $L^n(G)$ & $d^nN$ & $d$ & \checkmark & $\times$ & Thm \ref{thm:linegraphconclusion} \\
        \hline
        Degree Exp $G*n$ & $nN$ & $nd$ & $\times$ & \checkmark & Thm \ref{thm:degexp} \\
        \hline
        Cartesian Power $G^{\square n}$ & $N^n$ & $nd$ & $\times$ & \checkmark & Thm \ref{thm:cartesianpowperformance} \\
        \hline
        Cartesian Prod $G_1\square\!\dots\!\square G_n$ & $\prod_i N_i$ & $\sum_i d_i$ & $\times$ & \checkmark & Thm \ref{thm:spathcartesianexpopt} \\
        \hline
    \end{tabular}
    }
    \caption{Summary of expansion techniques. \normalfont The table shows the characteristics of the resulting topology and schedule after applying expansion techniques to an $N$-node degree-$d$ base topology. ``$\checkmark,\times$'' show whether the expansion preserves Moore/BW optimality. The last column refers to the theorems that give the exact performance of expanded schedules.}
    \label{tab:expansiontechniques}
    \resizebox{\columnwidth}{!}{
    \begin{tabular}{|l||c|c|c||c|c||}
	\hline
	Topology & $T_L$ & $T_B$ & \scalebox{0.9}[1]{$2(T_L\!+\!T_B)$} & \scalebox{0.9}[1]{$D(G)$} & \scalebox{0.9}[1]{All-to-All} \\
	\hline
	$\Pi_{4,1024}$ & $5\alpha$ & $1.332\sfrac{M}{B}$ & 323.5us & $5$ & 409.1us \\
	$L^3(C(16,\{3,4\}))$ & $6\alpha$ & $1.020\sfrac{M}{B}$ & 291.0us & $6$ & 403.5us \\
	$L^2(\text{Diamond}^{\square 2})$ & $8\alpha$ & $1.004\sfrac{M}{B}$ & 328.4us & $8$ & 446.6us \\
	$L(\text{DBJMod}(2,4)^{\square 2})$ & $11\alpha$ & $1.000\sfrac{M}{B}$ & 387.8us & $9$ & 529.9us \\
	\scalebox{0.9}[1]{$(\text{UniRing}(1,4)\square \text{UniRing}(1,8))^{\square 2}$} & $20\alpha$ & $0.999\sfrac{M}{B}$ & 567.6us & $20$ & 1174.4us \\
	\hline
	\textbf{Theoretical Bound} & $\mathbf{5}\alpha$ & $\mathbf{0.999}\sfrac{M}{B}$ & \textbf{267.6us} & $\mathbf{5}$ & \textbf{382.3us} \\
	\hline
    \end{tabular}
    }
    \caption{Pareto-efficient topologies at $N\!=\!1024$, $d\!=\!4$. \normalfont The $2(T_L\!+\!T_B)$ column shows the allreduce runtimes for $\alpha\!=\!10\mu s$ and $M/B\!=\!1{\normalfont MB}/100{\normalfont Gbps}$. We multiply $T_L\!+\!T_B$ by 2 because allreduce is performed by combining reduce-scatter and allgather. The all-to-all time is computed via multi-commodity flow (Appendix~\ref{app-sec:all-to-all}) with each node having 1MB of data to send (i.e., sending $1/N$ MB to each node). For comparison, the baselines Shifted Ring and Double Binary Tree (\S\ref{sec:exptsetup}) have allreduce times of $20640$us and $1434$us, and all-to-all times of $10738$us and $21475$us, respectively. Table \ref{table:toposummary} shows the details of the base topologies.}
    \label{table:paretooptimal}
\end{table}

\subsection{Topology Finder}\label{sec:topofinder}

The goal of Topology Finder is to produce the best topologies and schedules for a target $N$ and $d$. If we aim for asymptotic performance (fixed $d$, $N\!\to\!\infty$), as in Theorem~\ref{thm:linebwopt}, we want \textbf{the base topology to be as large as possible and the base schedule to be as optimal as possible.} However, for a target $N$ and $d$, only base topologies with certain sizes (e.g., divisors of $N$) and degrees can be expanded to the target. Thus, we keep a database of known base topologies and their schedules (Table~\ref{table:toposummary}). These topologies and schedules are highly optimized and cover a wide range of $N$ and $d$.

Given base topologies, we perform a bottom-up search for the combinations of expansion techniques to reach the target $N$ and $d$. We iteratively apply expansions to candidates.
At intermediate sizes, we prune candidates with inferior performance and keep the best ones for further expansion. 
Because each expansion multiplies the topology size (Table~\ref{tab:expansiontechniques}), the number of expansions that can be applied before the size gets too large---and hence the number of possible combinations---is limited, making the search feasible.

While we expand the topologies, proved theorems (Table~\ref{tab:expansiontechniques}) allow us to predict the performance of expanded topologies. This is vital for the search because it is intractable to construct schedules for every possible topology and compare their performance. Being able to predict with an analytical formula allows us to compare different topologies and prune inferior ones. We keep a Pareto frontier of topologies for a given $N$ and $d$. A topology is inferior to another only if it is worse in both total-hop latency and BW runtime. Ultimately, the search finds all Pareto-efficient topologies for the target $N$ and $d$.
Depending on the testbed, we convert unidirectional topologies to bidirectional ones (Appendix~\ref{sec:unitobidirectional}).
We then determine the best-performing topology for a specific workload. 

Table~\ref{table:paretooptimal} shows the result for $N\!=\!1024$ and $d\!=\!4$. From top to bottom, the Pareto frontier exhibits an increasing $T_L$ and a decreasing $T_B$, with the top and bottom being Moore and BW optimal, respectively. On the all-to-all side, the diameters of the topologies also follow the same trend as $T_L$ because of $T_L\!\geq\!\alpha\!\cdot\! D(G)$ (Theorem~\ref{thm:diameter}). Table~\ref{table:paretooptimal} also shows the allreduce and all-to-all times calculated based on specific $\alpha,M,B$. Notably, the line graph of circulant graph $L^3(C(16,\{3,4\}))$ has both the lowest allreduce and all-to-all times, within 9\% and 6\% of the theoretical bounds. Table~\ref{tab:allourtopos} in appendix contains more results for $N\!=\!32,64,\dots,1024$. 

While low-hop/diameter indicates high all-to-all throughput, other metrics like average distance between nodes~\cite{diameter1, diameter2} also play a role. Thus, despite having a lower diameter, $\Pi_{4,1024}$ underperforms $L^3(C(16,\{3,4\}))$. Including other metrics makes the search process more complex and computationally expensive. In practice, $T_L$ and $D(G)$ are feasible and accurate enough for predicting all-to-all throughput.

In DNN training experiments, we use one topology for the entire training due to the high reconfiguration latency of our target patch panel platform. In such a case, we select the best option based on the distribution of collective sizes $M$s, which could be the layer sizes or a fixed value like the bucket size in PyTorch DDP~\cite{pytorch-dist}, depending on the training framework. With faster reconfiguration, one could change topology to optimize for different collective runs during training.

Our implementation runs under a minute for all $d\!=\!2,4,8,16$ and $N$ up to 2000. While this can be sped up, we find it acceptable for now given that the search is performed once for all $N$s and $d$s, and results can be saved for future use.

\section{Breadth-First-Broadcast (BFB) Schedule}\label{sec:shortestpathschedule}

We now present a scalable algorithm for finding schedules for \emph{generative topologies} that are directly instantiated based on graph theory as well as topologies obtained from Cartesian Product expansion, which is the only expansion technique that doesn't yield a schedule.
State-of-the-art schedule generations (e.g., Blink~\cite{wang2020blink}, SCCL~\cite{SCCL}, and TACCL~\cite{shah2021synthesizing}) can scale only to a modest number of nodes because they involve NP-hard optimization.
To ensure polynomial-time generation, we impose a \textit{breadth-first} broadcast order from each node such that (1) all communications between a pair of nodes rely only on the shortest paths between them; (2) the schedule is structured as a series of comm steps, where each comm step is responsible for eagerly transmitting data to a set of nodes that is one additional hop away.
Our BFB schedule generation technique does not guarantee optimality in an arbitrary topology, given these constraints prohibit the use of longer paths or delayed (non-eager) transmissions along paths, but these constraints enable polynomial-time generation.

Despite these constraints, BFB schedules guarantee the following: (1) The schedules have the lowest total-hop latency as all data is eagerly communicated over shortest paths. (2) For Cartesian product topologies, BFB schedule generation yields a BW-optimal solution if the underlying product components admit BW-optimal BFB schedules (thus yielding optimal schedules for networks such as torus with arbitrary dimensions). (3) BFB schedules are also provably BW-optimal for many generative topologies with inherent symmetry.

\subsection{BFB Schedule Generation Linear Program}\label{sec:bfblp}

\textbf{Intuition:} Allgather is a simultaneous broadcast from every node in the topology. A BFB allgather schedule, as the name suggests, enforces a \textit{breadth-first} broadcast from every node. At each comm step $t$, every node $u$ needs to receive the data from all nodes $v$ for which the distance from $v$ to $u$ is $t$, through the ingress links. BFB uses a linear program to balance the traffic across the ingress links to optimize BW performance.

At each comm step $t$, the BFB schedule requires that for every node $v$, all nodes at a distance $t$ from $v$, i.e., $N^+_t(v)$, receive $v$'s data shard within the comm step. To achieve this, all nodes at distance $t\!-\!1$, i.e. $N^+_{t-1}(v)$, need to collectively multicast the data shard to nodes $N^+_t(v)$ in comm step $t$.

Given any $v$ and $u\!\in\! N^+_t(v)$, $u$ may have multiple in-neighbor $w$s in $N^+_{t-1}(v)$. All of them can provide $v$'s data shard because they have received it in comm step $t-1$. Since the BW runtime of a comm step equals the transmission time of the most congested link, a question is \textbf{how to allocate the amount of data $u$ receives from each $w$ to balance the workload on links?} Figure~\ref{fig:shortestpath-illustration} shows an example. Here, $u_1$ needs to get $v_1$'s shard from $w_1,w_2$ and $v_2$'s shard from $w_2$. The solution is simple: since $u_1$ can only get $v_2$'s shard from $w_2$, we let $w_1$ send $v_1$'s shard and $w_2$ send $v_2$'s shard, achieving a perfectly balanced workload. For $u_2$, it is more complicated. We formulate such a problem as a linear program:
\begin{equation}\label{lpmodel}
    \begin{array}{lr@{}ll}
    \text{minimize}  & & U_{u,t} &\\
    \text{subject to}& \displaystyle\sum_{v} & x_{v,(w,u),t}\leq U_{u,t}, & \makebox[2.6cm][l]{$\forall w\!\in\!N^-\!(u)\!=\!N^-_1\!(u)$}\\
                     & \displaystyle\sum_{w} & x_{v,(w,u),t}=1, & \forall v\!\in\! N^-_t\!(u)\\
                     & 0\leq &x_{v,(w,u),t}\leq 1. & \forall w,v
    \end{array}
\end{equation}
$x_{v,(w,u),t}$ is the proportion of $v$'s shard that is sent from $w$ to $u$ and is defined for every $v,w$ such that $w\!\in\! N^-(u)$ and $d(v,u)\!=\!d(v,w)\!+\!1\!=\!t$. $U_{u,t}$ is the max workload among links to $u$, i.e., $(\!w_1\!,\!u_2\!)\!,\!(\!w_2\!,\!u_2\!)\!,\!(\!w_3\!,\!u_2\!)$ in the case of $u_2$. \textbf{Minimizing $U_{u,t}$ is equivalent to minimizing \mbox{$\frac{M/N}{B/d}\!\cdot\! U_{u,t}$}, the max transmission time among links to $u$ at comm step $t$.} The 1st and 2nd constraints ensure correct max workload and $u$ receiving all data shards, respectively. Appendix~\ref{app:shortestpath} gives the specific LP for $u_2$, and the solution is shown in blue in Figure~\ref{fig:shortestpath-illustration}. The workload is also balanced with each link sending $2/3$ shard and hence BW runtime $\frac{M/N}{B/d}\!\cdot\!\frac{2}{3}$.

\begin{figure}[tb]
    \centering
    \scalebox{0.8}{
\begin{tikzpicture}[roundnode/.style={circle,draw=black,minimum size=7mm}]
	\node[roundnode] (0) at (-3.5, 0.75) {$v_1$};
	\node[roundnode] (1) at (-3.5, -0.75) {$v_2$};
	\node[roundnode] (2) at (0, 1.25) {$w_1$};
	\node[roundnode] (3) at (0, 0) {$w_2$};
	\node[roundnode] (4) at (0, -1.25) {$w_3$};
	\node[roundnode,color=red] (5) at (3.5, 1.25) {$u_1$};
        \node[anchor=west,color=red] at (5.north east) {$\substack{U_{u_1,t}=1}$};
	\node[roundnode,color=blue] (6) at (3.5, -1.25) {$u_2$};
        \node[anchor=west,color=blue] at (6.south east) {$\substack{U_{u_2,t}=2/3}$};
	
	\path[-latex,dashed,anchor=south] (0) edge [bend left=10] node {$t-1$} (2);
	\path[-latex,dashed,anchor=south] (0) edge [bend right=10] node {$t-1$} (3);
	\path[-latex,dashed,anchor=north] (1) edge [bend left=10] node {$t-1$} (3);
	\path[-latex,dashed,anchor=north] (1) edge [bend right=10] node {$t-1$} (4);
	
	\path[-latex,anchor=south,color=red] (2) edge [bend left=10] node {$\substack{x_{v_1,(w_1,u_1),t}=1}$} (5);
	\path[-latex,anchor=180,color=red] (3) edge [bend right=10] node {$\hspace{7mm}\substack{x_{v_1,(w_2,u_1),t}=0\\ x_{v_2,(w_2,u_1),t}=1}$} (5);
	\path[-latex,anchor=163,color=blue] (2) edge [bend left=10] node[pos=0.7] {$\hspace{9mm}\substack{x_{v_1,(w_1,u_2),t}=2/3}$} (6);
	\path[-latex,anchor=15,color=blue] (3) edge [bend left=15] node[pos=0.65] {$\substack{x_{v_1,(w_2,u_2),t}=1/3\\ x_{v_2,(w_2,u_2),t}=1/3}$} (6);
	\path[-latex,anchor=north,color=blue] (4) edge [bend right=10] node {$\substack{x_{v_2,(w_3,u_2),t}=2/3}$} (6);
\end{tikzpicture}
}
    \caption{Example of BFB allgather schedule at comm step $t$. \normalfont Here, $w_1,w_2\in N^+_{t-1}(v_1)$ and $w_2,w_3\in N^+_{t-1}(v_2)$. $u_1,u_2$ are at distance $t$ from both $v_1,v_2$, so they both need to receive the data shards of $v_1,v_2$ in comm step $t$. Note that $u_1$ cannot get $v_2$'s shard from $w_3$ because $w_3$ is not an in-neighbor of $u_1$. The figure also shows the solutions to LPs (\ref{lpmodel}). The red and blue are independent LPs optimizing $U_{u_1,t}$ and $U_{u_2,t}$ respectively.}
    \label{fig:shortestpath-illustration}
\end{figure}

SCCL~\cite{SCCL} and TACCL~\cite{shah2021synthesizing} use NP-hard optimizations with discrete variables used to ensure each chunk is received before being sent. In contrast, we do not need discrete variables. A key observation from Figure~\ref{fig:shortestpath-illustration} is that because $w_1,w_2,w_3$ all receive the entire shard of $v_1$ at comm step $t\!-\!1$, the $x_{v_1,(w_1,u_2),t}\!=\!2/3$ and $x_{v_1,(w_2,u_2),t}\!=\!1/3$ in the solution can be any portions of the data shard, as long as their union is the entire shard. Assuming $[0,1]$ is the entire shard of $v_1$, no matter the $2/3$ sent by $w_1$ to $u_2$ is $[0,\frac{2}{3}]$ or $[\frac{1}{3},1]$, the $1/3$ sent by $w_2$ can simply be $[\frac{2}{3},1]$ or $[0,\frac{1}{3}]$ accordingly. Thus, \textbf{we only need to decide the amount of data sent on each link, which are continuous variables}, enabling polynomial-time schedule generation. To obtain a complete schedule, one needs to solve an LP~(\ref{lpmodel}) for each $u\!\in\! V_G$ and $t\!\in\!\{1,\dots,D(G)\}$. The BW runtime of the generated schedule is
\begin{equation}\label{eq:lpT_B}
    T_B=\frac{M/N}{B/d}\sum_{t=1}^{D(G)}\max_{u\in V_G}U_{u,t}.
\end{equation}
One could create an LP incorporating all $U_{u,t}$s and minimize $T_B$ (\ref{eq:lpT_B}) \textit{``globally''}. However, the result is equivalent to individually solving small LPs~(\ref{lpmodel}) for each $u$ and $t$. This is because the LPs are independent of each other, e.g., the decisions made to minimize $U_{u_2,t}$ in Figure~\ref{fig:shortestpath-illustration} do not affect $U_{u_1,t}$, and vice versa. The advantage of small LPs is that they can be solved in parallel utilizing multi-core processors.
Due to the breadth-first nature of BFB, data always follows the shortest paths between source and destination.
Thus, the number of comm steps of BFB schedule always equals the graph diameter, i.e., $T_L\!=\!\alpha\!\cdot\! D(G)$, the lowest possible $T_L$ given $G$.

Appendix~\ref{app:shortestpath} analyzes the BFB schedule and includes modifications to generate \textbf{discrete chunked schedules}~(\S\ref{sec:shortestIP}) and schedules for \textbf{heterogeneous link bandwidths}~(\S\ref{sec:heterogeneous}). Corollary \ref{thm:scheduleswitchfunc} shows how to convert to a \textbf{reduce-scatter} schedule.

\subsection{Generative Topologies}\label{sec:generateivetopo}

Generative topologies, unlike synthesized ones, are large graphs directly borrowed from graph theory. They are too large for manual or NP-hard schedule generation. Thus, we use the BFB linear program to generate schedules. Since a BFB schedule always has the lowest $T_L$ for a topology, if it is also BW-optimal, then it is \textbf{the optimal schedule} for that topology. Generative topologies often have symmetries that allow us to prove optimality mathematically. Their low diameters are also ideal for all-to-all throughput.

\textbf{Torus} is a widely used topology in parallel computing systems. Our torus schedule generated by BFB is theoretically optimal and represents a significant improvement over traditional torus schedules~\cite{multiporttorus}. Given a $d_1\!\times\! d_2\!\times\!\dots\!\times\! d_n$ torus, the traditional schedule, which performs parallel ring collectives on dimensions, only works (or is efficient) when dimensions are equal, i.e., $d_1\!=\!d_2\!=\!\dots\!=\!d_n$, and has $T_L\!=\!\sum_i(d_i-1)\alpha$. BFB torus schedule, however, is BW-optimal with any $d_i$s and $T_L\!=\!\sum_i\lfloor d_i/2\rfloor\alpha$. The BW optimality is due to torus being the \textit{Cartesian product} of rings, each of which has a BW-optimal BFB schedule (Theorem \ref{thm:spathcartesianexpopt}). BFB torus opens up many more construction possibilities since $d_i$s can be any combination. In our evaluations (\S\ref{sec:supercomp}), we compare BFB and traditional torus schedules on supercomputing clusters.

\textbf{Generalized Kautz Graph} (\S\ref{sec:genkautz}) and \textbf{Circulant Graph} (\S\ref{sec:circulant}) are a pair of versatile graphs in our toolbox. The former can be constructed for any $N$ and $d$, while the latter can be constructed for any $N$ and even-value $d$. Furthermore, the BFB schedule of the former is at most one $\alpha$ away from Moore optimality, making it the topology with the lowest $T_L$, while the latter always has a BW-optimal BFB schedule. Thus, they can fill gaps in $N$ and $d$ that expansion techniques fail to cover (e.g., prime $N$) or provide good candidates.

Besides the aforementioned ones, the following graphs also have optimal schedules by BFB. \textbf{Distance-Regular Graph} (\S\ref{sec:dist-reg}) is a family of large highly-symmetric graphs that are both BW-optimal and low-hop at the same time. The \textbf{Twisted Torus}~\cite{twistedtorus} used by TPU v4~\cite{tpuv4} is also computationally verified to be BW-optimal for at least $N\!\leq\! 10^4$. A \textbf{BFB Ring Schedule} with half the $T_L$ of traditional one is shown in \S\ref{sec:spathbiring}.

\section{Schedule Compilation}
\label{sec:implementation}
We implemented two compilers for lowering communication schedules to both GPU and CPU clusters, given the significance of collective communication for both ML and HPC workloads. We lowered over 1K schedules for various topologies and configurations. The compilers enable us to evaluate the performance of our topologies and schedules on hardware and to validate our mathematical model.

For GPUs, our compiler lowers a mathematically defined schedule to an XML file that can be executed by the MSCCL runtime~\cite{msccl}. MSCCL is an open-source collective communication library that extends NCCL~\cite{nccl} with an interpreter providing the ability to program custom schedules. Communication schedules are defined in XML as instructions (send/receive/reduce/copy) for each GPU threadblock. Our compiler also performs certain optimizations, such as consolidating non-contiguous sends using a scratch buffer and evenly distributing the computational workload across threadblocks.

For CPU-based supercomputers, we use Intel oneCCL~\cite{oneccl} + libfabric~\cite{libfabric} to execute schedules. We extended oneCCL with an interpreter that executes XMLs. The mathematical schedules are lowered into instructions (send/recv/reduce/copy/sync) for CPUs in an XML file and then executed.

\section{Evaluation}\label{sec:evaluation}

We present performance evaluation results on a 12-node direct-connect optical GPU testbed and a supercomputing torus CPU cluster with up to 54 nodes. We also present analytical and simulation results at larger scales.

\noindent \textbf{Collective Communication:} On 12-node testbed, topologies from our topology finder consistently outperform baselines in allreduce, reduce-scatter, and allgather (\S\ref{sec:arperf}, Fig~\ref{fig:arexptres}, Fig~\ref{fig:rs_ag_results}). Analytical model shows order-of-magnitude improvements in allreduce and all-to-all performance at larger scales (Fig~\ref{fig:comparisonwithringdbt}).

\noindent \textbf{DNN Training:} In data-parallel training experiments, our topologies achieve the best performance for both small models and \mbox{GPT-2}~\cite{gpt2} (\S\ref{sec:dnntrain}, Fig~\ref{fig:testbedtraining}). In simulated large-scale training, our topologies demonstrate order-of-magnitude improvements in all-to-all involved expert-parallel training (Fig~\ref{fig:simmoetraining}).

\noindent \textbf{Schedule Generation:} While generating optimal schedules, BFB is orders of magnitude faster and more scalable than SCCL~\cite{SCCL} and TACCL~\cite{shah2021synthesizing} (\S\ref{sec:scalcomp}, Table~\ref{tab:sccltacclbfs} \& Fig~\ref{fig:sccltacclbfs}). On supercomputing torus clusters, BFB outperforms traditional scheduling~\cite{multiporttorus}, SCCL, and TACCL (\S\ref{sec:supercomp}, Fig~\ref{fig:supercomp}).

Finally, we also conducted experiments on our testbed to compare BFB against widely adopted solutions for switch networks (e.g., NCCL and recursive halving \& doubling) (\S\ref{sec:switchcomp}), and to validate the $\alpha$-$\beta$ cost model (\S\ref{sec:cost-model}).

\begin{figure*}[tb]
    \begin{minipage}{0.36\textwidth}
        \centering
        \vspace{0.5cm}
        \scalebox{0.75}{
        \begin{tabular}{|c|l|c|}
            \hline
            $N$ & \textbf{Topology} & $T_L$ \\
            \hline
            5 & Complete Graph: $K_5$ & $2\alpha$ \\
            \hline
            \multirow{2}{*}{6} & \multirow{2}{*}{\shortstack[l]{Degree Expansion of \\ Complete graph: $K_3*2$}} & \multirow{2}{*}{$4\alpha$} \\
            & & \\
            \hline
            7 & Circulant Graph: $C(7,\{2,3\})$ & $4\alpha$ \\
            \hline
            8 & Complete Bipartite Graph: $K_{4,4}$ & $4\alpha$ \\
            \hline
            9 & Hamming Graph: $H(2,3)$ & $4\alpha$ \\
            \hline
            \multirow{2}{*}{10} & \multirow{2}{*}{\shortstack[l]{Degree Expansion of BFB augmented \\ Bidirectional Ring: $\text{BiRing}(2,5)*2$}} & \multirow{2}{*}{$4\alpha$} \\
            & & \\
            \hline
            11 & Circulant Graph: $C(11,\{2,3\})$ & $4\alpha$ \\
            \hline
            12 & Circulant Graph: $C(12,\{2,3\})$ & $4\alpha$ \\
            \hline
        \end{tabular}
        }
        \captionof{table}{OurBestTopo at $d\!=\!4$ generated by topology finder~(\S\ref{sec:topofinder}). \normalfont All topologies listed above are BW-optimal.}
        \label{tab:ourtopologies}
    \end{minipage}
    \hfill
    \begin{minipage}{0.62\textwidth}
        \centering
        \vspace{0.5cm}
        \includegraphics[scale=0.49]{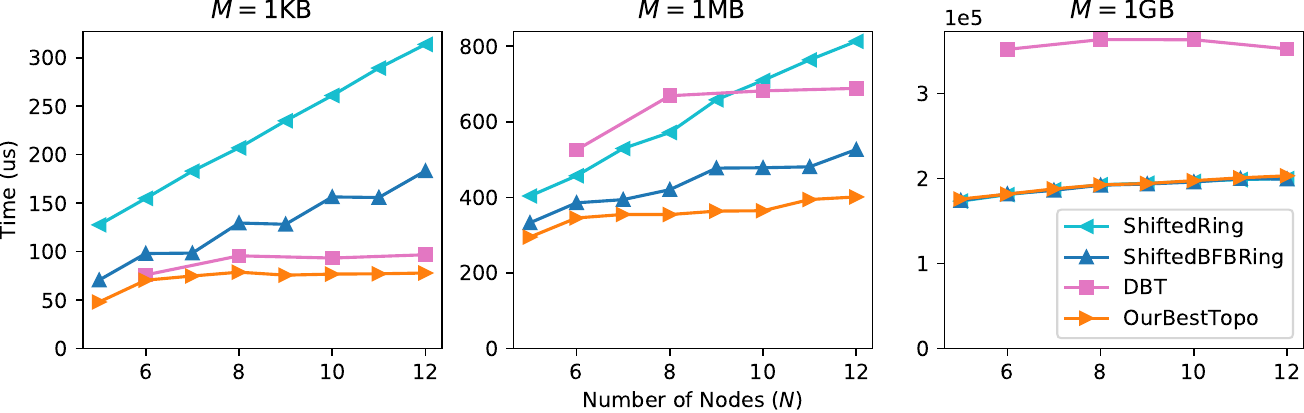}
        \caption{Allreduce experiment results on testbed at $M\!=\!1{\normalfont KB},1{\normalfont MB},1{\normalfont GB}$. \normalfont ``OurBestTopo'' topologies are listed in Table \ref{tab:ourtopologies}. Reduce-scatter and allgather results are in Figure~\ref{fig:rs_ag_results} of Appendix.}
        \label{fig:arexptres}
    \end{minipage}
\end{figure*}

\subsection{Direct-Connect Optical Testbed}
\label{sec:testbed-eval}
Our testbed consists of 12 servers, each with an NVIDIA A100-PCIe-40GB GPU~\cite{gpu} and a 100 Gbps HP NIC~\cite{nic}, configured as 4x25Gbps breakout interfaces~\cite{breakout}.
The NICs are directly connected via a Telescent optical patch panel~\cite{patchpanel}.
Our testbed can realize topologies by reconfiguring the patch panel.
We limit our evaluation to bidirectional topologies due to limitations in our testbed (discussed in Appendix~\ref{sec:unitobidirectional}).

\subsection{Experiment Setup}\label{sec:exptsetup}

\textbf{Baselines:} We evaluate against the two baselines at $d\!=\!4$: (1)~\emph{ShiftedRing}, which improves upon NCCL ring~\cite{nccl}, is used by TopoOpt~\cite{topo-opt} for data-parallel training. The topology is a superposition of two bidirectional rings, each allreducing half of the data. (2)~\emph{Double Binary Tree} (DBT), also implemented in NCCL, uses the topology and schedule from~\cite{dbtpaper,ncclDBT}.

\noindent\textbf{Methodology:} We use the MSCCL runtime~\cite{SCCL,gc3,msccl} to evaluate the topologies and schedules. We sweep through runtime parameters, such as the protocol (Simple or LL), number of channels (1, 2, 4, or 8), degrees of pipelining for the DBT baseline, etc., and choose the best-performing schedule for each data size. For DNN training, we pass our schedules to PyTorch with the MSCCL backend.

\subsection{Collective Communication Evaluation}\label{sec:arperf}

\begin{figure*}[htb]
    \begin{minipage}{0.33\textwidth}
        \centering
        \includegraphics[width=\textwidth]{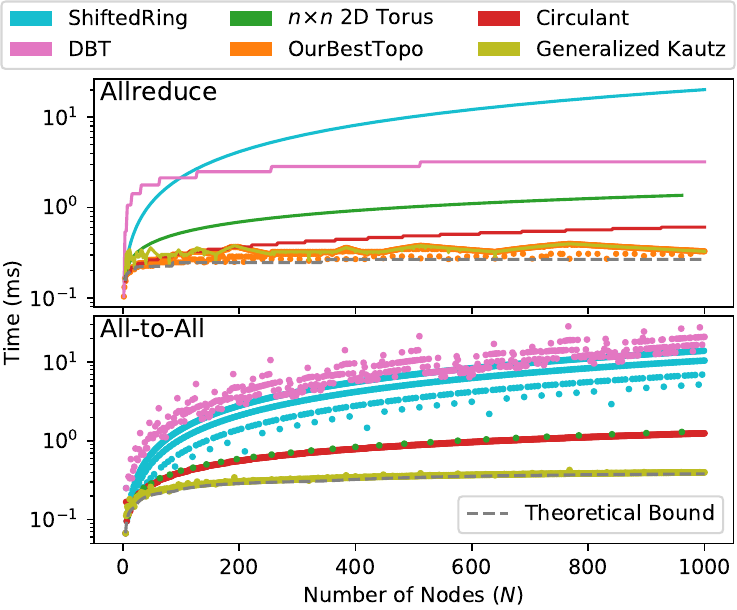}
        \caption{Comparing theoretical allreduce and all-to-all runtimes analytically at large $N$ for $d\!=\!4$, $\alpha\!=\!10\mu s$, and $M/B\!=\!1{\normalfont MB}/100{\normalfont Gbps}$. {\normalfont The all-to-all times are computed via multi-commodity flow (Appendix~\ref{app-sec:all-to-all}) with each node having 1MB of data to send (i.e., sending $1/N$ MB to each node).}}
        \label{fig:comparisonwithringdbt}
    \end{minipage}
    \nextfloat
    \hfill
    \begin{minipage}{0.65\textwidth}
        \begin{subfigure}{.63\textwidth}
            \centering
            \includegraphics[scale=0.5]{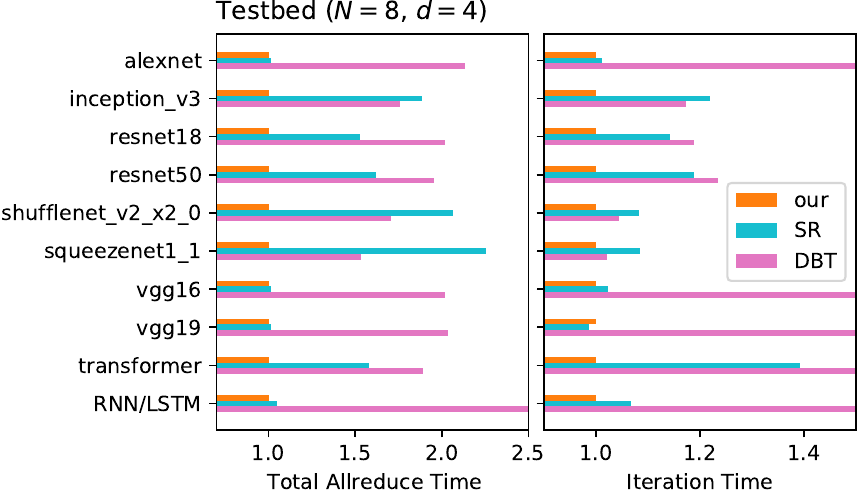}
            \caption{Small Model testbed training results (normalized by $K_{4,4}$)}
            \label{fig:smalltraining}
        \end{subfigure}
        \hfill
        \begin{subfigure}{.355\textwidth}
            \centering
            \includegraphics[scale=0.5]{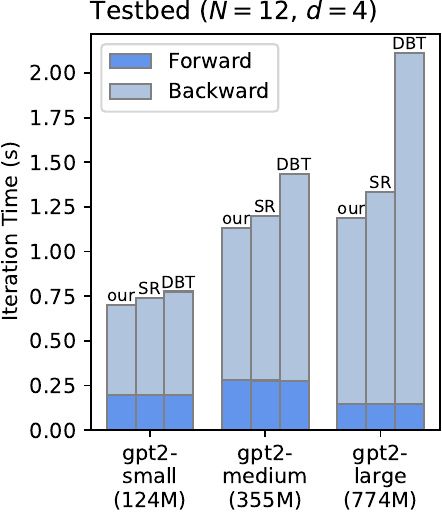}
            \caption{GPT2 testbed training results}
            \label{fig:gpt2training}
        \end{subfigure}
        \caption{Testbed data-parallel training results with different topologies. {\normalfont We compare our topologies (Table~\ref{tab:ourtopologies}) against ShiftedRings (SR) and double binary trees (DBT) at 8 and 12-node scale. (a) shows results of training small models on 8 A100 GPUs of our testbed, all using batch size 64. The total allreduce time is the sum of the allreduce times for all layers in the model. (b) shows results of training GPT-2 with 12 A100 GPUs. The per-GPU batch sizes are selected to max out the 40GB GPU memory with the small, medium, and large models having per-GPU batch sizes of 8, 4, and 1, respectively.}}
        \label{fig:testbedtraining}
    \end{minipage}
\end{figure*}

Figure~\ref{fig:arexptres} shows allreduce results for varying topology sizes $N$ and data sizes $M$. Table~\ref{tab:ourtopologies} shows the topologies generated by our topology finder (\S\ref{sec:topofinder}). We also add \emph{ShiftedBFBRing}, which is ShiftedRing topology but with our BFB generated schedule.
We observe that in the small data regime ($M\!=\!1$KB), our topology beats ShiftedRing by a significant margin ($\sim\!\! 75 \%$ at $N\!=\!12$) and also outperforms DBT  ($\sim\!\! 20 \%$ at $N\!=\!8,10,12$). Our ShiftedBFBRing beats ShiftedRing (\mbox{$\sim\!\! 40 \%$} at $N\!=\!12$) despite using the same topology. At small data sizes, the runtime is dominated by total-hop latency $T_L$, and hence, we can significantly outperform ShiftedRing, which has linear instead of logarithmic $T_L$ growth with respect to $N$.

In the large data regime ($M\!=\!1$GB), our topology beats DBT by a significant margin ($\sim\!\! 50 \%$ lower at $N\!=\!8,10,12$) and matches the performance of ShiftedRing. At large data sizes, the runtime is dominated by BW runtime $T_B$. Since the ShiftedRing is BW-optimal, we can only match its performance. Due to the influence of both total-hop latency and BW runtime at intermediate data sizes ($M\!=\!1$MB), our topology outperforms ShiftedRing ($\sim\!\! 50 \%$ at $N\!=\!12$) and DBT ($\sim\!\! 45 \%$ at $N\!=\!8,10,12$) in this regime. Our ShiftedBFBRing also outperforms ShiftedRing ($\sim\!\! 35 \%$ at $N\!=\!12$).
Note that although our gains over ShiftedRing diminish as $M$ grows, future increases in hardware bandwidth will enhance gains at large $M$ due to $T_L$ playing a more significant role. Appendix Figure~\ref{fig:rs_ag_results} shows the reduce-scatter and allgather results, which demonstrate trends and conclusions similar to those in Figure~\ref{fig:arexptres}.

Figure~\ref{fig:comparisonwithringdbt} shows the allreduce and all-to-all runtime comparison for large $N$ based on our analytical model. Topologies generated by our topology finder perform orders of magnitude faster in both allreduce and all-to-all. In allreduce, our best topologies outperform ShiftedRing and DBT by $56\times$ and $10\times$, respectively, near $N\!=\!1000$, due to the former's linear growth in $T_L$ and the latter's poor BW performance. When compared against 2D torus, our topologies also achieve $4\times$ better allreduce performance near $N\!=\!1000$ (see \S\ref{sec:paretoanalysis} for a detailed analysis of our topologies at large $N$ for different $\alpha,M/B$). As for all-to-all, we compare baseline topologies against our lowest-diameter topology, generalized Kautz, and our highest-diameter topology, circulant, from our Pareto-frontier for any $N$ and $d$. These two represent our best and worst all-to-all topologies, respectively, while also serving as the worst and best BW-efficient allreduce topologies. Nevertheless, circulant still outperforms all baselines in all-to-all: $9\times$ and $14\times$ better than ShiftdRing and DBT, respectively, on average from $N\!=\!900$ to $1000$. It is barely matched by the 2D torus, which is limited to the square number $N$s. Our lowest-diameter topology, generalized Kautz, outperforms ShiftedRing and DBT by $28\times$ and $42\times$, respectively, and is within 5.2\% of the theoretical bound from $N\!=\!900$ to $1000$.

\subsection{DNN Training Evaluation}
\label{sec:dnntrain}

\begin{figure}[htb]
    \centering
    \includegraphics[width=\columnwidth]{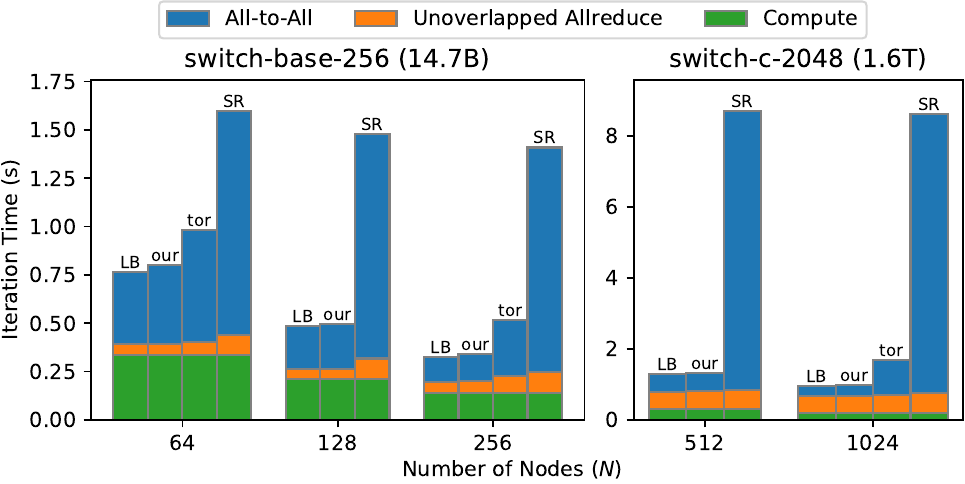}
    \caption{Simulated expert-parallel training of Switch Transformers across various topologies of different sizes. {\normalfont The simulation is conducted assuming $\alpha\!=\!10\mu$s, $B\!=\!100\text{Gbps}$. All-to-all time is computed via multi-commodity flow (Appendix~\ref{app-sec:all-to-all}). We detailed our topologies and the setup of simulation in Appendix~\ref{app-sec:simtrainingdetails}.}}
    \label{fig:simmoetraining}
\end{figure}

We compare our topologies against ShiftedRings and double binary trees in DNN training. On our small-scale testbed, we demonstrate improvements in data-parallel training across various small models and also \mbox{GPT-2}~\cite{gpt2}. For full-scale LLM training involving up to 1024 nodes and all-to-all communications, we simulate expert-parallel training of a Mixture of Experts (MoE) model to show our improvements at scale.

\noindent\textbf{Testbed Training:} We run PyTorch Distributed Data Parallel (DDP)~\cite{pytorch-dist} training experiments on our testbed. Figure~\ref{fig:testbedtraining} shows the results of training both small DNN models and \mbox{GPT-2}. We compare our topologies (from Table~\ref{tab:ourtopologies}) against ShiftedRings and DBT. In training small models (Figure~\ref{fig:smalltraining}), our topology improves total allreduce time by 30\% and 50\% on average against ShiftedRing and DBT, respectively. With optimizations such as compute-communication overlap, our topology still secures a 10\% and 25\% average improvement in iteration time over the baselines. In \mbox{GPT-2} training (Figure~\ref{fig:gpt2training}), despite the limited scale of our testbed, our topology enhances iteration time by 7\% and 25\% on average compared to ShiftedRing and DBT, respectively.

\begin{figure*}
    \begin{minipage}{0.625\textwidth}
        \centering
        \resizebox{\textwidth}{!}{
        \begin{tabular}{|c||c|c|c|c||c|c|c|c||c|c|c|c||c|}
            \hline
            \multirow{2}{*}{$N$} & \multicolumn{4}{c||}{SCCL} & \multicolumn{4}{c||}{TACCL w/o Symmetry} & \multicolumn{4}{c||}{TACCL w/ Symmetry} & \multirow{2}{*}{BFB} \\
            \cline{2-13}
            & $c\!=\!1$ & $c\!=\!2$ & $c\!=\!3$ & $c\!=\!4$ & $c\!=\!1$ & $c\!=\!2$ & $c\!=\!3$ & $c\!=\!4$ & $c\!=\!1$ & $c\!=\!2$ & $c\!=\!3$ & $c\!=\!4$ & \\
            \hline
            \multicolumn{14}{|c|}{Hypercube} \\
            \hline
            4 & 0.59 & 0.64 & 0.68 & 0.72 & 0.89 & 0.50 & 0.83 & 0.75 & 0.62 & 0.51 & 0.71 & 0.60 & $<\!\!0.01$ \\
            8 & 0.86 & 1.22 & 1.86 & 2.48 & 96.9 & 807 & 63.2 & 1800 & 7.97 & 645 & 7.39 & 1801 & $<\!\!0.01$ \\
            16 & 21.4 & 48.4 & 130 & 573 & 1801 & 1801 & 1801 & 1802 & 1801 & n/a & n/a & n/a & $<\!\!0.01$ \\
            32 & $>\!\!10^4$ & $>\!\!10^4$ & $>\!\!10^4$ & $>\!\!10^4$ & 1802 & n/a & n/a & n/a & n/a & n/a & n/a & n/a & 0.03 \\
            64 & $>\!\!10^4$ & $>\!\!10^4$ & $>\!\!10^4$ & $>\!\!10^4$ & n/a & n/a & n/a & n/a & n/a & n/a & n/a & n/a & 0.17 \\
            1024 & $>\!\!10^4$ & $>\!\!10^4$ & $>\!\!10^4$ & $>\!\!10^4$ & n/a & n/a & n/a & n/a & n/a & n/a & n/a & n/a & 52.7 \\
            \hline
            \multicolumn{14}{|c|}{2D Torus ($n\times n$)} \\
            \hline
            4 & 0.61 & 0.63 & 0.67 & 0.76 & 0.68 & 0.50 & 0.82 & 0.72 & 0.45 & 0.51 & 0.76 & 0.64 & $<\!\!0.01$ \\
            9 & 1.00 & 1.51 & 2.22 & 3.44 & 1801 & 189 & 67.8 & 262 & 88.6 & 71.1 & 67.8 & 105 & $<\!\!0.01$ \\
            16 & 17.5 & 60 & 131 & 603 & 1801 & 1801 & 1801 & 1802 & 1801 & 1801 & 1801 & n/a & $<\!\!0.01$ \\
            25 & 3286 & 5641 & $>\!\!10^4$ & $>\!\!10^4$ & 1802 & 1802 & 1803 & n/a & 1802 & n/a & n/a & n/a & 0.01 \\
            36 & $>\!\!10^4$ & $>\!\!10^4$ & $>\!\!10^4$ & $>\!\!10^4$ & n/a & n/a & n/a & n/a & n/a & n/a & n/a & n/a & 0.03 \\
            2500 & $>\!\!10^4$ & $>\!\!10^4$ & $>\!\!10^4$ & $>\!\!10^4$ & n/a & n/a & n/a & n/a & n/a & n/a & n/a & n/a & 61.1 \\
            \hline
        \end{tabular}
        }
        \captionof{table}{Comparing allgather schedule generation runtimes (in seconds) of SCCL, TACCL, and BFB. {\normalfont The setup of SCCL is to generate schedules with least number of comm steps. Both SCCL and TACCL were run with chunks=1,2,3,4 (number of chunks per shard), and TACCL was run w/ and w/o manually set topology symmetry. ``n/a'' indicates where TACCL reports an error due to failure to generate a solution within its 1800s time limit for MILP solver.}}
        \label{tab:sccltacclbfs}
    \end{minipage}
    \hfill
    \begin{minipage}{0.355\textwidth}
        \includegraphics[width=\textwidth]{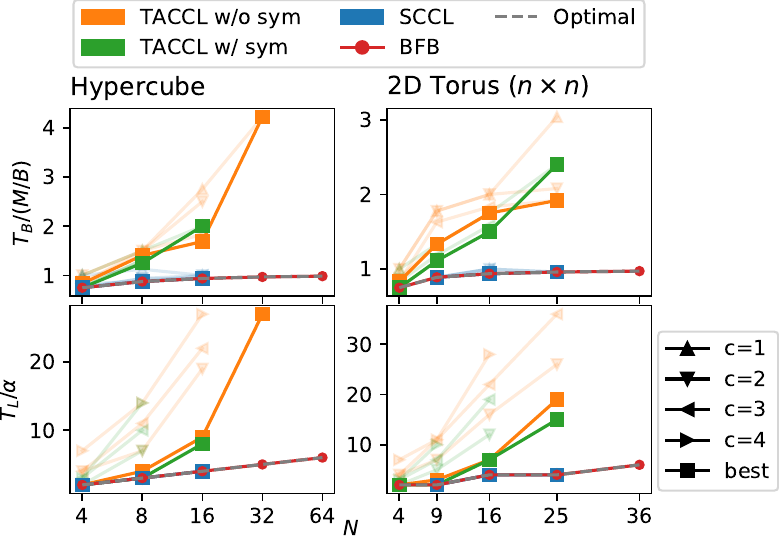}
        \caption{Comparing theoretical performances of schedules from Table \ref{tab:sccltacclbfs}. {\normalfont We show both $T_L$ and $T_B$ of the schedules, along with the theoretical optimal. For SCCL and TACCL, the solid lines show the best results from parameter sweeps. The inferior ones are dimmed.}}
        \label{fig:sccltacclbfs}
    \end{minipage}
\end{figure*}

\noindent\textbf{Large-Scale Simulation:} While improvements over ShiftedRing and DBT have been shown in testbed training, full-scale LLM training is performed on much larger clusters. In Figure~\ref{fig:simmoetraining}, we simulate expert-parallel training of Switch Transformers~\cite{switchtran} on a much larger scale with parameter sizes up to 1.6 trillion. We collect execution timestamps from one A100 GPU to derive the compute times for each layer. Communication times are then added to simulate training, ensuring compute-communication overlap/dependency. Appendix~\S\ref{app-sec:simtrainingdetails} provides further details of the simulation.

Expert-parallel training involves not only data-parallel allreduce for non-expert laryers, but also all-to-all communications to transfer tokens to and from the routed experts, which are in the critical path of compute~\cite{switchtran,deepspeedmoe,gshard,lina}. In Figure~\ref{fig:simmoetraining}, we break down the iteration time into compute time, unoverlapped allreduce time, and all-to-all time for better understanding of performance. As previously analyzed in Figure~\ref{fig:comparisonwithringdbt}, ShiftedRing exhibit all-to-all performance that is order-of-magnitude worse than our topologies. At 256-node training of 14.7B MoE model, ShiftedRing has $8\!\times$ greater total all-to-all time, resulting in $4\!\times$ longer iteration time compared to our topology. We also include 2D torus for comparison due to its relatively better all-to-all performance. However, it still has all-to-all and iteration times that are $2\!\times$ and $1.5\!\times$ greater, respectively, than our topology. The disparity is even larger at 1024-node training of 1.6T MoE model, where ShiftedRing and 2D torus show all-to-all times that are $27\!\times$ and $3.3\!\times$ greater, and iteration times that are $9\!\times$ and $1.7\!\times$ longer, respectively. At this scale, ShiftedRing and 2D torus spend 91\% and 58\% of iteration time on all-to-all communications, while our topology only spends 30\%. We omit DBT in Figure~\ref{fig:simmoetraining} due to its significantly worse performance ($\sim\!2\!\times$ of ShiftedRing). Due to high performance in both allreduce and all-to-all, our topologies consistently remain within 5\% of the theoretical lower bound for iteration time. 

Because large models involve large data-parallel allreduce sizes and both torus and ShiftedRing are BW-optimal, the allreduce performances are similar across these topologies. For a broader spectrum of all-to-all efficient low-hop topologies like expander graphs, due to the lack of efficient allreduce schedules prior to our work, they are unable to be utilized in allreduce-involved training. 

\subsection{BFB Schedule Evaluation}\label{sec:schedeval}

We evaluate schedule generation from two aspects: schedule generation runtime and the performance of generated schedules. In \S\ref{sec:scalcomp}, we compare BFB with state-of-the-art schedule generations: SCCL~\cite{SCCL} and TACCL~\cite{shah2021synthesizing}, in both generation runtime and theoretical schedule performance. In \S\ref{sec:supercomp}, we compare on supercomputing torus clusters (with up to 54 nodes) the performance of torus schedules generated by BFB, traditional torus scheduling~\cite{multiporttorus}, SCCL, and TACCL.

\subsubsection{Schedule Generation}\label{sec:scalcomp}

In schedule generation, SCCL and TACCL are the closest in spirit to BFB schedule generation. Table \ref{tab:sccltacclbfs} shows the runtime comparison between SCCL, TACCL, and BFB when generating allgather schedules for hypercube and 2D torus. Both SCCL and TACCL use NP-hard optimization to generate schedules. SCCL, which uses an SMT solver, fails to generate a schedule within $10^4$ seconds beyond $N\!=\!30$. TACCL formulates the scheduling problem as a mixed integer linear program (MILP). It sets an 1800s time limit for its MILP solver, after which it will return the best solution found up to that point. However, for larger topologies, TACCL's solver fails to find a solution within the time limit, resulting in an error. In comparison, BFB schedule generation is faster by orders of magnitude due to its polynomial-time generation.

In terms of theoretical schedule performance, Figure~\ref{fig:sccltacclbfs} compares the total-hop latency and BW runtime of generated schedules. Given a topology, SCCL and TACCL need to perform a sweep across parameters such as the number of chunks and symmetry. They have to generate schedules for different parameter sets to identify the high-performance ones, unlike BFB, which has no parameter. In Figure~\ref{fig:sccltacclbfs}, the schedules of SCCL and BFB can both achieve exact optimality, but TACCL's have significantly worse performance, especially at large $N$s. SCCL is uniquely capable of generating all Pareto-efficient schedules. However, due to the prohibitive runtime of parameter sweep, SCCL can only do so for very small $N$s.

\begin{figure}[tb]
    \centering
    \vspace{0.1in}
    \includegraphics[width=\columnwidth]{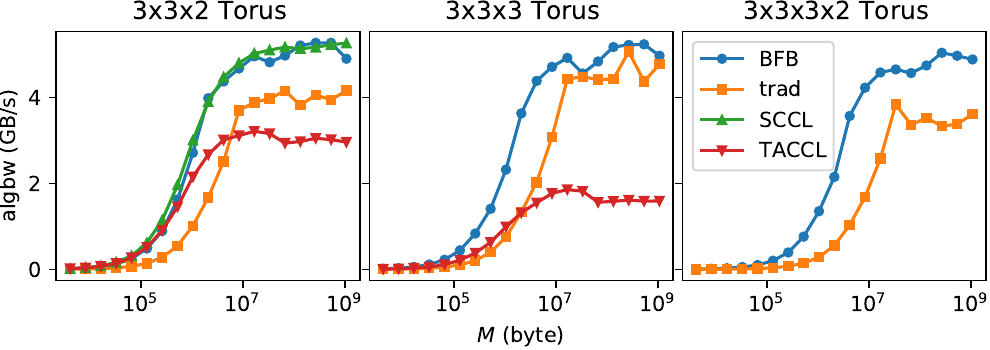}
    \caption{Comparing allreduce performances of torus schedules generated by BFB, traditional torus scheduling~\cite{multiporttorus}, SCCL, and TACCL on Frontera~\cite{frontera} supercomputer. {\normalfont The $y$-axis is algorithmic bandwidth (algbw), computed as $M$ divided by end-to-end runtime. SCCL fails to generate a schedule for $3\!\!\times\!\!3\!\!\times\!\!3$ and $3\!\!\times\!\!3\!\!\times\!\!3\!\!\times\!\!2$ tori, and TACCL fails to generate a schedule for $3\!\!\times\!\!3\!\!\times\!\!3\!\!\times\!\!2$ torus within the time limits.}}
    \label{fig:supercomp}
\end{figure}

\subsubsection{Supercomputing Allreduce Experiments}\label{sec:supercomp}

In the supercomputing setting, we run torus schedules generated by BFB, traditional torus scheduling~\cite{multiporttorus}, SCCL, and TACCL on Frontera~\cite{frontera} supercomputer at the Texas Advanced Computing Center (TACC)~\cite{tacc}. The cluster consists of 396 nodes in a 6D torus direct-connect topology. Each node is equipped with an Intel Xeon Platinum 8280 CPU and a Rockport NC1225 network card, capable of delivering 25 Gbps per link, with degree 12. At this degree, however, the total BW of a single node is bottlenecked by the 100 Gbps host BW of PCIe Gen3 x16. Finally, the schedules are lowered and run using Intel oneCCL~\cite{oneccl} + libfabric~\cite{libfabric}.

We run schedules on two types of sub-torus within the cluster: equal-dimension ($3\!\!\times\!\!3\!\!\times\!\!3$) and unequal-dimension ($3\!\!\times\!\!3\!\!\times\!\!2$ \& $3\!\!\times\!\!3\!\!\times\!\!3\!\!\times\!\!2$). As shown in Figure~\ref{fig:supercomp}, BFB schedules achieve the highest performance in all settings. As mentioned in \S\ref{sec:generateivetopo}, the traditional torus schedule can only achieve high BW performance in tori with equal dimensions. At large $M$, it matches BFB's performance in $3\!\!\times\!\!3\!\!\times\!\!3$ torus but significantly underperforms in $3\!\!\times\!\!3\!\!\times\!\!2$ and $3\!\!\times\!\!3\!\!\times\!\!3\!\!\times\!\!2$, where BFB has 29\% and 42\% higher algbw on average for $M\!\geq\!100$MB. At small to intermediate $M$ ($<\!100$MB), BFB outperforms traditional schedules by 3.1$\times$ on average in all settings due to its 2$\times$ lower in total-hop latency and higher BW performance.

As for SCCL and TACCL, we adhere to the same time limits and parameter sweeps as in \S\ref{sec:scalcomp} and select the best result at each $M$ from all parameter sets. In $3\!\!\times\!\!3\!\!\times\!\!2$ torus, SCCL is able to generate an optimal schedule, matching BFB's performance across all $M$. However, it fails to generate a schedule within $10^4$ seconds for other larger tori. TACCL can only generate schedules in $3\!\!\times\!\!3\!\!\times\!\!2$ and $3\!\!\times\!\!3\!\!\times\!\!3$, and its schedules underperform BFB's by a large margin. One additional observation is that the algbw of BFB at large $M$ hardly changes from 18-node ($3\!\!\times\!\!3\!\!\times\!\!2$) to 54-node ($3\!\!\times\!\!3\!\!\times\!\!3\!\!\times\!\!2$) torus. This can be explained by the fact that BFB schedules have mathematically achieved allreduce BW optimality ($\frac{2M}{B}\!\cdot\!\frac{N-1}{N}$), which remains nearly constant as $N$ increases.

\section{Concluding Remarks}
\label{sec:conclusion}

Collectives are critical to both ML training and HPC workloads. Current solutions often rely solely on existing topologies and schedules, resulting in high total-hop latency, bandwidth inefficiency, or low all-to-all throughput. We presented a general, highly scalable, and automated algorithmic framework for optimizing topology and schedule generation for collectives by leveraging scalable graph-theoretic approaches. Our evaluation demonstrates significant performance benefits using multiple testbeds and large-scale simulations for both collectives and training jobs.

\section{Acknowledgements}
This research was developed with funding from the Defense Advanced Research Projects Agency (DARPA) under Contract No.HR001120C0089. The views, opinions and/or findings expressed are those of the author and should not be interpreted as representing the official views or policies of the Department of Defense or the U.S. Government.

\newpage
\bibliographystyle{acm}
\bibliography{refs}

\newpage
\appendix
\tolerance=3000 %

\clearpage

\noindent {\Large \textbf{Appendix}}

\vspace{0.1in}
\noindent In this appendix, we give formal mathematical definitions and analysis of various techniques and concepts mentioned in the main text. To summarize,
\squishlist
    \item \S\ref{app-sec:evalapp} provides supplementary materials to evaluation section.
    \item \S\ref{sec:reducescatterallgather} gives formal definitions of reduce-scatter/allgather schedule and how one can be transformed into another.
    \item \S\ref{app-sec:optimality} gives formal definitions of total-hop latency and bandwidth optimality, along with discussions on optimal allreduce schedule and computational cost of reduction.
    \item \S\ref{app-sec:expansionperf} provides formal definitions of expansion techniques and optimality analysis of their expanded schedules.
    \item \S\ref{app:shortestpath} provides optimality analysis of BFB schedule generation and discusses variant formulations that support generating schedules for a fixed number of chunks and for heterogeneous network topology.
    \item \S\ref{sec:genetopoappen} discusses various generative topologies and the performance of their generated BFB schedules.
    \item \S\ref{sec:proof} provides proofs of all theorems in this paper.
    \item \S\ref{sec:suppfigtab} contains supplementary tables and figures. In particular, Table~\ref{table:toposummary} gives a summary of topologies in this paper.
\squishend

\section{Evaluation Appendix}\label{app-sec:evalapp}

\squishlist
    \item \S\ref{sec:switchcomp} presents experiment results that compare BFB schedule generation with communication solutions for switch networks: NCCL~\cite{nccl} and recursive halving \& doubling.
    \item \S\ref{sec:cost-model} shows experiment results to validate $\alpha$-$\beta$ cost model.
    \item \S\ref{sec:paretoanalysis} gives an analysis of Pareto-efficient topologies/schedules under different hardware and workload specifications.
    \item \S\ref{app-sec:simtrainingdetails} details setup of simulated DNN training and the topologies generated by our topology finder.
    \item \S\ref{app-sec:all-to-all} provides the multi-commodity flow (MCF) formulation used to compute all-to-all throughput.
    \item \S\ref{sec:unitobidirectional} shows how to convert unidirectional topologies/schedules into bidirectional ones.
\squishend

\begin{figure}[htb]
    \centering
    \vspace{0.5cm}
    \includegraphics[width=\columnwidth]{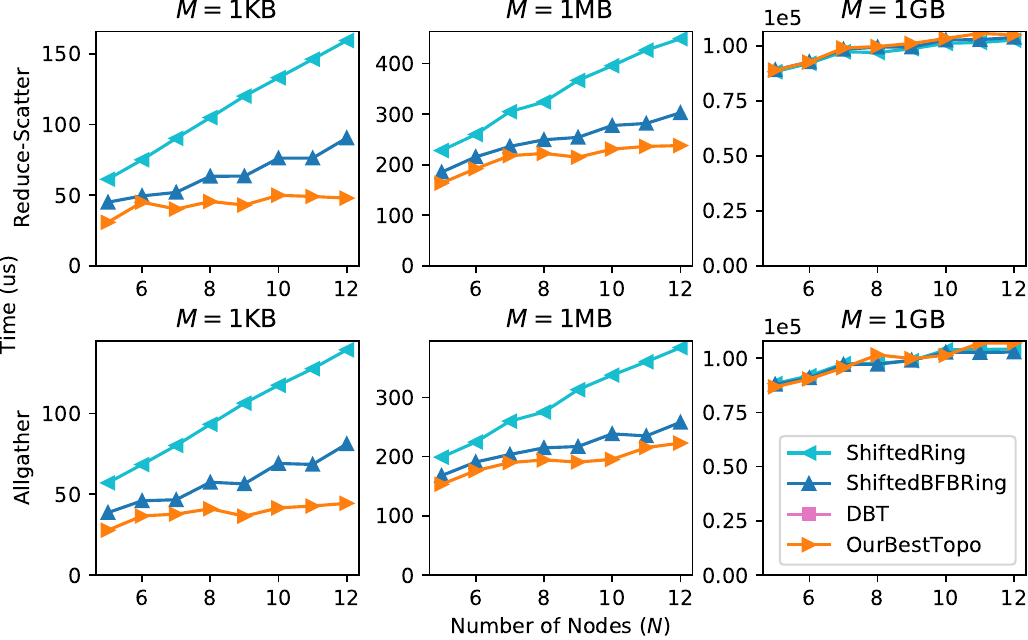}
    \caption{Comparing reduce-scatter and allgather runtimes of topologies. {\normalfont This figure shows the corresponding reduce-scatter and allgather results of Figure~\ref{fig:arexptres}.}}
    \label{fig:rs_ag_results}
\end{figure}

\subsection{Comparison Against Switch Solutions}\label{sec:switchcomp}

NCCL~\cite{nccl} and recursive halving \& doubling (RH\&D) are widely adopted collective communication solutions on switch networks. We assess the schedule performance of BFB against these solutions over two direct-connect 8-node topologies: hypercube and twisted hypercube~\cite{twistedncube}. Hypercube is widely used in HPC settings, and its connections perfectly match the communication pattern of RH\&D. Twisted hypercube is a variant of hypercube with a lower diameter.

Figure~\ref{fig:hypercubecompare} compares the baselines against our BFB schedule when run over either hypercube or twisted hypercube with $N\!=\!8$, $d\!=\!3$ on the testbed.
At small $M$, all schedules and topologies perform roughly the same, except BFB can take advantage of the lower diameter of twisted hypercube and achieve $\sim\!\! 20\%$ lower runtime.
At large $M$, because BFB achieves BW optimality on both topologies, it performs even better with 60\% lower runtime.
RH\&D and NCCL perform poorly as $M$ grows because they cannot utilize all $d\!=\!3$ links simultaneously. At every comm step of RH\&D, a node only communicates with one of the three neighbors, utilizing at most $1/3$ of the total bandwidth (similarly with NCCL).
Also, because the schedule is not matched to the twisted hypercube, some nodes communicate with nodes multiple hops away, occupying more link capacities and causing congestion.

In summary, \textbf{BFB schedules can fully exploit direct-connect topologies and provide better performance than solutions designed for switch networks}.

\begin{figure}[tb]
    \centering
    \includegraphics[scale=0.5]{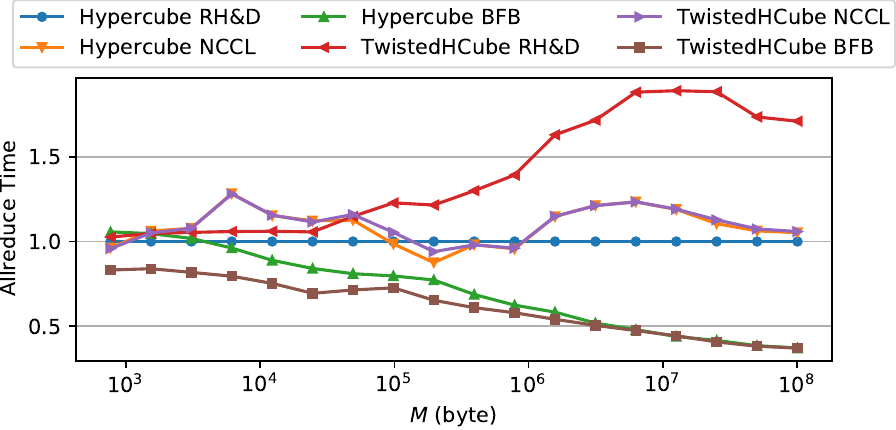}
    \caption{Comparing switch allreduce solutions (recursive halving \& doubling (RH\&D), NCCL) against BFB schedule on hypercube and twisted hypercube on $N\!=\!8,d\!=\!3$ testbed. {\normalfont The runtimes are normalized by the runtime of recursive halving \& doubling on hypercube.}}
    \label{fig:hypercubecompare}
\end{figure}

\subsection{Cost Model Validation}
\label{sec:cost-model}

Despite the wide acceptance of $\alpha$-$\beta$ cost model by previous literature \cite{hockney1994communication, conc_comp, SCCL, shah2021synthesizing, dbtpaper}, we also did a linear regression analysis to verify the cost model on our testbed. In particular, we want to verify that (1) total-hop latency follows $T_L=\alpha\cdot x+\epsilon$ and (2) BW runtime follows $T_B=\frac{M}{B}\cdot y$, where $x$ and $y$ are the number of comm steps and bandwidth factor respectively ($y=2\cdot\frac{N-1}{N}$ if BW-optimal). $\epsilon$ is the constant latency\footnote{This part of latency is a fixed constant for all topologies and schedules, so it is omitted earlier.} including time costs such as GPU kernel launching. Here, we use linear regression to derive the values of $\alpha$, $\epsilon$, and $1/B$, and compute the relative errors between observed runtimes and fitted runtimes. We fit the allreduce runtimes at 1KB to the total-hop latency, since BW runtime is negligible at such a small $M$. Similarly, we fit runtimes at 1GB to the BW runtime, since total-hop latency is negligible at such a large $M$.

Figure~\ref{fig:latencyfit} and \ref{fig:betafit} show our results of linear regression analysis to verify our cost model. For total-hop latency, we obtain estimates $\alpha\!\approx\! 13.33$us and $\epsilon\!\approx\! 21.60$us with low errors (average and maximum relative errors of 1.71\% and 6.21\% respectively). As one can see from Figure~\ref{fig:latencyfit}, ShiftedRing and ShiftedBFBRing have a straight and a stair-step shape of runtime growth respectively, which match the expected numbers of comm steps $2(N-1)$ and $2\lfloor N/2\rfloor$ respectively. For BW runtime, we get an estimate $1/B\!\approx\! 1.018\!\times\! 10^{-4}$us/byte or $B\!\approx\! 79$Gbps with low errors (average and maximum relative errors of 0.47\% and 1.32\% respectively). As one can see from Figure~\ref{fig:betafit}, all three topologies follow the fitted curve $2\frac{\text{1GB}}{B}\cdot\frac{N-1}{N}\!=\!2T^*_B(N)$ since they are all BW-optimal. However, there is a gap between $B\!\approx\! 79$Gbps and the hardware theoretical bandwidth 4x25Gbps$=$100Gbps. Besides inevitable loss of bandwidth in actual communication, the gap can also be explained by the fact that computational cost of reduction also accounts for part of $1/B$ as discussed in \S\ref{sec:computeopt}.

\begin{figure}[tb]
    \centering
    \begin{subfigure}{\columnwidth}
        \centering
        \includegraphics[scale=0.5]{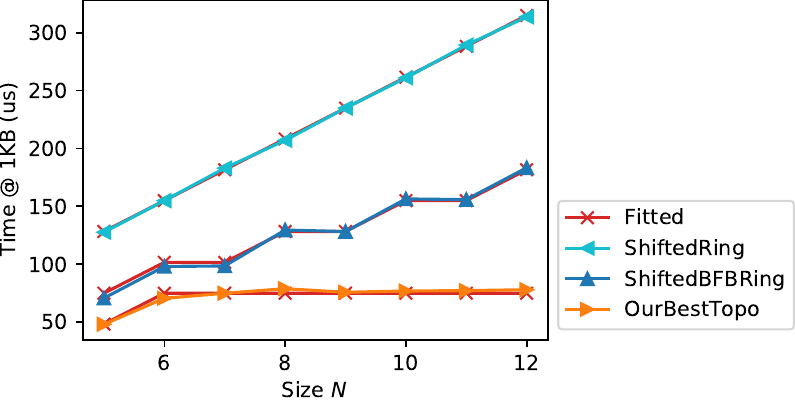}
        \caption{Total-Hop Latency $T_L$ {\normalfont (Relative error: avg 1.71\%, max 6.21\%)}}
        \label{fig:latencyfit}
    \end{subfigure}
    \begin{subfigure}{\columnwidth}
        \centering
        \includegraphics[scale=0.5]{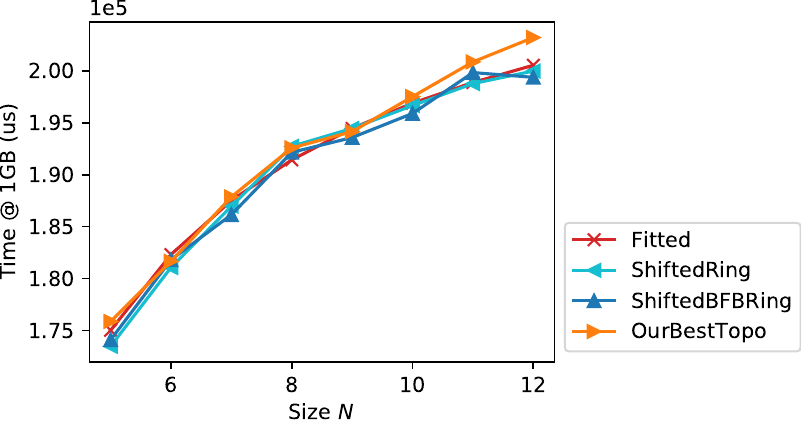}
        \caption{BW Runtime $T_B$ {\normalfont (Relative error: avg 0.47\%, max 1.32\%)}}
        \label{fig:betafit}
    \end{subfigure}
    \caption{Linear regression results.}
\end{figure}

\subsection{Pareto-Efficiency Analysis}\label{sec:paretoanalysis}

\begin{figure*}[tb]
    \centering
    \includegraphics[width=\textwidth]{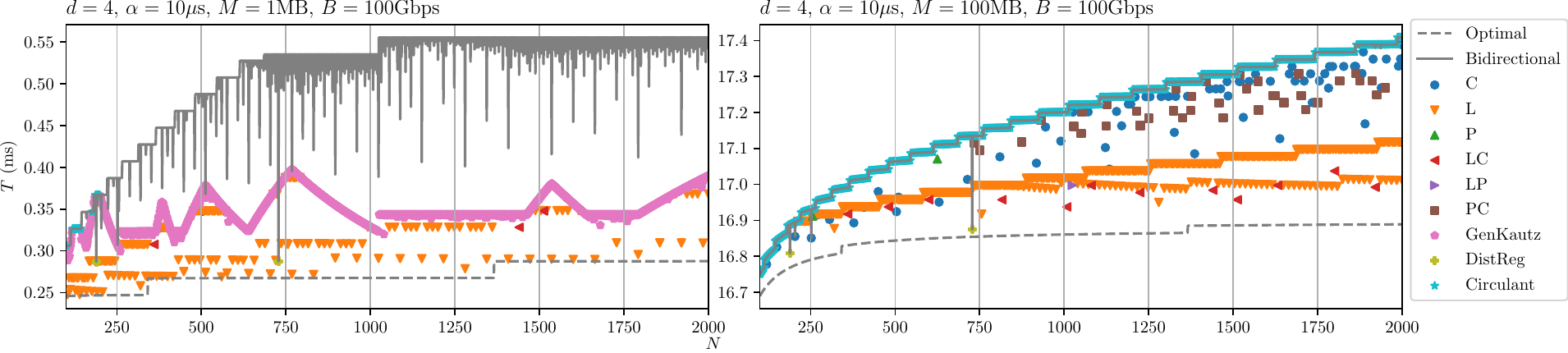}
    \caption{The minimum allreduce runtimes at different $N$ for $d=4$, $\alpha=10\mu s$, and $M/B=1{\normalfont MB}/100{\normalfont Gbps},100{\normalfont MB}/100{\normalfont Gbps}$. {\normalfont ``L'', ``P'', and ``C'' stand for line graph, Cartesian power, and Cartesian product (of different graphs) respectively. For example, ``LC'' means the runtime is achieved by a topology whose construction involves line graph expansion and Cartesian product. ``GenKautz'', ``DistReg'', and ``Circulant'' stand for generalized Kautz graph (\S\ref{sec:genkautz}), distance regular graph (\S\ref{sec:dist-reg}), and circulant graph (\S\ref{sec:circulant}) respectively. The figures also show the best bidirectional topology known at different $N$s. Degree expansion does not show up due to target $d=4$ being relatively small.}}
    \label{fig:opttopodiffn}
\end{figure*}

There could exist multiple Pareto-efficient topologies at given $N$ and $d$. For different $\alpha$ and $M/B$, the Pareto-efficient topology with minimum allreduce runtime is also different.
To see how $N$ affects the best choice of topologies, we use topology finder~(\S\ref{sec:topofinder}) to generate Pareto-efficient topologies at $d=4$ for $N$ up to 2000 and pick the best one based on specific values of $\alpha$ and $M/B$.

Figure~\ref{fig:opttopodiffn} shows two examples of such analysis.
At $M=1$MB, total-hop latency is more important than BW runtime. Thus, we see that generalized Kautz graph is the most popular one, being the best topology at many $N$s. On the contrary, at $M=100$MB, BW performance becomes the dominant factor, and thus circulant graph becomes the most popular one. Line graphs are also popular in both settings; however, line graph expansion requires target $N$ to be divisible by some power of $d$, so it does not work for any $N$.

\subsection{Details of Simulated Distributed Training}\label{app-sec:simtrainingdetails}

\begin{figure}[tb]
    \centering
    \includegraphics[width=\columnwidth]{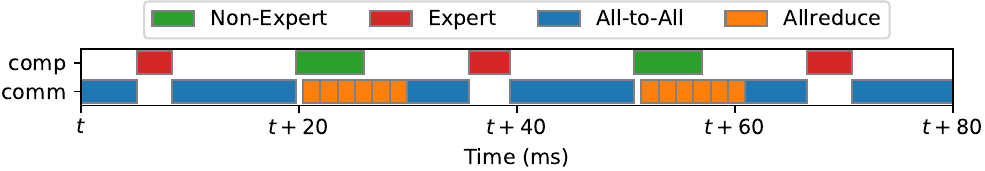}
    \caption{Example of a training timeline for Switch Transformers.}
    \label{fig:simmoetimeline}
\end{figure}

\begin{table}[tb]
    \centering
    \resizebox{\columnwidth}{!}{
    \begin{tabular}{|l||c|c||c|c||}
	\hline
	\textbf{Topology} & \multicolumn{2}{c||}{\textbf{Allreduce}} & \multicolumn{2}{c||}{\textbf{All-to-All}} \\
	\hline
	\hline
	$N\!=\!32$, $d\!=\!4$ & $T_L$ & $T_B$ & \scalebox{0.9}[1]{$D(G)$} & MCF \\
	\hline
	$L(K_{4,4})$ & $3\alpha$ & $1.000\sfrac{M}{B}$ & $3$ & $5.71\text{e}\!-\!2$ \\
	$\text{DistReg}(4, 32)$ & $4\alpha$ & $0.969\sfrac{M}{B}$ & $4$ & $5.26\text{e}\!-\!2$ \\
	\hline
	Theoretical Bound & $3\alpha$ & $0.969\sfrac{M}{B}$ & $3$ & $5.80\text{e}\!-\!2$ \\
	\hline
	\hline
	$N\!=\!64$, $d\!=\!4$ & $T_L$ & $T_B$ & \scalebox{0.9}[1]{$D(G)$} & MCF \\
	\hline
	$\Pi_{4, 64}$ & $3\alpha$ & $1.312\sfrac{M}{B}$ & $3$ & $2.17\text{e}\!-\!2$ \\
	$L(\text{DBJMod}(4, 2))$ & $4\alpha$ & $1.000\sfrac{M}{B}$ & $4$ & $2.21\text{e}\!-\!2$ \\
	$\text{Diamond}^{\square 2}$ & $6\alpha$ & $0.984\sfrac{M}{B}$ & $6$ & $1.87\text{e}\!-\!2$ \\
	\hline
	Theoretical Bound & $3\alpha$ & $0.984\sfrac{M}{B}$ & $3$ & $2.42\text{e}\!-\!2$ \\
	\hline
	\hline
	$N\!=\!128$, $d\!=\!4$ & $T_L$ & $T_B$ & \scalebox{0.9}[1]{$D(G)$} & MCF \\
	\hline
	$L^2(K_{4,4})$ & $4\alpha$ & $1.031\sfrac{M}{B}$ & $4$ & $9.89\text{e}\!-\!3$ \\
	$L(\text{DistReg}(4, 32))$ & $5\alpha$ & $1.000\sfrac{M}{B}$ & $5$ & $9.26\text{e}\!-\!3$ \\
	$\text{BiRing}(2, 8)\square\text{UniRing}(1, 4)^{\square 2}$ & $10\alpha$ & $0.992\sfrac{M}{B}$ & $10$ & $5.21\text{e}\!-\!3$ \\
	\hline
	Theoretical Bound & $4\alpha$ & $0.992\sfrac{M}{B}$ & $4$ & $1.00\text{e}\!-\!2$ \\
	\hline
	\hline
	$N\!=\!256$, $d\!=\!4$ & $T_L$ & $T_B$ & \scalebox{0.9}[1]{$D(G)$} & MCF \\
	\hline
	$\text{DBJ}(4, 4)$ & $4\alpha$ & $1.328\sfrac{M}{B}$ & $4$ & $4.04\text{e}\!-\!3$ \\
	$L^2(\text{DBJMod}(4, 2))$ & $5\alpha$ & $1.016\sfrac{M}{B}$ & $5$ & $4.10\text{e}\!-\!3$ \\
	$L(\text{Diamond}^{\square 2})$ & $7\alpha$ & $1.000\sfrac{M}{B}$ & $7$ & $3.62\text{e}\!-\!3$ \\
	$\text{DBJMod}(2, 4)^{\square 2}$ & $10\alpha$ & $0.996\sfrac{M}{B}$ & $8$ & $2.94\text{e}\!-\!3$ \\
	\hline
	Theoretical Bound & $4\alpha$ & $0.996\sfrac{M}{B}$ & $4$ & $4.39\text{e}\!-\!3$ \\
	\hline
	\hline
	$N\!=\!512$, $d\!=\!4$ & $T_L$ & $T_B$ & \scalebox{0.9}[1]{$D(G)$} & MCF \\
	\hline
	$L^3(K_{4,4})$ & $5\alpha$ & $1.039\sfrac{M}{B}$ & $5$ & $1.88\text{e}\!-\!3$ \\
	$L^2(\text{DistReg}(4, 32))$ & $6\alpha$ & $1.008\sfrac{M}{B}$ & $6$ & $1.78\text{e}\!-\!3$ \\
	$L(\text{BiRing}(1, 4)^{\square 3}\square\text{UniRing}(1, 2))$ & $11\alpha$ & $1.000\sfrac{M}{B}$ & $11$ & $1.12\text{e}\!-\!3$ \\
	$\text{UniRing}(1, 4)^{\square 3}\square\text{UniRing}(1, 8)$ & $16\alpha$ & $0.998\sfrac{M}{B}$ & $16$ & $5.58\text{e}\!-\!4$ \\
	\hline
	Theoretical Bound & $5\alpha$ & $0.998\sfrac{M}{B}$ & $5$ & $1.90\text{e}\!-\!3$ \\
	\hline
	\hline
	$N\!=\!1024$, $d\!=\!4$ & $T_L$ & $T_B$ & \scalebox{0.9}[1]{$D(G)$} & MCF \\
	\hline
	$\Pi_{4,1024}$ & $5\alpha$ & $1.332\sfrac{M}{B}$ & $5$ & $8.01\text{e}\!-\!4$ \\
	$L^3(C(16,\{3,4\}))$ & $6\alpha$ & $1.020\sfrac{M}{B}$ & $6$ & $8.12\text{e}\!-\!4$ \\
	$L^2(\text{Diamond}^{\square 2})$ & $8\alpha$ & $1.004\sfrac{M}{B}$ & $8$ & $7.34\text{e}\!-\!4$ \\
	$L(\text{DBJMod}(2,4)^{\square 2})$ & $11\alpha$ & $1.000\sfrac{M}{B}$ & $9$ & $6.18\text{e}\!-\!4$ \\
	$(\text{UniRing}(1,4)\square \text{UniRing}(1,8))^{\square 2}$ & $20\alpha$ & $0.999\sfrac{M}{B}$ & $20$ & $2.79\text{e}\!-\!4$ \\
	\hline
	Theoretical Bound & $5\alpha$ & $0.999\sfrac{M}{B}$ & $5$ & $8.57\text{e}\!-\!4$ \\
	\hline
    \end{tabular}
    }
    \caption{Pareto-efficient topologies at $N\!\in\!\{32,64,128,256,512,1024\}$, $d\!=\!4$. {\normalfont The results are generated from the topology finder (\S\ref{sec:topofinder}). For notations of the topologies, see Table~\ref{tab:expansiontechniques} and \ref{table:toposummary}. For distance regular graphs ($\text{DistReg}$), see Table~\ref{tab:exampledistreg}. The MCF values are computed using LP~(\ref{eq:mcflp}).}}
    \label{tab:allourtopos}
\end{table}

We simulate distributed ML training by first collecting actual compute times for model layers, running the models on an NVIDIA A100-SXM-80GB GPU, and then adding communication times according to the specific parallelism, e.g., data or expert parallelism. The communication time is calculated using $\alpha$-$\beta$ model for allreduce (\S\ref{sec:cost-model}) and multi-commodity flow for all-to-all (\S\ref{app-sec:all-to-all}), assuming $\alpha\!=\!10\mu$s, $B\!=\!100\text{Gbps}$ over $d\!=\!4$. Our simulation is designed to match the compute-communication overlap pattern of PyTorch Distributed Data Parallel~\cite{pytorch-dist}. As in PyTorch DDP, we bucket gradients that are ready for allreduce during backward propagation. Once the gradient volume reaches a predefined bucket capacity, an allreduce is performed. While a large bucket size results in less latency overhead, a small bucket size enhances compute-communication overlap. We choose the best bucket size by comparing the iteration times of bucket sizes \{1MB, 10MB, 100MB, 1GB\}. To ensure overlap, computation and communication are handled as independent streams, with the communication stream executing one collective at a time.

For the simulated training of Mixture-of-Experts (MoE) models, we follow the standard practice of expert parallelism~\cite{switchtran,deepspeedmoe,gshard,lina}, where experts are sharded across all nodes while non-expert layers are replicated. All-to-all communications are needed before the expert layers to route tokens to the nodes of the assigned experts, and afterward to return tokens to the original nodes for the continuation of the forward/backward pass, thus blocking the computation stream. Furthermore, all-to-all and allreduce are not allowed to be overlapped as they occupy the same network bandwidth~\cite{lina}. Figure~\ref{fig:simmoetimeline} shows a timeline example of the simulated expert-parallel training. For simplicity, we assume a uniform token distribution among the experts, as MoE models are trained to balance expert load~\cite{zoph2022stmoe,shazeer2017,switchtran}. Consequently, the all-to-all communication is uniform across the nodes. We use the multi-commodity flow formulation~(\ref{eq:mcflp}) to compute the all-to-all communication time.

All hyperparameters, including sequence lengths and global batch sizes, are chosen according to the original paper of Switch Transformers~\cite{switchtran}. The topology degree is fixed at $4$, and the topology sizes are chosen such that the local batch size at each node is $\geq 1$ and not so large as to run out of GPU memory. Table~\ref{tab:allourtopos} includes all the Pareto-efficient topologies used in the simulation. For each model and topology size, we choose the topology that results in the smallest iteration time.

\subsection{All-to-All Throughput}\label{app-sec:all-to-all}

The problem of deriving the throughput of all-to-all communication on a topology can be nicely formulated as a multi-commodity flow (MCF) problem~\cite{basu2024efficient,alltoall1,alltoall2,alltoall3,xpander}. In an all-to-all MCF, each pair of nodes $(s,t)\in V_G^2$ acts as the source and sink of a commodity. The objective is to simultaneously route $f$ units of flows from each $s$ to $t$ such that $f$ is maximized, with flow allocation subject to flow conservation and edge capacities. In \cite{basu2024efficient}, the authors have devised an efficient LP formulation to compute the optimal $f$:
\begin{equation}\label{eq:mcflp}
	\begin{array}{lr@{}ll}
	\text{maximize}  & \displaystyle f & &\\
	\text{subject to}& \displaystyle\sum_{s} y_{s,(u,v)} &\leq 1, & \forall u,v\\
					 & \displaystyle f+\sum_{v} y_{s,(u,v)} &\displaystyle\leq\sum_{w} y_{s,(w,u)}, & \forall s,u\!:\!s\!\neq\! u\\
	                 & y_{s,(u,v)} & \geq 0. & \forall s,u,v
	\end{array}
\end{equation}
In LP~(\ref{eq:mcflp}), we assume the capacity/bandwidth of each link is 1 unit. Therefore, if the bandwidth of each link is $B/d$, then $fB/d$ represents the rate at which every node can send to every other node simultaneously.

\subsection{Unidirectional to Bidirectional}\label{sec:unitobidirectional}

Unidirectional topologies are technically feasible on optical testbeds. The optical cable contains two fibers, one for each direction, and the fabric can link them to two distinct end-hosts, thus enabling unidirectional topologies at no additional hardware cost. 

However, in our evaluation, we only use bidirectional topologies.
While unidirectional topologies can be realized by configuring the patch panel in simplex mode, the requisite overlay routing for the reverse path traffic (acks, etc.) is currently only supported using routing rules performed by the host kernel as opposed to the NIC, leading to unpredictable RTTs.
Therefore, we can functionally validate unidirectional topologies on our testbed, but we cannot accurately evaluate their performance.
Note that newer NICs~\cite{bcomp2200g,connectx6} do support hardware offloading for these rules, which we will examine in future work.

While this paper considers unidirectional topologies a lot, many of the techniques can be conveniently applied to bidirectional topologies as well. For example, BFB schedule generation, degree expansion, and Cartesian product can all be used on bidirectional topologies by replacing each bidirectional edge with two opposite unidirectional edges. The resulting degree expanded and Cartesian product topologies still have unidirectional edges in opposite pairs. Although line graph expansion only works within unidirectional topologies, there is a way to convert unidirectional topology and schedule to bidirectional ones with zero performance sacrifice. In this section, we will show how to convert a reverse-symmetric $d$-regular unidirectional topology $G$ and its allgather schedule $A$ to a $2d$-regular bidirectional topology $G'$ and its schedule $A'$ such that $T_L(A)\!=\!T_L(A')$ and $T_B(A)\!=\!T_B(A')$.

Let $g\!:\!V_{G}\!\to\! V_{G^T}$ be the isomorphism from $G$ to $G^T$, then it is trivial to see that $g(A)$ (see Definition~\ref{def:scheduleisomorphism}) is an allgather schedule for $G^T$. Observe that $G'\!=\!G\cup G^T$ is a $2d$-regular bidirectional topology. Consider both $A$ and $g(A)$ as allgather schedules for bidirectional topology $G'$. Schedules $A$ and $g(A)$ use disjoint sets of edges, because they use opposite directions. Thus, we can divide each shard into two halves. Let one half follow schedule $A$ and the other half follow $g(A)$. Let such a schedule be $A'$.

It is trivial to see that $T_L(A)\!=\!T_L(A')$. As for $T_B(A)\!=\!T_B(A')$, it follows the fact that the total data size is halved for each of $A$ and $g(A)$, but the bandwidth per edge is also halved due to the doubling of degree. Note that if $A$ is BW-optimal, then $A'$ is BW-optimal; however, $A'$ is not necessarily Moore optimal if $A$ is Moore optimal.

\section{Reduce-Scatter \& Allgather}\label{sec:reducescatterallgather}

We use tuple $((v,C),(u,w),t)$ to denote that $u$ sends $v$'s chunk $C$ to $w$ at comm step $t$. Node $v$ is the source and destination node of chunk $C$ in allgather and reduce-scatter respectively. A communication schedule is thus a collection of tuples.
\begin{restatable}[Allgather]{definition}{defallgather}\label{def:allgather}
An algorithm $(G,A)$ is an allgather algorithm if for arbitrary $x\in S$ and distinct $u,v\in V_G$, there exists a sequence in $A$:
\begin{multline*}
	((v,C_1),(w_0,w_1),t_1),((v,C_2),(w_1,w_2),t_2),\dots\\
	((v,C_n),(w_{n-1},w_n),t_n),
\end{multline*}
where $w_0\!=\!v$, $w_n\!=\!u$, $t_1\!<\!t_2\!<\!\dots\!<\!t_n$, and $x\!\in\! C_1\!\cap C_2\!\cap\dots\cap C_n$.
\end{restatable}
This sequence serves to broadcast $x$ from $v$ to $u$. A reduce-scatter algorithm has the same definition except $w_0\!=\!u$, $w_n\!=\!v$. In reduce-scatter, we assume any chunk received by a node is immediately reduced with the node's local chunk.

In this paper, many of the techniques are discussed under allgather only. We will show that anything holds in either reduce-scatter or allgather has an equivalent version for the other collective operation. To do so, we use the concept of \emph{transpose graph} from graph theory and define \emph{reverse schedule}. We say a schedule $A$ is \emph{for} topology $G$ if every $((v,C),(u,w),t)\in A$ satisfies $u,v,w\in V_G$ and $(u,w)\in E_G$.
\begin{restatable}[Reverse Schedule]{definition}{defreverseschedule}
    Suppose $A$ is a schedule for $G$. A reverse schedule $A^T$ of $A$ is a schedule for transpose graph $G^T$ such that $((v,C),(u,w),t_{\max}-t+1)\in A^T$ iff $((v,C),(w,u),t)\in A$, where $t_{\max}$ is the max comm step in $A$.
\end{restatable}
It is trivial to see that $T_L(A)=T_L(A^T)$ and $T_B(A)=T_B(A^T)$. Note that $(u,w)\in E_{G^T}$ if and only if $(w,u)\in E_{G}$ by definition of transpose graph.
\begin{restatable}{theorem}{thmscheduleswitch}\label{thm:scheduleswitch}
    If $A$ is a reduce-scatter/allgather schedule for $G$, then $A^T$ is an allgather/reduce-scatter schedule for $G^T$.
\end{restatable}
Theorem~\ref{thm:scheduleswitch} has the following two corollaries:
\begin{restatable}{corollary}{thmscheduleswitchfunc}\label{thm:scheduleswitchfunc}
    Suppose $G\mapsto f(G)$ is a function to construct reduce-scatter/allgather schedule given graph $G$, then $G\mapsto f(G^T)^T$ is a function to construct allgather/reduce-scatter schedule given graph $G$.
\end{restatable}
\begin{restatable}{corollary}{thmscheduleswitchmap}\label{thm:scheduleswitchmap}
    Suppose $(G,A)\mapsto (f(G),f(A))$ is a mapping within reduce-scatter/allgather algorithms, then $(G,A)\mapsto (f(G^T)^T,f(A^T)^T)$ is a mapping within allgather/reduce-scatter algorithms.
\end{restatable}
For example, the line graph expansion in \S\ref{sec:linegraph} can be seen as a mapping within allgather, and the BFB linear program~(\ref{lpmodel}) can be seen as a function to construct allgather schedule. Thus, Corollary~\ref{thm:scheduleswitchfunc} and \ref{thm:scheduleswitchmap} have shown that they both have equivalent versions in reduce-scatter.

In undirected topology, it is well-known that reduce-scatter and allgather are a pair of dual operations such that one can be transformed into another by reversing the communication in schedule~\cite{conc_comp}. It is similar for directed topology but with extra requirement and more complicated transformation. We define the following property for directed graphs:
\begin{definition}[Reverse-Symmetry]
    A digraph $G$ is reverse-symmetric if it is isomorphic to its own transpose graph $G^T$.
\end{definition}
In graph theory, there is a similar concept called \emph{skew-symmetric graph}. Reverse-symmetry is a weaker condition than skew-symmetry.

We define a way to transform the schedule for $G$ into a schedule for $G^T$ based on graph isomorphism:
\begin{restatable}[Schedule Isomorphism]{definition}{defscheduleisomorphism}\label{def:scheduleisomorphism}
    Suppose $G$ and $G'$ are isomorphic. Let $f:V_G\to V_{G'}$ be the graph isomorphism and $A$ be a schedule for $G$, then $f(A)$ is a schedule for $G'$ that $((f(v),C),(f(u),f(w)),t)\!\in\! f(A)$ iff $((v,C),(u,w),t)\!\in\! A$.
\end{restatable}
\begin{restatable}{theorem}{thmrescallather}\label{thm:rescallather}
Suppose $G$ is reverse-symmetric. Let $G^T$ be the transpose graph, and let $f:V_{G^T}\to V_G$ be the isomorphism from $G^T$ to $G$. If $(G,A)$ is a reduce-scatter/allgather algorithm, then $(G,f(A^T))$ is an allgather/reduce-scatter algorithm with $T_L(f(A^T))=T_L(A)$ and $T_B(f(A^T))=T_B(A)$.
\end{restatable}
Theorem~\ref{thm:rescallather} establishes that given any reverse-symmetric topology, if we have either reduce-scatter or allgather, then we can construct both reduce-scatter and allgather. Since allreduce can be achieved by applying a reduce-scatter followed by an allgather, we only need one of reduce-scatter and allgather to construct a complete allreduce algorithm. Furthermore, if the reduce-scatter or allgather algorithm has runtime $T$, then the resulting allreduce algorithm has runtime $2T$.

Most of our base topologies are reverse-symmetric (Table~\ref{table:toposummary}). In addition, all of our expansion techniques also preserve reverse-symmetry. Thus, one can almost always use Theorem~\ref{thm:rescallather} to derive reduce-scatter and allreduce schedules from allgather schedule on our synthesized topologies. For non-reverse-symmetric topologies like generalized Kautz graph, one can apply Corollary~\ref{thm:scheduleswitchfunc} or \ref{thm:scheduleswitchmap} to construct reduce-scatter and allgather separately.

\section{Topology-Schedule Optimality}
\label{app-sec:optimality}

Because our cost model is only concerned with total-hop latency and BW runtime, the optimality of reduce-scatter/allgather algorithm is only related to total-hop latency optimality and BW optimality in this paper. Note that we also consider topology as a dimension that can be optimized, so optimality is discussed in the space of all topology-schedule combinations, i.e., \textit{algorithms} by our definition.

\subsection{Total-Hop Latency Optimality}\label{sec:nodelatencyoptimality}

\begin{definition}[Total-Hop Latency Optimal]
    Given an $N$-node degree-$d$ reduce-scatter/allgather algorithm $(G,A)$, if any other $N$-node degree-$d$ reduce-scatter/allgather algorithm $(G',A')$ satisfies $T_L(A')\!\geq\! T_L(A)$, then $(G,A)$ is total-hop latency optimal.
\end{definition}
Because in reduce-scatter/allgather, every node needs to send a shard of data to every other node, the number of comm steps is lower bounded by the graph diameter:
\begin{restatable}{theorem}{thmdiameter}\label{thm:diameter}
    Every reduce-scatter/allgather algorithm $(G,A)$ satisfies $T_L(A)\!\geq\!\alpha\cdot D(G)$, where $D(G)$ is the diameter of $G$.
\end{restatable}
Because we can always construct a BFB schedule $A$ for topology $G$ with $T_L(A)=\alpha\cdot D(G)$, it follows the corollary:
\begin{restatable}{corollary}{thmlatencyoptimaldiameter}
    An $N$-node degree-$d$ reduce-scatter/allgather algorithm $(G,A)$ is total-hop latency optimal if and only if
    \[
        T_L(A)=\alpha\cdot D(G)=\alpha\cdot\min\{D(G'):|V_{G'}|=N,\deg(G')=d\}.
    \]
\end{restatable}
The minimum diameter of a directed graph given a number of vertices and degree is still an open question. One can check \textit{degree/diameter problem}~\cite{moore_bound} for more information. However, as a close upper bound of number of vertices given degree and diameter, the \emph{Moore bound} for digraph is sufficient to tell the total-hop latency optimality in most cases.
\begin{definition}[Moore Bound]
    Let $G$ be any degree-$d$ digraph of diameter $k$. The Moore bound is an upper bound on the number of vertices in $G$:
    \[
        M_{d,k}=\sum_{i=0}^k d^i=\frac{d^{k+1}-1}{d-1}.
    \]
\end{definition}
\begin{definition}[Moore Optimal]\label{def:mooreopt}
     Let $(G,A)$ be an $N$-node degree-$d$ reduce-scatter/allgather algorithm with $T_L(A)=k\alpha$, then $(G,A)$ is Moore optimal if $N>M_{d,k-1}$.
\end{definition}
Because for any degree-$d$ digraph $G$, $D(G)\geq k$ must be true as long as $|V_G|>M_{d,k-1}$, Moore optimality is a stronger condition than total-hop latency optimality. We define a function $T^*_L$ such that $T^*_L(N,d)$ is equal to the Moore optimal total-hop latency of $N$-node degree-$d$ reduce-scatter/allgather algorithms.

\subsection{Bandwidth Optimality}

\begin{definition}[Bandwidth Optimal]
    Given an $N$-node degree-$d$ reduce-scatter/allgather algorithm $(G,A)$, if any other $N$-node degree-$d$ reduce-scatter/allgather algorithm $(G',A')$ satisfies $T_B(A')\geq T_B(A)$, then $(G,A)$ is BW-optimal.
\end{definition}
In reduce-scatter/allgather, each node needs to send/receive at least $M\cdot\frac{N-1}{N}$ amount of data. Thus, the following holds:
\begin{restatable}{theorem}{thmbandwidthlowerbound}
    $\frac{M}{B}\cdot\frac{N-1}{N}$ is a lower bound of $T_B(A)$ for any $N$-node reduce-scatter/allgather algorithm $(G,A)$.
\end{restatable}
Note that one can always construct a ring of degree $d$ by sending $d$ parallel edges from one node to the next node. The trivial ring reduce-scatter/allgather schedule has $\frac{M}{B}\cdot\frac{N-1}{N}$ BW runtime. Therefore, we have:
\begin{restatable}{corollary}{thmbwoptequation}\label{thm:bwoptequation}
    An $N$-node reduce-scatter/allgather algorithm $(G,A)$ is BW-optimal if and only if $T_B(A)=\frac{M}{B}\cdot\frac{N-1}{N}$.
\end{restatable}
We define a function $T^*_B$ such that $T^*_B(N)=\frac{M}{B}\cdot\frac{N-1}{N}$ is the optimal BW runtime of $N$-node reduce-scatter/allgather algorithms. From Corollary~\ref{thm:bwoptequation}, we have the following necessary and sufficient condition for BW optimality:
\begin{restatable}{theorem}{thmbwopt}\label{thm:bwopt}
    An allgather algorithm $(G,A)$ is BW-optimal if and only if:
    \begin{enumerate}
        \item\label{thm:bwopt1} $\frac{1}{B/d}\!\sum_{((v,C),(u,w))\in A_t}\!|C|\!=\!T_B(A_t)$ for all $(u,v)\!\in\! E_G$ and $t\!\in\!\{1,\dots,t_{\max}\}$. $A_t$ is the subschedule of $A$ at comm step $t$.
        \item\label{thm:bwopt2} Pick any distinct $u,v\in V_G$. For each $x\in S$, there exists a unique $((v,C),(w,u),t)\in A$ such that $x\in C$.
    \end{enumerate}
\end{restatable}
Condition~\ref{thm:bwopt1} ensures that at each comm step, every link of topology $G$ has equal workload, so no link finishes early and results in waste of bandwidth. Condition~\ref{thm:bwopt2} ensures that no piece of data is received twice by some node, so no duplicated send exists.

\subsection{Allreduce Optimality}\label{sec:allreduceopt}

In this paper, we construct an allreduce algorithm through a reduce-scatter followed by allgather. In such construction, the lower bound of allreduce algorithm is $2(T^*_L(N,d)+T^*_B(N))$. To compare this with the lower bound of any allreduce construction, in \cite{patarasuk2009bandwidth}, the authors have proved that $2T^*_B(N)$ is indeed the lower bound of BW runtime of any allreduce algorithm. As for total-hop latency, a reduce-scatter followed by allgather has at least $2D(G)$ number of comm steps, so $2T^*_L(N,d)$ is the lower bound of total-hop latency. Although one can use all-to-all to construct an allreduce with number of comm steps equal to one diameter $D(G)$ (lower bound being $T^*_L(N,d)$ instead of $2T^*_L(N,d)$),
the lower bound of BW runtime for all-to-all is $\frac{M}{B}\cdot(N-1)=N\cdot T^*_B(N)$, which is much worse than $2T^*_B(N)$.

There is also another way of constructing allreduce: reduce followed by broadcast. In such an approach, the number of comm steps can be twice the radius of $G$ instead of twice the diameter. However, the Moore bound for graph diameter also applies to graph radius, so $2T^*_L(N,d)$ is still a lower bound of allreduce via reduce+broadcast. By Theorem~\ref{thm:genkautzlatency}, the total-hop latency optimal allreduce via reduce+broadcast is at most $2\alpha$ lower than the total-hop latency of generalized Kautz graph can do with reduce-scatter plus allgather. Furthermore, reduce+broadcast is usually poor in BW performance.

\subsection{Computational Cost}\label{sec:computeopt}

In this paper, we omit the computational cost of reduction operation in performance analysis. While this approach is commonly adopted in previous literature \cite{wang2020blink, SCCL, shah2021synthesizing, dbtpaper}, we give a formal reasoning why this approach is legitimate. It is not only because computational cost is generally orders of magnitude lower than network cost, but also because computational cost can be incorporated into network cost.

Assume a cost model where computation and network communication do not overlap at each node.\footnote{Otherwise, the computational cost would be even more negligible.} In particular, at each comm step of reduce-scatter, the computation to reduce chunks happens immediately after the node receives all chunks and before the node starts to send out chunks for the next comm step. We adopt notations from \cite{conc_comp}, where $\gamma$ denotes the computational time cost per size of data. Like total-hop latency and BW runtime, we also let $T_C(A)$ be the total time spent on computation by schedule $A$. As argued in \cite{conc_comp}, a lower bound of computational cost is $T_C\geq M\cdot\gamma\cdot\frac{N-1}{N}$ for both reduce-scatter and allreduce, which is identical to the BW optimality of reduce-scatter and half of that of allreduce. The following theorem shows that BW runtime of a schedule can act as an upper bound for the computational time.
\begin{restatable}{theorem}{thmcomputebound}\label{thm:computebound}
    Given a reduce-scatter algorithm $(G,A)$, suppose $T_B(A)=\frac{M}{B}\cdot y$, then $T_C(A)\leq M\cdot\gamma\cdot y$.
\end{restatable}
The rationale behind Theorem~\ref{thm:computebound} is that the amount of computation for any node at a given comm step equals the amount of data the node receives during that comm step. Thus, \textbf{as we balance network transmission, it naturally leads to a more balanced computation.} With Theorem~\ref{thm:computebound}, if the BW runtime of some allreduce schedule $A$ is $T_B(A)=2\frac{M}{B}\cdot y$, then $T_B(A)+T_C(A)\leq M\cdot(\frac{2}{B}+\gamma)\cdot y$. We can thus simply define $B'=(\frac{1}{B}+\frac{\gamma}{2})^{-1}$, and then $2\frac{M}{B'}\cdot y$ can represent the sum of BW runtime and computational runtime altogether. The value of $y$ is all that matters. The following corollary shows that if an algorithm is BW-optimal, then such representation is exact.
\begin{restatable}{corollary}{thmbwopttocomputeopt}
    If allreduce algorithm $(G,A)$ is BW-optimal, i.e., $T_B(A)\!=\!2\frac{M}{B}\!\cdot\!\frac{N-1}{N}$, then $T_C(A)\!=\!M\!\cdot\!\gamma\!\cdot\!\frac{N-1}{N}$ and $T_B(A)+T_C(A)\!=\!2M\!\cdot\!(\frac{1}{B}+\frac{\gamma}{2})\!\cdot\!\frac{N-1}{N}$.
\end{restatable}
When profiling a testbed, one can simply derive the value of $\frac{1}{B}+\frac{\gamma}{2}$ using BW-optimal topologies and use it as the new $1/B$ to apply the results of this paper. While it is still possible for two schedules with the same BW runtime to have different computational runtimes, such difference is bounded by the aforementioned theorems and orders of magnitude smaller than BW runtime.

\section{Optimality of Expansion Techniques}
\label{app-sec:expansionperf}

In this section, we provide formal definitions and detailed performance analysis of expansion techniques.

\subsection{Line Graph Expansion}
\label{app-subsec:linegraph}

\begin{definition}[Line Graph]\label{linegraphdef}
    Given a directed graph (or digraph) $G$, each edge $(u,v)\in E_G$ corresponds to a vertex $uv$ in the line graph $L(G)$. For every $uv,vw$ pair in $V_{L(G)}$, there exists an edge $(uv,vw)\in E_{L(G)}$.
\end{definition}
\linescheduledef*

The following theorem gives the performance of the expanded schedule:
\begin{restatable}{theorem}{thmlinegraphperformance}\label{thm:linegraphperformance}
    Given a $d$-regular topology $G$, if $(G,A_G)$ is an $N$-node allgather algorithm, then $(L(G),A_{L(G)})$ is a $dN$-node allgather algorithm satisfying:
    \begin{gather}
        T_L(A_{L(G)})=T_L(A_G)+\alpha,\label{eq:linelatency}\\
        T_B(A_{L(G)})\leq T_B(A_G)+\frac{M}{B}\cdot\frac{1}{N}.\label{eq:linebw}
    \end{gather}
\end{restatable}
From Theorem~\ref{thm:linegraphperformance}, one can see that the performance of the expanded schedule depends on that of the base schedule.
Theorem~\ref{thm:linegraphperformance} makes an implicit assumption that $T_B(A_G,M,B)\!=\!\tau(M/B)$ for some constant $\tau$. This assumption, suggesting that $T_B$ scales linearly with data size and inversely with bandwidth, should hold for any reasonably designed schedule.

Consequently, if we apply line graph expansion $n$ times, the performance of the expanded schedule is:
\begin{restatable}{corollary}{thmlinegraphconclusion}\label{thm:linegraphconclusion}
    Given a $d$-regular topology $G$, if $(G,A_G)$ is an $N$-node allgather algorithm with $T_B(A_G,M,B)\!=\!\tau(M/B)$ for some constant $\tau$, then $(L^n(G),A_{L^n(G)})$ is a $d^nN$-node allgather algorithm satisfying:
    \begin{gather}
        T_L(A_{L^n(G)})=T_L(A_G)+n\alpha,\\
        T_B(A_{L^n(G)})\leq T_B(A_G)+\frac{M}{B}\cdot\frac{d}{d-1}\left(\frac{1}{N}-\frac{1}{d^nN}\right).\label{eq:linenbw}
    \end{gather}
\end{restatable}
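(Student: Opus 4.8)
The plan is to prove the corollary by induction on $n$, using Theorem~\ref{thm:linegraphperformance} as the single-step building block. Write $G_0 = G$ and $G_k = L^k(G)$, so that $G_{k+1} = L(G_k)$, and set $N_k = |V_{G_k}|$. By Property~(\ref{lineproperty1}) of the line graph, each expansion preserves $d$-regularity and multiplies the vertex count by $d$; hence $G_k$ is $d$-regular with $N_k = d^k N$. This supplies, at every stage, the structural hypotheses (a $d$-regular topology on a known number of nodes) needed to invoke Theorem~\ref{thm:linegraphperformance} on $(G_k, A_{G_k})$.

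Before iterating, I would verify the one hypothesis of Theorem~\ref{thm:linegraphperformance} that is not purely structural: that the reduce-scatter schedule at each stage has bandwidth cost of the form $\tau_k(M/B)$ for a constant $\tau_k$. This holds at stage $0$ by assumption, and it is preserved under expansion because $T_B(A,M,B) = \sum_t\max_{(u,v)}\frac{1}{B/d}\sum_{((w,C),(u,v))\in A_t}|C|$ is homogeneous of degree one in the chunk sizes, hence linear in $M/B$ once the relative chunk sizes are fixed; the construction in Definition~\ref{def:linegraph} only copies and reroutes existing chunks and appends one full-shard comm step, so it keeps the schedule in this linear class. Establishing this preservation is the only genuine subtlety, since without it one could not legally re-apply Theorem~\ref{thm:linegraphperformance} and the induction would not close.

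With that in hand the inductive step is immediate: applying Theorem~\ref{thm:linegraphperformance} to the $d$-regular $N_k$-node algorithm $(G_k, A_{G_k})$ gives
\begin{align*}
T_L(A_{G_{k+1}}) &= T_L(A_{G_k}) + \alpha, \\
T_B(A_{G_{k+1}}) &\leq T_B(A_{G_k}) + \frac{M}{B}\cdot\frac{1}{N_k}.
\end{align*}
The latency recurrence telescopes directly to $T_L(A_{L^n(G)}) = T_L(A_G) + n\alpha$. Summing the bandwidth recurrence over $k = 0,\dots,n-1$ yields $T_B(A_{L^n(G)}) \le T_B(A_G) + \frac{M}{B}\sum_{k=0}^{n-1} N_k^{-1}$.

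The remaining work is the geometric-series evaluation (for the regime of interest $d \ge 2$). Substituting $N_k = d^k N$,
\begin{align*}
\sum_{k=0}^{n-1}\frac{1}{d^k N} = \frac{1}{N}\cdot\frac{1 - d^{-n}}{1 - d^{-1}} = \frac{d}{d-1}\left(\frac{1}{N} - \frac{1}{d^n N}\right) = \frac{d}{d-1}\left(\frac{1}{|V_G|} - \frac{1}{|V_{L^n(G)}|}\right),
\end{align*}
using $|V_G| = N$ and $|V_{L^n(G)}| = d^n N$. Multiplying by $M/B$ reproduces exactly the bound~\eqref{eq:linenbw}, completing the induction. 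I expect no obstacle in this final step beyond bookkeeping; the one place that genuinely requires care is the preservation-of-linearity check above, which is what licenses the repeated use of the single-step theorem.
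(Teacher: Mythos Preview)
Your proposal is correct and is exactly the intended derivation: the paper states this as an immediate corollary of Theorem~\ref{thm:linegraphperformance} without giving a separate proof, and the natural reading is precisely your induction on $n$ followed by summing the geometric series $\sum_{k=0}^{n-1}(d^kN)^{-1}$. Your explicit check that the linearity hypothesis $T_B=\tau(M/B)$ is preserved under Definition~\ref{def:linegraph} is a detail the paper leaves implicit (cf.\ its footnote that this assumption ``almost always holds''), so if anything you are being more careful than the source.
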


Speaking of optimality:
\begin{restatable}{theorem}{thmlinemooreopt}\label{thm:linemooreopt}
    $(L^n(G),A_{L^n(G)})$ is Moore optimal if and only if $(G,A_G)$ is Moore optimal.
\end{restatable}

\begin{restatable}{theorem}{thmlinebwopt}\label{thm:linebwopt}
    If $(G,A_G)$ is BW-optimal with $N$ nodes, then $T_B(A_{L^n(G)})/T_B^*(d^nN)\leq 1+[(d-1)N]^{-1}$ for all $n$.
\end{restatable}

As mentioned in the main text, by Theorem~\ref{thm:linebwopt}, the key metric for the quality of base graph is how large it is while achieving both Moore and BW optimality. Currently, our largest such base graph that works for any even degree is Hamming graph $H(2,\!1\!+\!d/2)$, which has $(1\!+\!d/2)^2\!=\!\Theta(d^2)$ number of nodes. The corresponding line graph expanded topology is always Moore optimal and at most $O(1/d^3)$ away from BW optimality by Theorem~\ref{thm:linebwopt}.

Line graph expansion is closely related to BFB schedule for two reasons: (1) most of our base topologies like complete bipartite graph and Hamming graph use BFB schedule as the base schedule, and (2) the line graph expansion of BFB schedule is still a BFB schedule. To see the performance bound in Theorem~\ref{thm:linegraphperformance} is tight, we have the following results in the context of BFB schedule:
\begin{restatable}{theorem}{thmshortestpathlinegraph}\label{thm:shortestpathlinegraph}
    Let $A_G$ be a BFB allgather schedule for $G$ with $|N^+(u)|\!>\!1$ for all $u\in V_G$, then the expanded schedule $A_{L(G)}$ is a BFB allgather schedule for $L(G)$. In particular, if $A_G$ is the optimal BFB schedule for $G$, then $A_{L(G)}$ is the optimal BFB schedule for $L(G)$ satisfying:
    \begin{equation}\label{eq:shortestpathT_B}
        T_B(A_{L(G)})=T_B(A_G)+\frac{M}{B}\cdot\frac{1}{N}.
    \end{equation}
\end{restatable}
\begin{restatable}{corollary}{thmshortestpathlinegraphcoro}\label{thm:shortestpathlinegraphcoro}
    Let $A_G$ be a BFB allgather schedule for $G$ with $|N^+(u)|\!>\!1$ for all $u\in V_G$, then the expanded schedule $A_{L^n(G)}$ is a BFB allgather schedule for $L^n(G)$. In particular, if $A_G$ is the optimal BFB schedule for $G$, then $A_{L^n(G)}$ is the optimal BFB schedule for $L^n(G)$ satisfying:
    \[
        T_B(A_{L^n(G)})=T_B(A_G)+\frac{M}{B}\cdot\frac{d}{d-1}\left(\frac{1}{N}-\frac{1}{d^nN}\right).
    \]
\end{restatable}

\subsection{Degree Expansion}
\label{app-subsec:degree}

\begin{restatable}[Degree Expanded Topology]{definition}{degexptopodef}\label{def:degreeexp}
Given an $N$-node $d$-regular topology $G$ without self-loops, construct the degree expanded $nN$-node $nd$-regular topology $G*n$:
\begin{enumerate}[label=\arabic*.]
    \item For each vertex $v\in V_G$, add $v_1,\dots,v_n$ to $V_{G*n}$,
    \item For each edge $(u,v)\in E_G$, add $(u_i,v_j)$ to $E_{G*n}$ for all $i,j$ including $i=j$.
\end{enumerate}
\end{restatable}
\degexpscheduledef*
\begin{restatable}{theorem}{thmdegexp}\label{thm:degexp}
    Given a $d$-regular topology $G$ without self loops, if $(G,A_G)$ is an $N$-node allgather algorithm with $T_B(A_G,M,B)=\tau(M/B)$ for some constant $\tau$, then $(G*n,A_{G*n})$ is an $nN$-node allgather algorithm satisfying:
    \begin{gather}
        T_L(A_{G*n})=T_L(A_G)+\alpha,\label{eq:degexplatency}\\
        T_B(A_{G*n})=T_B(A_G)+\frac{M}{B}\cdot\frac{n-1}{nN}.\label{eq:degexpbw}
    \end{gather}
\end{restatable}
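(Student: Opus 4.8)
The plan is to establish three claims in turn: that $(G*n,A_{G*n})$ is a valid reduce-scatter algorithm on $nN$ nodes, the latency identity~\eqref{eq:degexplatency}, and the bandwidth identity~\eqref{eq:degexpbw}. Two structural facts from Definition~\ref{def:degreeexp} are immediate and used throughout: $|V_{G*n}|=nN$, and since $G$ is $d$-regular and (by hypothesis) self-loop-free, each copy $u_i$ has exactly the $nd$ distinct out-neighbours $\{v_l : (u,v)\in E_G,\ l\in\{1,\dots,n\}\}$, so $G*n$ is $nd$-regular. I fix the data size at $M$, so a shard in $G*n$ has size $M/(nN)$, which is $1/n$ times the shard size $M/N$ in $G$; this factor drives the entire bandwidth accounting.

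For correctness I fix an arbitrary destination $w_j$ and show it finishes holding $\bigoplus_{x_k\in V_{G*n}}S_{x_k}^{w_j}$. Comm steps $2,\dots,t_{\max}+1$ are exactly the lift of $A_G$ into ``layer $j$'': by part~2 of Definition~\ref{def:degexpschedule}, every $((w,C),(u,v),t)\in A_G$ contributes $((w_j,C),(u_i,v_j),t+1)$ for all source copies $i$. Restricting to destination $w_j$, all $n$ copies $v_1,\dots,v_n$ of each base vertex feed their $w_j$-bound content into the single layer-$j$ path mimicking $A_G$'s route from $v$ to $w$, with the contributions summed at the first hop and reduced downstream exactly as $A_G$ reduces. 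Since $A_G$ is a complete reduce-scatter, every base vertex (including every out-neighbour of $w$) delivers its content to $w$; hence the lift delivers to $w_j$ the reduction of everything residing at the non-$w$ copies just before step~2, together with $w_j$'s own shard. The only pieces unaccounted for are the sibling copies $w_k$ with $k\neq j$, which $A_G$ never routes because $w$ is its root. These are precisely what part~1 of Definition~\ref{def:degexpschedule} handles: $w_k$ splits its full $w_j$-bound shard into $nd$ equal chunks and scatters them across its $nd$ out-neighbours at comm step~$1$. Here the \textbf{no-self-loop} hypothesis is essential: the out-neighbours of $w_k$ are copies of base vertices $v\neq w$, so they are non-$w$ copies and are therefore swept up by the layer-$j$ lift and carried on to $w_j$. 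Adding these contributions yields exactly $\bigoplus_{x_k}S_{x_k}^{w_j}$. I expect this sibling-copy bookkeeping, and isolating precisely where the no-self-loop hypothesis enters, to be the main obstacle; the downstream reduction-composition is routine once $A_G$'s correctness is invoked.

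The latency is immediate: part~1 of the schedule occupies comm step~$1$ and part~2 occupies comm steps $2,\dots,t_{\max}+1$, so $t_{\max}(A_{G*n})=t_{\max}(A_G)+1$ and thus $T_L(A_{G*n})=T_L(A_G)+\alpha$, which is~\eqref{eq:degexplatency}. For bandwidth I evaluate each comm step with the cost model at degree $nd$, i.e.\ $T_B(A_t)=\frac{nd}{B}\max_{(p,q)\in E_{G*n}}\sum_{((w,C),(p,q))\in A_t}|C|$. At comm step~$1$, edge $(u_i,v_\alpha)$ (the $\alpha$-th out-edge of $u_i$) carries chunk $C_\alpha$ for each of the $n-1$ destinations $u_b$, $b\neq i$; since every chunk has the common size $\frac{1}{nd}\cdot\frac{M}{nN}$ the per-edge load is uniform, giving
\begin{equation*}
T_B(A_{G*n,1})=\frac{nd}{B}\cdot(n-1)\cdot\frac{1}{nd}\cdot\frac{M}{nN}=\frac{M}{B}\cdot\frac{n-1}{nN}.
\end{equation*}

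For comm step $t+1$, the load on a lifted edge $(u_i,v_j)$ is the $A_G$-load on the base edge $(u,v)$ scaled by the shard factor $1/n$, and this value is identical for every choice of source copy $i$ and destination layer $j$, because part~2 replicates each base tuple uniformly across all copies. Hence the maximum over $E_{G*n}$ reduces to the maximum over $E_G$ times $1/n$, and the $1/n$ shard scaling cancels the $n$-fold degree inflation:
\begin{equation*}
T_B(A_{G*n,t+1})=\frac{nd}{B}\cdot\frac{1}{n}\max_{(u,v)\in E_G}\sum_{((w,C),(u,v))\in A_{G,t}}|C|=T_B(A_{G,t}).
\end{equation*}
Summing over $t=1,\dots,t_{\max}$ gives $\sum_{t\ge 2}T_B(A_{G*n,t})=T_B(A_G)$, and adding the comm-step-$1$ term proves~\eqref{eq:degexpbw}. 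This derivation needs no balancedness assumption on $A_G$; and when $A_G$ is bandwidth optimal, the added $\frac{n-1}{nN}$ promotes $\frac{N-1}{N}$ to $\frac{nN-1}{nN}$, confirming that degree expansion preserves bandwidth optimality.
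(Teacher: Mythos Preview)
Your proposal is correct and follows essentially the same route as the paper: the correctness argument uses the identical case split (non-sibling copies handled directly by the lift in step~2, sibling copies $w_k$ first scattered in step~1 to out-neighbours which are non-$w$ copies by the no-self-loop hypothesis, then swept up by the lift), and the bandwidth accounting is the same per-step decomposition. The only cosmetic difference is that the paper keeps the shard size fixed by working at parameters $(nM,nB)$ and then invokes the linearity hypothesis $T_B(A_G,M,B)=\tau(M/B)$ to rescale, whereas you keep $(M,B)$ fixed and absorb the $1/n$ into the shard size directly---both yield the same identities.
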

\begin{restatable}{corollary}{thmdegexpbwopt}
    If $(G,A_G)$ is BW-optimal and $T_B(A_G,M,B)\!=\!\tau(M/B)$ for some $\tau$, then $(G*n,A_{G*n})$ is BW-optimal.
\end{restatable}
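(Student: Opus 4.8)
The plan is to reduce the claim to the runtime formula already established in Theorem~\ref{thm:degexp}, combined with the characterization of bandwidth optimality in terms of the bandwidth-latency cost. Recall that an $N$-node reduce-scatter schedule is bandwidth optimal precisely when every node transmits the minimum possible volume $M\cdot\frac{N-1}{N}$ and the per-step link loads are balanced. By the lower-bound argument for $T_B$ (for any node $v$, $T_B\geq\frac{1}{B}\cdot(\text{total volume sent by }v)\geq\frac{1}{B}\cdot\frac{M(N-1)}{N}$, with the first inequality tight only under balanced loads), these two structural conditions hold if and only if $T_B$ attains its minimum value $T_B^\star(N)=\frac{1}{B}\cdot\frac{M(N-1)}{N}$. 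Hence it suffices to show that the degree-expanded algorithm $(G*n,A_{G*n})$ attains $T_B^\star(nN)=\frac{1}{B}\cdot\frac{M(nN-1)}{nN}$, the optimal value for the $nN$-node topology $G*n$.

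First I would instantiate the constant $\tau$ from the hypothesis of Theorem~\ref{thm:degexp}. Since $(G,A_G)$ is bandwidth optimal, its bandwidth-latency cost equals the minimum, so $T_B(A_G,M,B)=\frac{M(N-1)}{NB}$ and therefore $\tau=\frac{N-1}{N}$, which certifies that the theorem's assumption $T_B(A_G,M,B)=\tau(M/B)$ is met. Substituting $T_B(A_G)=\frac{M(N-1)}{NB}$ into the bandwidth bound \eqref{eq:degexpbw} yields
\[
T_B(A_{G*n})=\frac{M(N-1)}{NB}+\frac{M}{B}\cdot\frac{n-1}{nN}=\frac{M}{B}\cdot\frac{n(N-1)+(n-1)}{nN}=\frac{M(nN-1)}{nNB},
\]
which is exactly $T_B^\star(nN)$. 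Thus the degree-expanded algorithm attains the optimal bandwidth-latency cost on $nN$ nodes.

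The step I expect to require the most care — and where the real content lies — is justifying that attaining $T_B^\star$ genuinely reflects the two structural conditions, i.e., confirming that $G*n$ has balanced per-step loads rather than merely the correct total time. I would verify this directly from Definition~\ref{def:degexpschedule}: in step~1 every node partitions its shard into equal-sized chunks and sends exactly one chunk along each of its $nd$ out-edges, which is balanced by construction; in step~2 each base transmission $((w,C),(u,v),t)$ is replicated symmetrically across all $n$ destination copies as $((w_j,C),(u_i,v_j),t+1)$, so the load on the expanded edges at step $t+1$ mirrors the load on the base edge at step $t$ up to the uniform $n$-fold replication. Consequently, if $(G,A_G)$ is balanced at every comm step, so is $A_{G*n}$; this is the balanced-load preservation property that degree expansion enjoys. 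Together with the per-node volume matching the minimum (certified by the runtime computation above, since otherwise $T_B(A_{G*n})$ would strictly exceed $T_B^\star(nN)$), both conditions of bandwidth optimality hold, which completes the argument.
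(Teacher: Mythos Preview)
Your proposal is correct and follows the same route as the paper: substitute the bandwidth-optimal value $T_B(A_G)=\frac{M}{B}\cdot\frac{N-1}{N}$ into equation~\eqref{eq:degexpbw} of Theorem~\ref{thm:degexp} and verify that the result equals $T_B^\star(nN)$. Your additional structural verification (balanced per-step loads in $A_{G*n}$) is sound but unnecessary once you have the numerical equality, since Corollary~\ref{thm:bwoptequation} already establishes that attaining $T_B^\star$ is equivalent to bandwidth optimality; the paper simply invokes Theorem~\ref{thm:degexp} and stops there.
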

Degree expansion preserves BW optimality. As for total-hop latency of degree expanded topology, observe that $T_L^*(N,d)=\Theta(\log_d N)$ and $\log_{nd} nN<\log_d N$, so $T_L^*$ decreases as we apply degree expansion. Since $T_L$ increases in degree expansion, the total-hop latency optimality is not preserved.

\subsection{Cartesian Product Expansion}
\label{app-subsec:cartesian}

\cartesianprodtopodef*
Definition \ref{def:cartesianprodtopodef} generalizes to Cartesian product of multiple digraphs: $G_1\square G_2\square G_3\!=\!(G_1\square G_2)\square G_3$. The Cartesian product of $n$ identical digraphs is denoted as Cartesian power $G^{\square n}$.

\begin{definition}[Schedule of Cartesian Power]
Given an allgather schedule $A_G$ for topology $G$ and $n\in\mathbb{N}$, construct the schedule $A_{G^{\square n}}$ for $G^{\square n}$:
\begin{enumerate}[label=\arabic*.]
    \item Construct the schedule $A^{(1)}$ as follows:
    \item For $j=1,\dots,n$, for each $((w,C),(u,v),t)\in A_G$, add
    \[
        (((\mathbf{x},w,\mathbf{z}),C),((\mathbf{y},u,\mathbf{z}),(\mathbf{y},v,\mathbf{z})),t+(j-1)t_{\max})
    \]
    to $A^{(1)}$ for all $\mathbf{x},\mathbf{y}\in V_G^{j-1}$ and $\mathbf{z}\in V_G^{n-j}$. $t_{\max}$ is the max comm step in $A_G$.
    \item Similarly, construct $A^{(i)}$ for $i\!=\!2,\dots,n$ that each vertex $\mathbf{v}$ in $A^{(1)}$ is shifted by $i\!-\!1$ to $(\mathbf{v}[n\!-\!i\!+\!2\!:\!n],\mathbf{v}[1\!:\!n\!-\!i\!+\!1])$.
    \item Divide each shard into $n$ equal-sized subshards. Construct schedule $A_{G^{\square n}}$ such that $A^{(i)}$ performs allgather over the $i$-th subshards of all nodes.
\end{enumerate}
\end{definition}

\begin{restatable}{theorem}{thmcartesianpowperformance}\label{thm:cartesianpowperformance}
    Given a $d$-regular topology $G$, if $(G,A_G)$ is an $N$-node allgather algorithm with $T_B(A_G,M,B)=\tau(M/B)$ for some constant $\tau$, then $G^{\square n}$ is an $nd$-regular topology, and $(G^{\square n},A_{G^{\square n}})$ is an $N^n$-node allgather algorithm satisfying:
    \begin{gather}
        T_L(A_{G^{\square n}})=n\cdot T_L(A_G),\label{eq:cartesianlatency}\\
        T_B(A_{G^{\square n}})=T_B(A_G)\cdot\frac{N}{N-1}\cdot\frac{N^n-1}{N^n}.\label{eq:cartesianbw}
    \end{gather}
\end{restatable}
We then have the following corollary:
\begin{restatable}{corollary}{thmcartesianpowerbwopt}
   If $(G,A_G)$ is BW-optimal and $T_B(A_G,M,B)\!=\!\tau(M/B)$ for some $\tau$, then $(G^{\square n},A_{G^{\square n}})$ is BW-optimal.
   \label{coro:CartesianPowerExpansion}
\end{restatable}
Like degree expansion, Cartesian power expansion does not preserve total-hop latency optimality.

We use BFB schedule generation when dealing with Cartesian product of distinct topologies:
\begin{restatable}{theorem}{thmspathcartesianexpopt}\label{thm:spathcartesianexpopt}
    Let $G_1,G_2,\dots,G_n$ be topologies that
    \begin{enumerate}
        \item $G_1,\dots,G_n$ are nontrivial simple digraphs;
        \item Every $G_i$ has BW-optimal BFB allgather schedule.
    \end{enumerate}
    Then, the optimal BFB allgather schedule, i.e. the schedule generated by BFB LP {\normalfont (\ref{lpmodel})}, for $G_1\square\dots\square G_n$ is also BW-optimal. The total-hop latency of the schedule equals $\alpha\cdot D(G_1\square\dots\square G_n)\!=\!\alpha\cdot\sum_iD(G_i)$.
\end{restatable}
The BFB schedule generation can also be used when individual topologies do not have BW-optimal BFB schedules; however, in such a case, we do not have performance bound for the schedule of the Cartesian product.

\section{BFB Schedule Generation}\label{app:shortestpath}

The LP formulation corresponding to $u_2$ in Figure~\ref{fig:shortestpath-illustration} is:
\begin{equation*}
    \begin{array}{lr@{}ll}
    \text{minimize}  & U_{u_2,t}\phantom{x_{v_1,w_1}} & &\\
    \text{subject to}& \displaystyle x_{v_1,(w_1,u_2),t}&\leq U_{u_2,t}, & \\
                      & \displaystyle x_{v_1,(w_2,u_2),t}+x_{v_2,(w_2,u_2),t}&\leq U_{u_2,t}, & \\
                      & \displaystyle x_{v_2,(w_3,u_2),t}&\leq U_{u_2,t}, & \\
                     & \displaystyle x_{v_1,(w_1,u_2),t}+x_{v_1,(w_2,u_2),t}& =1, & \\
                      & \displaystyle x_{v_2,(w_2,u_2),t}+x_{v_2,(w_3,u_2),t}& =1, & \\
                     & 0\leq x_{v,(w,u_2),t}&\leq 1. & \forall v,w
    \end{array}
\end{equation*}

\begin{restatable}[BFB schedule]{definition}{shortestpathdef}\label{def:shortestpath}
    An allgather schedule $A$ for $G$ is a BFB schedule if $A$ satisfies: $((v,C),(w,u),t)\in A$ only if $d(v,u)=d(v,w)+1=t$.
\end{restatable}
\begin{restatable}{theorem}{thmshortestpath}\label{thm:shortestpath}
    A schedule $A$ for $G$ is a BFB allgather schedule if and only if the following are satisfied:
    \begin{enumerate}
        \item\label{thm:shortestpath1} If $((v,C),(w,u),t)\in A$, then $d(v,u)=d(v,w)+1=t$;
        \item\label{thm:shortestpath2} For any distinct $u,v\in V_G$, the collection of chunks $\mathcal{C}_v=\{C\ |\ ((v,C),(w,u),t)\in A\}$ satisfies $S=\bigcup_{C\in\mathcal{C}_v}C$.
    \end{enumerate}
\end{restatable}
Condition \ref{thm:shortestpath1} ensures the schedule follows the breadth-first broadcast order. Condition \ref{thm:shortestpath2} ensures every node receives the entire shard from every other node and thus a valid allgather.

\subsection{Optimality}\label{sec:bfsoptimality}

\begin{restatable}{theorem}{thmbfslatency}
    If $A$ is a BFB schedule for $G$, then $T_L(A)=\alpha\cdot D(G)$.
\end{restatable}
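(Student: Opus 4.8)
The plan is to reduce the claim to computing $t_{\max}$, the largest comm step appearing in $A$. By the definition of the node latency component we have $T_L(A) = t_{\max}\alpha$, so $T_L(A)/\alpha = t_{\max}$, and it suffices to show $t_{\max} = D(G)$. The engine of the whole argument is a single identity read off from the defining biconditional of a shortest-path schedule (Definition \ref{def:shortestpath}): every tuple $((v,C),(u,w),t)\in A$ satisfies $t = D(G) + 1 - d(u,v)$. This exhibits $t$ as a strictly decreasing affine function of the source-to-destination distance $d(u,v)$, so determining $t_{\max}$ is equivalent to determining $\min d(u,v)$ over the active tuples.

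For the upper bound, I would observe that the condition forces $d(u,v) = d(w,v)+1 \geq 1$, since $d(w,v)\geq 0$ (equivalently, $d(u,v)=0$ would demand $d(w,v)=-1$). Hence $t = D(G)+1-d(u,v) \leq D(G)$ for every tuple, giving $t_{\max}\leq D(G)$. The companion fact $d(u,v)\leq D(G)$, holding because $D(G)$ is the maximum pairwise distance in a (strongly) connected $G$, additionally confirms that every step lies in $\{1,\dots,D(G)\}$, though this is not needed for the bound itself.

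For the lower bound I would exhibit an active tuple realizing step $D(G)$, i.e. one with $d(u,v)=1$. Taking any edge $(u,v)\in E_G$ and setting $w=v$, the distance condition reads $d(u,v)=d(v,v)+1=1=D(G)+1-t$, which is satisfied precisely at $t=D(G)$; by the biconditional in Definition \ref{def:shortestpath} the corresponding tuple lies in $A$ (it is exactly the final direct delivery of $v$'s data into $v$ from its in-neighbor $u$). Thus $\min d(u,v)=1$ over active tuples, so $t_{\max}\geq D(G)$, and combining the two bounds yields $t_{\max}=D(G)$.

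The argument is mostly bookkeeping once the affine identity $t = D(G)+1-d(u,v)$ is in hand; the one point deserving care is the lower bound, where one must check that step $D(G)$ is genuinely \emph{populated} rather than merely \emph{permitted}. This is where the ``if'' direction of the biconditional does the real work, guaranteeing that every edge contributes a direct-delivery tuple at the last step. The only hypotheses this quietly relies on are that $G$ is nontrivial (so an edge exists) and strongly connected (so $D(G)$ is finite and all distances are well defined), both of which hold for the communication topologies considered here.
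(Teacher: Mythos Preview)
Your argument is correct. The paper does not actually supply a proof of this theorem: it is stated in \S\ref{sec:shortestpathschedule} and restated in Appendix~\ref{app:shortestpath} with no accompanying proof environment, the claim being treated as immediate from Definition~\ref{def:shortestpath}. Your write-up makes explicit exactly the bookkeeping that the paper leaves implicit---extracting the identity $t = D(G)+1-d(u,v)$ from the biconditional, bounding $t$ above via $d(u,v)\ge 1$, and witnessing $t=D(G)$ via the last-hop delivery along an edge.

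One small remark on the lower bound: when you invoke the ``if'' direction of the biconditional to assert that the tuple $((v,C),(u,v),D(G))$ lies in $A$, the definition as written does not itself specify which chunk $C$ appears. In the paper's framework this is harmless---$A$ is assumed to be a genuine reduce-scatter schedule, so by Definition~\ref{def:reducescatter} (or equivalently by condition~(\ref{thm:shortestpath2}) of Theorem~\ref{thm:shortestpath}) there must be \emph{some} chunk delivered along that edge at that step---but if you wanted to be fully self-contained you could instead argue: pick any distinct $u,v$, take the terminal tuple $((v,C_n),(w_{n-1},v),t_n)$ on a reduce-scatter path from $u$ to $v$, and observe that the distance condition forces $d(w_{n-1},v)=1$ and hence $t_n=D(G)$. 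Either route closes the argument.
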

There may exist many BFB schedules for a given topology $G$. They all have the same $T_L$ but may have different $T_B$s. Thus, the optimal BFB schedule is the one with the lowest $T_B$. Since every BFB schedule can be expressed as a solution to linear program (\ref{lpmodel}), we have the following result:
\begin{restatable}{theorem}{thmshortestlpopt}\label{thm:shortestlpopt}
    Given any topology $G$, linear program {\normalfont (\ref{lpmodel})} gives the optimal BFB schedule of $G$.
\end{restatable}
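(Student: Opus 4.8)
The plan is to split the claim into a \emph{feasibility} part (every solution of~(\ref{lpmodel}), assembled via the interval construction that precedes~(\ref{eq:lpT_B}), is a genuine reduce-scatter shortest-path schedule) and an \emph{optimality} part (that schedule attains the minimum $T_B$ over all shortest-path schedules). Since by Theorem~\ref{thm:diameter} every shortest-path schedule has node latency $T_L(A)=D(G)\alpha$, the phrase ``optimal shortest-path schedule'' reduces to ``shortest-path schedule of minimum bandwidth cost,'' so once both parts are established the theorem follows immediately.

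For feasibility I would invoke the characterization in Theorem~\ref{thm:shortestpath}. The domain restriction on the variables---$x_{u,w,v,t}$ exists iff $(u,w)\in E_G$ and $d(u,v)=d(w,v)+1=D(G)+1-t$---means every chunk the construction emits on $(u,w)$ at step $t$ carries a destination $v$ with $d(u,v)=d(w,v)+1=D(G)+1-t$, which is exactly condition~(\ref{thm:shortestpath1}) (Definition~\ref{def:shortestpath}). The equality constraint $\sum_{(u,w)}x_{u,w,v,t}=1$ for every $v\in N_{D(G)+1-t}(u)$ guarantees that the subintervals $[l,l+x_{u,w_i,v,t})$ emitted toward $v$, with $l=\sum_{j<i}x_{u,w_j,v,t}$, tile all of $S$ without overlap, i.e.\ $\bigcup_{C\in E_v}C=S$, which is condition~(\ref{thm:shortestpath2}). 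Hence the assembled $A_G$ satisfies both conditions and is a valid shortest-path schedule.

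For optimality I would first rewrite the cost. By the definition of $T_B(A_t)$, the per-step cost is $\frac{1}{B/d}$ times the maximum over edges of the data placed on that edge; under the interval construction the data on $(u,w)$ at step $t$ equals $\frac{M}{N}\sum_v x_{u,w,v,t}$, so $\max_{(u,w)\in E_G}\sum_v x_{u,w,v,t}=\max_u\max_{w:(u,w)\in E_G}\sum_v x_{u,w,v,t}=\max_u U_{u,t}$, where the last equality uses that the LP drives $U_{u,t}$ down to $\max_{w}\sum_v x_{u,w,v,t}$; this yields exactly~(\ref{eq:lpT_B}). The crucial structural observation is \emph{separability}: the out-edges of distinct vertices are disjoint, and each coverage constraint is indexed purely by a pair $(u,v)$ with its step $t=D(G)+1-d(u,v)$ forced by the distance, so the feasible region of~(\ref{lpmodel}) is a product over the independent subproblems indexed by $(u,t)$. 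Consequently any shortest-path schedule $A'$ has $U'_{u,t}\ge U^\ast_{u,t}$ for the LP-optimal $U^\ast_{u,t}$ of each subproblem, whence $\max_u U'_{u,t}\ge\max_u U^\ast_{u,t}$ for every $t$ and $\sum_t\max_u U'_{u,t}\ge\sum_t\max_u U^\ast_{u,t}$. Because the construction attains $U^\ast_{u,t}$ at every $(u,t)$ simultaneously, it meets this lower bound, so its $T_B$ is minimal.

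The step I expect to be the main obstacle is justifying the separability rigorously: I must confirm that independently choosing a minimizing routing at each $(u,t)$ always reassembles into one globally consistent schedule, i.e.\ that no feasibility coupling links the subproblems across either vertices or time steps. This rests on the ``bus'' semantics together with the reduction operator $\oplus$: the load a vertex $u$ places on its out-edges at its departure step $t=D(G)+1-d(u,v)$ is determined solely by $u$'s own routing fractions for destinations at distance $D(G)+1-t$, with forwarded contributions \emph{combining} rather than accumulating, so the edge loads are faithfully captured by $\sum_v x_{u,w,v,t}$ and nothing downstream further constrains the choice. A careful treatment must also dispatch the boundary case $t=D(G)$ noted after~(\ref{lpmodel}) (the trivial $U_{u,t}=1$ solution) and verify that the interval bookkeeping produces exactly-covering, pairwise-disjoint chunks so that the fractional LP load agrees with the actual chunk sizes $|C|$ in the realized schedule.
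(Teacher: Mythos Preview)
Your proposal is correct. The paper itself does not give a standalone proof of this theorem: it simply remarks that all shortest-path schedules share the same $T_L$, so optimality reduces to minimum $T_B$, and then asserts the theorem as an immediate consequence of the LP formulation and the interval construction preceding~(\ref{eq:lpT_B}). Your feasibility-plus-optimality decomposition, invoking Theorem~\ref{thm:shortestpath} for the former and the $(u,t)$-separability of~(\ref{lpmodel}) for the latter, is exactly the implicit reasoning the paper relies on, just made explicit; in particular your identification of the reduction semantics (forwarded contributions combine, so the outgoing load at $u$ depends only on $u$'s own routing fractions) is the right justification for why the per-$(u,t)$ subproblems decouple. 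One minor point you may wish to tighten: when you assert $U'_{u,t}\ge U^\ast_{u,t}$ for an arbitrary shortest-path schedule $A'$, note that condition~(\ref{thm:shortestpath2}) of Theorem~\ref{thm:shortestpath} only guarantees coverage, so the induced fractions satisfy $\sum_w x'_{u,w,v,t}\ge 1$ rather than equality; a one-line normalization (scaling each $x'_{u,w,v,t}$ down so the sum equals $1$) yields an LP-feasible point with no larger objective, completing the comparison.
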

An important implication of Theorem~\ref{thm:shortestlpopt} is that \textbf{if we can show a BW-optimal BFB schedule exists for a topology $G$, then linear program $(\ref{lpmodel})$ is guaranteed to generate one.} This has become an important tool for us to prove that BFB schedule generation can always generate BW-optimal schedules for some families of topologies. For the rest of this section, we show conditions that, if met by a topology, ensure it has a BW-optimal BFB schedule.

The following theorem shows the necessary and sufficient conditions for a BFB allgather schedule to be BW-optimal:
\begin{restatable}{theorem}{thmshortestbwopt}\label{thm:shortestbwopt}
    Suppose $(G,A)$ is a BFB allgather schedule. $(G,A)$ is BW-optimal if and only if:
    \begin{enumerate}
        \item\label{thm:shortestbwopt1} There exists a sequence $N^-_1,N^-_2,\dots,N^-_{D(G)}\in\mathbb N$ such that for any $x\in\mathbb N$ and $u\in V_G$, $|N^-_t(u)|=N^-_x$.
        \item\label{thm:shortestbwopt2} For any $(w,u)\in E_G$, $\sum_{((v,C),(w,u))\in A_t}|C|=\frac{M}{N}|N^-_t(u)|/d=\frac{M}{N}N^-_t/d$.
    \end{enumerate}
\end{restatable}
We assume $G$ is $d$-regular. Condition~\ref{thm:shortestbwopt1} and \ref{thm:shortestbwopt2} together ensure that at each comm step, all links have perfectly balanced workloads. In Theorem~\ref{thm:spathcartesianexpopt}, we have already proven that \textit{a Cartesian product graph has BW-optimal BFB schedule if it is the product of graphs that each have a BW-optimal BFB schedule.} Here, Theorem~\ref{thm:shortestbwopt} also leads to the following sufficient condition for a bidirectional topology to have a BW-optimal BFB schedule:
\begin{restatable}{theorem}{thmshortestbwoptsuff}\label{thm:shortestbwoptsuff}
    There exists a BW-optimal BFB schedule for undirected graph $G$ if for every distance $x$, two of the following constants exist:
     \begin{enumerate}
        \item\label{thm:shortestbwoptconst1} $N_x=|N_x(u)|$ for any $u\in V_G$;
        \item\label{thm:shortestbwoptconst2} $a_x=|N_x(u)\cap N_{x-1}(w)|$ for any $u\in V_G$ and $w\in N(u)$;
        \item\label{thm:shortestbwoptconst3} $b_x=|N(u)\cap N_{x-1}(v)|$ for any $u\in V_G$ and $v\in N_x(u)$.
    \end{enumerate}
    Moreover, if two of $N_x,a_x,b_x$ exist, then the third one must also exist with $N_x=da_x/b_x$.
\end{restatable}
Note that in undirected graphs, we have $N^+_x(u)=N^-_x(u)=N_x(u)$. To understand these constants, $N_x$ is the number of data shards $u$ needs to receive at comm step $x$; $a_x$ is the number of data shards that can be transmitted by each link $(w,u)$ at comm step $x$; $b_x$ is the number of link $(w,u)$s that each data shard can use to transmit the data to $u$ at comm step $x$. These three constants collectively ensure that links are perfectly balanced with each link transmitting $\frac{M}{N}N_x/d=\frac{M}{N}a_x/b_x$ amount of data at comm step $x$.

Now, we give a \textit{necessary and sufficient condition} for any topology to have a BW-optimal BFB schedule. The condition is derived based on the observation that the BFB optimization problem is equivalent to a job scheduling problem. In each comm step $t$, for each node $u$, we have a set of jobs $\{j_1,j_2,\dots,j_m\}$ (data from the source nodes $v\!\in\! N^-_t(u)$) and a set of processors $\{p_1,p_2,\dots,p_d\}$ (links from in-neighbors $w\!\in\! N^-(u)$). There exists a map $f$ from any job to a set of processors that $j_i$ can only be scheduled to the processors in $f(j_i)$ (in-neighbor $w$s satisfying $d(v,u)=d(v,w)+1=t$). Assuming jobs can be arbitrarily divided into subjobs for parallel execution on multiple processors, the problem is how to schedule these jobs to processors so that workloads are balanced across all processors. We have the following result:
\begin{restatable}{theorem}{thmbalancedbfs}\label{thm:balancedbfs}
    The workloads can be balanced if and only if there exists no subset $J\subseteq\{j_1,j_2,\dots,j_m\}$ such that
    \[
        \frac{|J|}{\left|\bigcup_{j\in J} f(j)\right|}>\frac{m}{d}.
    \]
\end{restatable}
Note that there is an independent scheduling problem for each comm step $t$ and node $u$. Therefore, topology $G$ has a BW-optimal BFB schedule if and only if:
\begin{enumerate}
    \item At each comm step $t$, $|N^-_t(u)|$ is the same for all $u\in V_G$.
    \item The scheduling problem w.r.t. each $t$ and $u$ satisfies the condition in Theorem~\ref{thm:balancedbfs}.
\end{enumerate}

\subsection{Discrete Chunked BFB Schedule}\label{sec:shortestIP}
The BFB LP (\ref{lpmodel}) makes an assumption that shards can be divided arbitrarily and infinitesimally. However, to compile the schedule into an executable form, one often needs a \emph{discrete chunked schedule}, where each shard is divided into a fixed number of equal-sized chunks. In practice, $x_{v,(w,u),t}$s are usually solved to be rational numbers. We can divide each shard into a number of chunks equal to the LCM of $x_{v,(w,u),t}$s' denominators so that each $x_{v,(w,u),t}$ represents some integer number of chunks. This approach has worked for us without any issues. However, there exists the case where each shard of the data can only be divided into $P$ equal chunks (i.e., the whole data $M$ can only be divided into $PN$ equal chunks). In such a case, we show that \emph{we can approximate the optimal discrete chunked BFB schedule in polynomial time}.

Consider the following integer program given $u,t$:
\begin{equation}\label{integerprog}
    \begin{array}{lr@{}ll}
        \text{min}  & &W_{u,t} &\\
        \text{s.t.}& \displaystyle\sum_{v} & y_{v,(w,u),t}\leq W_{u,t}, & \forall w\in N^-(u)\\
                        & \displaystyle\sum_{w} & y_{v,(w,u),t}=P, & \forall v\in N^-_{t}(u)\\
                        & y_{v,(w,u),t}&\in\{0,1,\dots,P\}, & \forall w,v.
    \end{array}
\end{equation}
Compared with (\ref{lpmodel}), one can easily see that the optimal solution of (\ref{integerprog}) gives the optimal BFB allgather schedule when each shard of the data can only be divided into $P$ chunks. One can also easily solve the LP relaxation of (\ref{integerprog}) in polynomial time. Let $T_B^{\text{OPT}}$ be the optimal BW runtime of the schedule obtained by directly solving integer program (\ref{integerprog}). Suppose the LP relaxation gives a schedule with BW runtime $T_B^{\text{LP}}$, then it holds that $T_B^{\text{LP}}\leq T_B^{\text{OPT}}$.

Let $y^{\text{LP}}_{v,(w,u),t}$s be the solution to the LP relaxation of (\ref{integerprog}). We can obtain an integral solution $y_{v,(w,u),t}$s of (\ref{integerprog}) by rounding $y^{\text{LP}}_{v,(w,u),t}$s into integers. For each $v$, we have
\[
    \sum_{w}\left\lfloor y^{\text{LP}}_{v,(w,u),t}\right\rfloor\leq P\leq\sum_{w}\left\lceil y^{\text{LP}}_{v,(w,u),t}\right\rceil.
\]
Thus, it is trivial to round $y^{\text{LP}}_{v,(w,u),t}$s to integer $y_{v,(w,u),t}$s that $\sum_w y_{v,(w,u),t}\!=\!P$ and $y_{v,(w,u),t}\!<\!y^{\text{LP}}_{v,(w,u),t}\!+\!1$. We give the following approximation bound for the resulting schedule:
\begin{restatable}{theorem}{thmshortestIP}\label{thm:shortestIP}
    Rounding LP gives a solution with BW runtime $T_B\!\leq\! T_B^{\text{OPT}}\!+\!\frac{M}{B}\cdot\frac{d(d^{D(G)}-1)}{(d-1)PN}$. In addition, if topology $G$ is Moore optimal, then $T_B\!\leq\! T_B^{\text{OPT}}\!+\!\frac{M}{B}\cdot\frac{d}{P}$.
\end{restatable}
The cost $\frac{M}{B}\cdot\frac{d}{P}$ is negligible since $P$ can easily be hundreds or even thousands while degree $d$ is usually a small integer.

\subsection{Heterogeneous BFB Schedule}\label{sec:heterogeneous}

The BFB LP (\ref{lpmodel}) assumes a homogeneous network. It turns out that with little modification, (\ref{lpmodel}) can become an LP for heterogeneous network too:
\begin{equation}\label{lpmodelhetero}
    \begin{array}{lr@{}ll}
    \text{min}  & &U_{u,t} &\\
    \text{s.t.}& \alpha_{w,u}+\frac{M/N}{B_{w,u}}\displaystyle\sum_{v} & x_{v,(w,u),t}\leq U_{u,t}, & \forall w\in N^-(u)\\
                     & \displaystyle\sum_{w} & x_{v,(w,u),t}=1, & \forall v\in N^-_{t}(u)\\
                     & 0\leq &x_{v,(w,u),t}\leq 1, & \forall w,v.
    \end{array}
\end{equation}
$\alpha_{w,u}$ and $B_{w,u}$ are the total-hop latency and bandwidth of link $(w,u)$. In some cases, the $\alpha$ of some link $(w,u)$ is so high that $\alpha_{w,u}$ alone dominates $U_{u,t}$ in (\ref{lpmodelhetero}) even though $\sum_{v}x_{v,(w,u),t}=0$. This is problematic because one should not pay $\alpha_{w,u}$ if link $(w,u)$ is not used. However, such a scenario can be easily detected after solving LP (\ref{lpmodelhetero}). One can avoid the issue by simply removing link $(w,u)$ and solving the LP again.

\section{Generative Topologies}\label{sec:genetopoappen}

In this section, we introduce several topologies for which applying BFB schedule generation yields high-performance communication schedules.

\subsection{Bidirectional Ring}\label{sec:spathbiring}

\begin{figure}[tb]
    \centering
    \scalebox{0.7}{
    \begin{tikzpicture}[roundnode/.style={circle,draw=black,minimum size=7mm}]
        \node[roundnode] (0) at ({-90}:1.5) {$r$};
        \node[roundnode] (1) at ({-90+360*1/5}:1.5) {};
        \node[roundnode] (2) at ({-90+360*2/5}:1.5) {};
        \node[roundnode] (3) at ({-90+360*3/5}:1.5) {};
        \node[roundnode] (4) at ({-90+360*4/5}:1.5) {};

        \path[-latex,anchor=north] (0) edge [bend right=20] node {$S$} (1);
        \path[-latex,anchor=west] (1) edge [bend right=20] node {$S$} (2);

        \path[-latex,anchor=north] (0) edge [bend left=20] node {$S$} (4);
        \path[-latex,anchor=east] (4) edge [bend left=20] node {$S$} (3);
    \end{tikzpicture}
    }
    \hspace{1cm}
    \scalebox{0.7}{
    \begin{tikzpicture}[roundnode/.style={circle,draw=black,minimum size=7mm}]
        \node[roundnode] (0) at ({-90}:1.5) {$r$};
        \node[roundnode] (1) at ({-90+360*1/6}:1.5) {};
        \node[roundnode] (2) at ({-90+360*2/6}:1.5) {};
        \node[roundnode] (3) at ({-90+360*3/6}:1.5) {};
        \node[roundnode] (4) at ({-90+360*4/6}:1.5) {};
        \node[roundnode] (5) at ({-90+360*5/6}:1.5) {};

        \path[-latex,anchor=north] (0) edge [bend right=20] node {$S$} (1);
        \path[-latex,anchor=west] (1) edge [bend right=20] node {$S$} (2);

        \path[-latex,anchor=north] (0) edge [bend left=20] node {$S$} (5);
        \path[-latex,anchor=east] (5) edge [bend left=20] node {$S$} (4);

        \path[-latex,anchor=south] (4) edge [bend left=20] node {$C_1$} (3);
        \path[-latex,anchor=south] (2) edge [bend right=20] node {$C_2$} (3);
    \end{tikzpicture}
    }
    \caption{The broadcast paths of ring BFB allgather schedule. {\normalfont The left and right figures respectively show the broadcast patterns for odd- and even-sized bidirectional rings. Edges of the rings are omitted. $C_1$ and $C_2$ are two halves of shard $S$.}}
    \label{fig:BFBringallgather}
\end{figure}

Ring is the most common topology for allreduce. The traditional schedule on ring is to make each shard go a full circle to do reduce-scatter/allgather. In a bidirectional ring, one can simply make half the shard go clockwise and the other half go counterclockwise to utilize both directions of the links. Such a reduce-scatter/allgather schedule is BW-optimal but poor in total-hop latency with $T_L\!=\!(N-1)\alpha$. With BFB schedule generation, we discovered a new ring reduce-scatter/allgather schedule that achieves half the total-hop latency ($T_L\!=\!\lfloor N/2\rfloor\alpha$) while maintaining BW optimality. From each node, the BFB allgather schedule broadcasts the \emph{entire} shard clockwise and counterclockwise in parallel. Thus, each direction only needs to go half a circle instead of a full circle. If $N$ is even, then the farthest node directly across the ring receives each half of the shard from each of its two neighbors in the end. Figure~\ref{fig:BFBringallgather} shows examples in odd- and even-sized rings respectively.

\subsection{Generalized Kautz Graph}\label{sec:genkautz}

\begin{figure*}[tb]
    \centering
    \includegraphics[width=\textwidth]{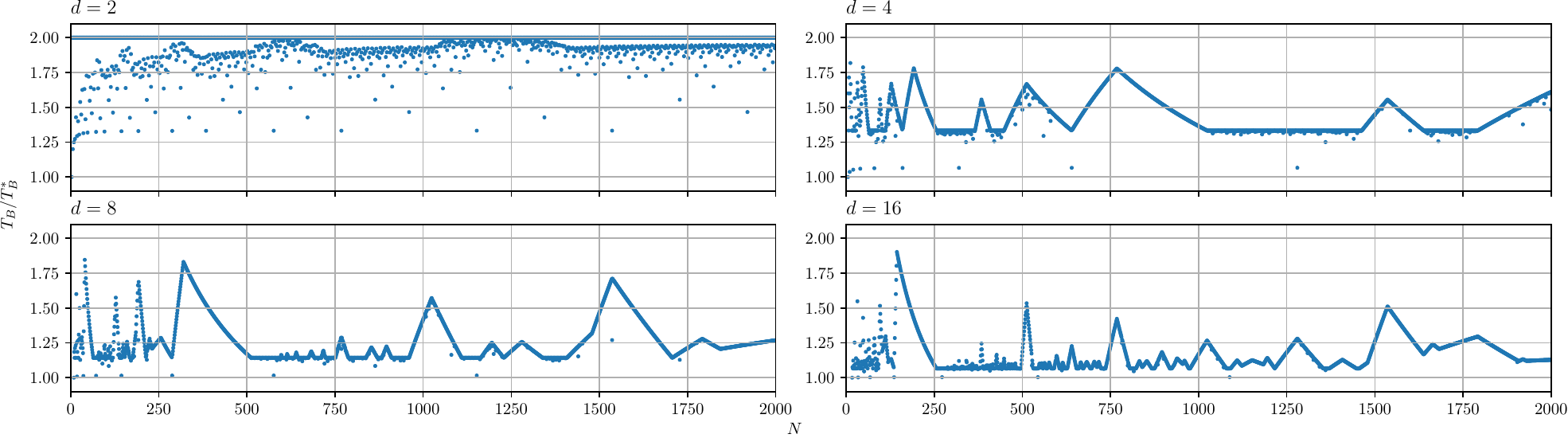}
    \caption{$T_B/T_B^*$ of generalized Kautz graph $\Pi_{d,N}$ up to $N=2000$. {\normalfont As can be seen, the BW runtime of $\Pi_{d,N}$ is less than or equal to $2T_B^*$ at all times for $d=2,4,8,16$. In particular, the higher the degree is, the closer $T_B$ is to optimal. As for total-hop latency, Theorem~\ref{thm:genkautzdiameter} shows that $T_L\leq T_L^*(N,d)+\alpha$.}}
    \label{fig:genkautzbwtime}
\end{figure*}

Generalized Kautz graph~\cite{genkautz,genkautz2} is a low-$T_L$ unidirectional topology that can be constructed for every $N$ and $d$.
\begin{restatable}[Generalized Kautz Graph]{definition}{generalizedkautzdef}\label{thm:genkautzlatency}
    The $\Pi_{d,m}$ digraph has the set of integers modulo $m$ as vertex set. Its arc set $A$ is defined as follows:
    \[
        A=\{(x,y)\ |\ y\equiv -dx-a,1\leq a\leq d\}.
    \]
    If $m=d^{n+1}+d^n$, then $\Pi_{d,m}=K(d,n)$, where $K(d,n)$ is the Kautz graph $L^n(K_{d+1})$.
\end{restatable}
We apply BFB schedule generation to generalized Kautz graph. The resulting schedule is not always Moore optimal, but the following theorem shows that it is at most one $\alpha$ away from Moore optimality, i.e., $T_L\leq T_L^*(N,d)+\alpha$:
\begin{restatable}{theorem}{thmgenkautzdiameter}\label{thm:genkautzdiameter}
    Suppose $D(\Pi_{d,m})=k$, then $m>M_{d,k-2}$.
\end{restatable}
Remember Moore optimality is stricter than total-hop latency optimality, so it is possible that generalized Kautz graph is total-hop latency optimal. The special case, Kautz graph $K(d,n)$, is always Moore optimal and is, in fact, the largest known digraph in \textit{degree/diamter problem} for any degree $d>2$~\cite{moore_bound}.

As for BW performance, from Figure~\ref{fig:genkautzbwtime}, one can see that generalized Kautz graph is also close to BW optimality, especially at higher degrees.

\subsection{Distance-Regular Graph}\label{sec:dist-reg}

In graph theory, distance-regular graphs are a family of highly symmetric undirected graphs. We can show that there exists a BW-optimal BFB schedule for any distance-regular graph, and thus LP (\ref{lpmodel}) can always generate one. We borrow the following definition from \cite{distregdef}:
\begin{definition}[Distance-Regular Graph]
    A connected graph $G$ is distance-regular if for any vertices $x,y\in V_G$ and integers $i,j$, the number of vertices at distance $i$ from $x$ and distance $j$ from $y$ depends only on $i,j$ and $d(x,y)$.
\end{definition}
In other words, there exists a constant $s^{h}_{i,j}$ for every $h,i,j$ such that $s^{h}_{i,j}=|N_i(x)\cap N_j(y)|$ whenever $x,y\in V_G$ satisfy $d(x,y)=h$. Thus, we can apply Theorem~\ref{thm:shortestbwoptsuff} with $N_x=s^0_{x,x}$, $a_x=s^{1}_{x,x-1}$, and $b_x=s^{x}_{1,x-1}$.

The significance of distance-regular graph is not only about BW optimality. Many of distance-regular graphs have low diameters, so their schedules are not only BW-optimal but also close to, and in some cases exactly, Moore optimal. Table~\ref{tab:exampledistreg} gives examples of distance-regular graphs at $d=4$. In addition, many of the base graphs mentioned in this paper are also distance-regular like complete bipartite graphs~(Figure~\ref{fig:notationexample}) and Hamming graphs. One can refer to \cite{distregdb} for a repository of distance-regular graphs.

\begin{table}[tb]
    \centering
    \resizebox{\columnwidth}{!}{
    \begin{tabular}{|l|c|c|c|c|c|c|}
        \hline
        Graph Name & $N$ & $T_L$ & $T_L^*$ & $T_L\!-\!T_L^*$ & $T_L^{**}$ & $T_L\!-\!T_L^{**}$ \\
        \hline
        Octahedron J(4,2) & 6 & 2 & 2 & 0 & 2 & 0 \\
        \hline
        Paley graph P9$\cong$H(2,3) & 9 & 2 & 2 & 0 & 2 & 0 \\
        \hline
        K5,5-I & 10 & 3 & 2 & 1 & 2 & 1 \\
        \hline
        \multirow{2}{*}{\shortstack[l]{Distance-3 graph of \\ Heawood graph}} & \multirow{2}{*}{14} & \multirow{2}{*}{3} & \multirow{2}{*}{2} & \multirow{2}{*}{1} & \multirow{2}{*}{2} & \multirow{2}{*}{1} \\
        & & & & & & \\
        \hline
        Line graph of Petersen graph & 15 & 3 & 2 & 1 & 2 & 1 \\
        \hline
        4-cube Q4$\cong$H(4,2) & 16 & 4 & 2 & 2 & 2 & 2 \\
        \hline
        Line graph of Heawood graph & 21 & 3 & 2 & 1 & 3 & 0 \\
        \hline
        Incidence graph of PG(2,3) & 26 & 3 & 3 & 0 & 3 & 0 \\
        \hline
        \multirow{2}{*}{\shortstack[l]{Incidence graph of AG(2,4) \\ minus a parallel class}} & \multirow{2}{*}{32} & \multirow{2}{*}{4} & \multirow{2}{*}{3} & \multirow{2}{*}{1} & \multirow{2}{*}{3} & \multirow{2}{*}{1} \\
        & & & & & & \\
        \hline
        Odd graph O4 & 35 & 3 & 3 & 0 & 3 & 0 \\
        \hline
        Line graph of Tutte's 8-cage & 45 & 4 & 3 & 1 & 3 & 1 \\
        \hline
        Doubled Odd Graph D(O4) & 70 & 7 & 3 & 4 & 4 & 3 \\
        \hline
        Incidence graph of GQ(3,3) & 80 & 4 & 3 & 1 & 4 & 0 \\
        \hline
        Line graph of Tutte's 12-cage & 189 & 6 & 4 & 2 & 5 & 1 \\
        \hline
        Incidence graph of GH(3,3) & 728 & 6 & 5 & 1 & 6 & 0 \\
        \hline
    \end{tabular}
    }
    \caption{Examples of distance-regular graphs at $d=4$ \cite{distregdb}. {\normalfont $T_L^{**}$ is the bidirectional Moore optimality.}}
    \label{tab:exampledistreg}
\end{table}

\subsection{Circulant Graph}\label{sec:circulant}

Circulant graph is a well-studied topology in both graph theory and network design. Many popular network topologies like shifted ring, chordal ring, and loop network are either part of or closely related to circulant graphs. The definition of circulant graph is as follows:
\begin{definition}
    The circulant graph $C(n,\{a_1,\dots,a_k\})$ is a bidirectional graph with vertex set $\{0,1,\dots,n-1\}$ and each node $i$ is adjacent to nodes $i\pm a_1,\dots,i\pm a_k\pmod{n}$.
\end{definition}
Note that in this paper, we only consider connected circulant graphs, and $C(n,\{a_1,\dots,a_k\})$ is connected if and only if $\gcd(n,a_1,\dots,a_k)=1$ \cite{circulantSurvey, circulantThesis}. It is easy to see that $C(n,\{a_1,\dots,a_k\})$ is an $n$-node $2k$-regular topology.

We have found that the BFB schedule generation seems to give BW-optimal schedules for all circulant graphs. In particular, we have the following conjecture:
\begin{restatable}{conjecture}{circulantconjecture}
    For any circulant graph $C(n,\{a_1,\dots,a_k\})$, there exists a BW-optimal BFB schedule.
\end{restatable}
While we leave a complete proof or disproof of this conjecture for future work, we have proved the conjecture holds when $k=2$, which corresponds to the graph having degree 4.

Circulant graph revolutionized our Pareto frontier of topologies since it can be constructed for every $N$ and even value $d$. It can provide a BW-optimal topology if our expansion techniques fail to produce one at some $N$ and $d$. Since all circulant graphs seem to be BW-optimal, the question is what choices of $a_1,\dots,a_k$ result in minimum total-hop latency, or equivalently, minimum diameter for a given $n$ and $k$. While this remains largely an open question in graph theory \cite{circulantSurvey}, the case of $k=2$ has been solved in \cite{circulantk2}:
\begin{restatable}{theorem}{thmdegfourconstruct}\label{thm:deg4construct}
    Given $n\!>\!6$ and $m\!=\!\lceil(-\!1\!+\!\sqrt{2n\!-\!1})\!/\!2\rceil$, circulant graph $C(n,\{m,m\!+\!1\})$ has a diameter equal to $m$, which is the minimum diameter over all circulant graphs $C(n,\{a_1,a_2\})$.
\end{restatable}
We can certainly use multiedge to apply this construction for any even degree that is $\geq 4$. The resulting topology has $\Theta(\sqrt{N})$ diameter, which is a significant improvement in terms of total-hop latency when BW optimality is required. Previously, the only topology that is known to be BW-optimal for any $N$ and $d$ is ring, which has $\Theta(N)$ diameter.

\section{Proofs}\label{sec:proof}

\thmlinegraphperformance*
\begin{proof}
    Let $v'\!v,uw$ be arbitrary two distinct vertices in $L(G)$. We want to show there exists a sequence in $A_{L(G)}$ going from $v'\!v$ to $uw$ like in Definition~\ref{def:allgather} for any $x\!\in\! S$. If $u\!=\!v$, then $((v'\!v, S),(v'\!v,uw),1)$ at the first comm step suffices. If $u\!\neq\! v$, because $A_G$ is allgather, there exists a sequence in $A_G$:
    \begin{multline*}
        ((v,C_1),(v,w_1),t_1),((v,C_2),(w_1,w_2),t_2),\dots\\
        ((v,C_n),(w_{n-1},u),t_n),
    \end{multline*}
    where $t_1<t_2<\dots<t_n$ and $x\!\in\! C_1\cap C_2\cap\dots\cap C_n$. Thus, by Definition~\ref{def:linegraph}, there exists a sequence in $A_{L(G)}$:
    \begin{multline*}
        ((v'\!v,S),(v'\!v,vw_1),1),((v'\!v,C_1),(vw_1,w_1w_2),t_1+1),\dots\\
        ((v'\!v,C_n),(w_{n-1}u,uw),t_n+1),
    \end{multline*}
    as desired. The new algorithm $(L(G),A_{L(G)})$ has $dM$ total data length, because the number of nodes has grown $d$-fold while the size of a shard remains the same.
    
    As for $T_L(A_{L(G)})$ and $T_B(A_{L(G)})$, equality (\ref{eq:linelatency}) trivially follows the Definition~\ref{def:linegraph}. Let $[A_{L(G)}]_t$ and $[A_G]_t$ be the subschedules of $A_{L(G)}$ and $A_G$ at comm step $t$. Given $v\!\in\! V_G$, because $G$ is $d$-regular, we have $|\{v'\!v\ |\ v'\!v\!\in\! V_{L(G)}\}|\!=\!|\{(v',v)\ |\ (v',v)\!\in\! E_G\}|\!=\!d$. Given any edge $(uw,ww')$ and $t$, there are at most $d$ number of $((v'\!v,C),(uw,ww'),t+1)\!\in\! A_{L(G)}$ for each $((v,C),(u,w),t)\!\in\! A_G$ by Definition~\ref{def:linegraph}. Thus, given $(uw,ww')$,
    \[
        \sum_{((v'\!v,C),(uw,ww'))\in [A_{L(G)}]_{t+1}}|C|\leq \sum_{((v,C),(u,w))\in [A_G]_t}d\cdot|C|.
    \]
    It follows that $T_B([A_{L(G)}]_{t+1},dM,B)\leq d\cdot T_B([A_G]_t,M,B)$ and hence $\sum_{t=2}^{t_{\max}+1}T_B([A_{L(G)}]_t,dM,B)\leq d\cdot T_B(A_G,M,B)$. For the first comm step, we have
    \[
        T_B([A_{L(G)}]_{1},dM,B)=\frac{|S|}{B/d}=\frac{M/N}{B/d}.
    \]
    Assuming $T_B(A_G,M,B)\!=\!\tau(M/B)$ for some constant $\tau$, we have $d\cdot T_B(A_G,M,B)\!=\!T_B(A_G,dM,B)$. It follows that
    \begin{multline*}
        T_B(A_{L(G)},dM,B)=\sum_{t=1}^{t_{\max}+1}T_B([A_{L(G)}]_t,dM,B)\\
        \leq\frac{M/N}{B/d}+d\cdot T_B(A_G,M,B)=T_B(A_G,dM,B)+\frac{dM}{B}\cdot\frac{1}{N}.
    \end{multline*}
    Replacing $dM$ by $M$ gives (\ref{eq:linebw}) as desired.
\end{proof}

\thmlinegraphconclusion*

\thmlinebwopt*
\begin{proof}
    If $(G,A_G)$ is BW-optimal, then $T_B(A_G)\!=\!\frac{M}{B}\cdot\frac{N-1}{N}$ and
    \begin{equation}\label{eq:bwoptlinenbw}
        T_B(A_{L^n(G)})\leq \frac{M}{B}\left[1+\frac{1}{d-1}\left(\frac{1}{N}-\frac{d}{d^nN}\right)\right].
    \end{equation}
    It is trivial to see that $(\ref{eq:bwoptlinenbw})/T_B^*(d^nN)\nearrow 1+[(d-1)N]^{-1}$ as $n\to\infty$.
\end{proof}

\thmscheduleswitch*
\begin{proof}
    Suppose $(G,A)$ is a reduce-scatter algorithm. For arbitrary $x\!\in\! S$ and distinct $u,v\!\in\! V_G$, there exists a sequence of tuples in $A$:
    \begin{multline*}
        ((v,C_1),(u,w_1),t_1),((v,C_2),(w_1,w_2),t_2),\dots\\
        ((v,C_n),(w_{n-1},v),t_n),
    \end{multline*}
    where $t_1<t_2<\dots<t_n$ and $x\!\in\! C_1\cap C_2\cap\dots\cap C_n$. It follows that there exists a sequence of tuples in $A^T$:
    \begin{multline*}
        ((v,C_n),(v,w_{n-1}),t_n'),\dots\\
        ((v,C_2),(w_2,w_1),t_2'),((v,C_1),(w_1,u),t_1').
    \end{multline*}
    where $t_i'\!=\!t_{\max}-t_i+1$, so $t_n'<\dots<t_2'<t_1'$. Since $u,v,x$ are abitrary, $A^T$ is an allgather schedule on $G^T$. One can similarly show that if $(G,A)$ is an allgather algorithm, then $(G^T,A^T)$ is a reduce-scatter algorithm.
\end{proof}

\thmscheduleswitchfunc*
\thmscheduleswitchmap*

\thmrescallather*

\begin{proof}
    By definition of $f(A^T)$,
    \begin{align*}
        ((f(v),C),(f(w),f(u)),&t_{\max}-t+1)\in f(A^T)\\
        \Leftrightarrow\ &((v,C),(w,u),t_{\max}-t+1)\in A^T\\
        \Leftrightarrow\ &((v,C),(u,w),t)\in A.
    \end{align*}
    Note that $(u,w)\!\in\! E_G\Leftrightarrow (w,u)\!\in\! E_{G^T}\Leftrightarrow(f(w),f(u))\!\in\! E_G$, so $f(A^T)$ is a valid schedule for $G$.
    
    Suppose $(G,A)$ is a reduce-scatter algorithm. For any $x\!\in\! S$ and distinct $u,v\!\in\! V_G$, there exists a sequence of tuples in $A$:
    \begin{multline*}
        ((v,C_1),(u,w_1),t_1),((v,C_2),(w_1,w_2),t_2),\dots\\
        ((v,C_n),(w_{n-1},v),t_n),
    \end{multline*}
    where $t_1<t_2<\dots<t_n$ and $x\!\in\! C_1\cap C_2\cap\dots\cap C_n$. It follows that there exists a sequence of tuples in $f(A^T)$:
    \begin{gather*}
        ((f(v),C_n),(f(v),f(w_{n-1})),t_n'),\\
        ((f(v),C_{n-1}),(f(w_{n-1}),f(w_{n-2})),t_{n-1}'),\\
        \vdots\\
        ((f(v),C_1),(f(w_1),f(u)),t_1'),
    \end{gather*}
    where $t_i'\!=\!t_{\max}-t_i+1$, and $x\!\in\! C_n\cap C_{n-1}\cap\dots\cap C_1$. Because $f$ is a bijection, $(G,f(A^T))$ is an allgather algorithm. $T_L(A)\!=\!T_L(f(A^T))$ and $T_B(A)\!=\!T_B(f(A^T))$ are trivial, and one can similarly prove that if $(G,A)$ is an allgather algorithm, then $(G,f(A^T))$ is a reduce-scatter algorithm.
\end{proof}

\thmdiameter*

\begin{proof}
    The proof is mentioned in text.
\end{proof}

\thmlatencyoptimaldiameter*

\thmbandwidthlowerbound*

\begin{proof}
    The proof is mentioned in text.
\end{proof}

\thmbwoptequation*

\thmbwopt*

\begin{proof}
    If $T_B(A)\!=\!T^*_B(N)\!=\!\frac{M}{B}\cdot\frac{N-1}{N}$, then the amount of data received by each vertex must be equal to $M\cdot\frac{N-1}{N}$, and the ingress bandwidth $B$ must be fully utilized. If condition \ref{thm:bwopt1} does not hold, then some link $(w,u)$ is not fully utilized. If condition \ref{thm:bwopt2} does not hold, then the amount of data received by some node is greater than $M\cdot\frac{N-1}{N}$.
    
    If both \ref{thm:bwopt1} and \ref{thm:bwopt2} hold, then every vertex receives exactly $M\cdot\frac{N-1}{N}$ in total and bandwidth are fully utilized. Thus, $T_B(A)\!=\!T^*_B(N)$ and $(G,A)$ is BW-optimal.
\end{proof}

\thmcomputebound*

\begin{proof}
    At any comm step $t$, suppose the BW runtime is $T_B(A_t)\!=\!\frac{M}{B}\cdot y_t$. It follows at comm step $t$, the amount of data each node receives is at most $B\cdot T_B(A_t)\!=\!M\cdot y_t$, so $T_C(A_t)\leq M\cdot\gamma\cdot y_t$. The theorem trivially follows $y\!=\!\sum_t y_t$.
\end{proof}

\thmbwopttocomputeopt*

\thmlinemooreopt*

\begin{proof}
    Suppose $T_L(A_G)\!=\!\alpha k$. Thus, $(G,A_G)$ is Moore optimal if and only if
    \begin{equation}\label{eq:basenodeopt}
        N>M_{d,k-1}=\sum_{i=0}^{k-1}d^i=\frac{d^k}{d-1}-\frac{1}{d-1}.
    \end{equation}
    $(L^n(G),A_{L^n(G)})$ is Moore optimal if and only if
    \begin{equation}\label{eq:resultnodeopt}
        d^nN>M_{d,k+n-1}\ \Leftrightarrow\  N>\frac{d^k}{d-1}-\frac{1}{d^n(d-1)}.
    \end{equation}
    Because $\eqref{eq:resultnodeopt}-\eqref{eq:basenodeopt}<1$ and \eqref{eq:basenodeopt} is an integer, \eqref{eq:basenodeopt} and \eqref{eq:resultnodeopt} are equivalent.
\end{proof}

\thmshortestpathlinegraph*

\begin{proof}
    It is trivial to see that $A_{L(G)}$ is a BFB allgather schedule on $L(G)$. For the sake of contradiction, suppose there exists a BFB schedule $A'_{L(G)}$ that $T_B(A'_{L(G)})<T_B(A_G)+\frac{M}{B}\cdot\frac{1}{N}$. Let $x_{v'\!v,(wu,uu'),t}^*$s be the solution of BFB LP (\ref{lpmodel}) corresponding to $A'_{L(G)}$. We build a schedule $A'_G$ by constructing a solution of (\ref{lpmodel}) such that
    \[
        x_{v,(w,u),t}=\frac{1}{d}\sum_{v'\in N^-(v)}x_{v'\!v,(wu,uu'),t+1}^*,
    \]
    where $u'\!\in\! N^+(u)\setminus\{v\}$ is arbitrary. To verify the construction is a valid solution, given any $u\!\in\! V_G$ and $v\!\in\! N^-_t(w)$, the equality of (\ref{lpmodel}) follows:
    \begin{multline*}
        \sum_{w}x_{v,(w,u),t}=\frac{1}{d}\sum_{v'}\sum_{w}x_{v'\!v,(wu,uu'),t+1}^*\\
        =\frac{1}{d}\sum_{v'}\sum_{wu}x_{v'\!v,(wu,uu'),t+1}^*=\frac{1}{d}\sum_{v'}1
        =\frac{1}{d}\cdot d
        =1.
    \end{multline*}
    The third equality follows the equality constraint in (\ref{lpmodel}). Now, given $(w,u)\!\in\! E_G$, observe that
    \begin{multline*}
        \sum_{v}x_{v,(w,u),t}=\frac{1}{d}\sum_{v}\sum_{v'}x_{v'\!v,(wu,uu'),t+1}^*\\
        =\frac{1}{d}\sum_{v'\!v}x_{v'\!v,(wu,uu'),t+1}^*\leq\frac{1}{d}U_{uu',t+1}.
    \end{multline*}
    Thus, $U_{u,t}\!=\!\max_{w}\sum_{v}x_{v,(w,u),t}\!\leq\!\frac{1}{d}U_{uu',t+1}$ and hence
    \[
        \max_{u\in V_G}U_{u,t}\leq\frac{1}{d}\max_{uu'\in V_{L(G)}}U_{uu',t+1}.
    \]
    By (\ref{eq:lpT_B}), we have
    \begin{multline*}
        T_B(A'_G)\leq T_B(A'_{L(G)})-\frac{M/(dN)}{B/d}\\
        =T_B(A'_{L(G)})-\frac{M}{B}\cdot\frac{1}{N}<T_B(A_G),
    \end{multline*}
    contradicting $A_G$ being the optimal BFB schedule. Thus, combined with inequality (\ref{eq:linebw}), we have proven $A_{L(G)}$ being optimal as well as the equality (\ref{eq:shortestpathT_B}).
\end{proof}

\thmshortestpathlinegraphcoro*

\thmdegexp*

\begin{proof}
    Let $u_i,v_j$ be arbitrary two distinct vertices in $G*n$. Suppose $u\!\neq\! v$ in $G$, then for any $x\!\in\! S$, there exists a sequence in $A_G$:
    \begin{multline*}
        ((v,C_1),(v,w^{(1)}),t_1),((v,C_2),(w^{(1)},w^{(2)}),t_2),\dots\\
        ((v,C_n),(w^{(n-1)},u),t_n),
    \end{multline*}
    where $t_1<t_2<\dots<t_n$ and $x\!\in\! C_1\cap C_2\cap\dots\cap C_n$. By Definition~\ref{def:degexpschedule}, there exists a sequence in $A_{G*n}$:
    \begin{multline*}
        ((v_j,C_1),(v_j,w_j^{(1)}),t_1),((v_j,C_2),(w_j^{(1)},w_j^{(2)}),t_2),\dots\\
        ((v_j,C_n),(w_j^{(n-1)},u_i),t_n),
    \end{multline*}
    as desired. Now, suppose $u\!=\!v$ in $G$. By previous proof, the shard of $v_j$ reaches every in-neighbor $u'_\alpha$ of $u_i$ by the end of comm step $t_{\max}$ since $u'\!\neq\! v$. Then, the last comm step $t_{\max}+1$ added in step 2 of Definition~\ref{def:degexpschedule} delivers the shard to $u_i$ with each edge $(u'_\alpha,u_i)$ delivering $1/nd$ of a shard. Thus, $A_{G*n}$ is a complete allgather.
    
    In step 1 of Definition~\ref{def:degexpschedule}, we have $T_B([A_{G*n}]_t,nM,nB)\!=\!T_B([A_G]_t,M,B)$ and hence $\sum_{t=1}^{t_{\max}}T_B([A_{G*n}]_t,nM,nB)\!=\!T_B(A_G,M,B)$. The $nM$ and $nB$ are due to the fact that both the number of nodes and degree have grown $n$-fold. Thus,
    \begin{align*}
        T_B(A_{G*n},&M,B)=T_B(A_{G*n},nM,nB)\\
        &=T_B(A_G,M,B)+T_B([A_{G*n}]_{t_{\max}+1},nM,nB)\\
        &=T_B(A_G,M,B)+(n-1)\cdot\frac{(nM)/(nN)}{nd}\cdot\frac{1}{nB/(nd)}\\
        &=T_B(A_G,M,B)+\frac{M}{B}\cdot\frac{n-1}{nN}.
    \end{align*}
    The first equality follows the assumption that $T_B(A_G,M,B)\!=\!\tau(M/B)$ for some constant $\tau$.
\end{proof}

\thmdegexpbwopt*

\thmcartesianpowperformance*

\begin{proof}
    We will show that $A^{(1)}$ is a valid allgather schedule. Since $A^{(i)}$s are simply starting at different dimensions, this also shows that $A^{(i)}$s and hence $A_{G^{\square n}}$ are all valid allgather schedules for $G^{\square n}$.

    Let $\mathbf{u}$ be arbitrary vertex in $G^{\square n}$. For any $x\!\in\! S$, we will show that schedule $A^{(1)}$ broadcasts $x$ from $\mathbf{u}$ to all vertices in $G^{\square n}$. At $j\!=\!1$, $A^{(1)}$ performs an allgather over vertices $\{(v_1,\mathbf{u}[2\!:\!n])\ |\ v_1\!\in\! V_G\}$ which induce a subgraph of $G^{\square n}$ isomorphic to $G$. Thus, $x$ has been broadcast to all vertices in $\{(v_1,\mathbf{u}[2\!:\!n])\ |\ v_1\!\in\! V_G\}$. At $j\!=\!2$, $A^{(1)}$ performs an allgather over vertices $\{(v_1,v_2,\mathbf{u}[3\!:\!n])\ |\ v_2\!\in\! V_G\}$ for each $v_1$. By the end of $j\!=\!2$, $x$ has been broadcast to all vertices in $\{(v_1,v_2,\mathbf{u}[3\!:\!n])\ |\ v_1,v_2\!\in\! V_G\}$. By the end of $j\!=\!n$, $x$ has been broadcast to all vertices in $\{\mathbf{v}\ |\ \mathbf{v}\!\in\! V_{G}^n\}\!=\!V_{G^{\square n}}$. Since $\mathbf{u}$ and $x$ are arbitrary, $A^{(1)}$ is a valid allgather schedule for $G^{\square n}$.
    
    As for performance, (\ref{eq:cartesianlatency}) is trivial. To prove (\ref{eq:cartesianbw}), observe that at each $j$ in $A^{(1)}$, allgather $A_G$ is performed with a data size $N^{j-1}M/n$ over the subgraph induced by $\{(\mathbf{y},v,\mathbf{z})\ |\ v\!\in\! V_G\}$ for each $\mathbf{y}\!\in\! V_G^{j-1},\mathbf{z}\!\in\! V_G^{n-j}$. The bandwidth of each node within the subgraph is $1/n$ of that in $G^{\square n}$. It follows that
    \begin{align*}
        T_B(A^{(1)},N^{n-1}M/n,nB)&=\sum_{j=1}^n T_B(A_G,N^{j-1}M/n,B)\\
        &=\sum_{j=1}^n \frac{N^{j-1}}{n}T_B(A_G,M,B)\\
        &=\frac{N^n-1}{n(N-1)}T_B(A_G,M,B).
    \end{align*}
    Therefore,
    \begin{align*}
        T_B(A_{G^{\square n}},M,B)
        &=\frac{n}{N^{n-1}}T_B(A_{G^{\square n}},N^{n-1}M,nB)\\
        &=\frac{n}{N^{n-1}}T_B(A^{(1)},N^{n-1}M/n,nB)\\
        &=\frac{n}{N^{n-1}}\cdot\frac{N^n-1}{n(N-1)}T_B(A_G,M,B)\\
        &=T_B(A_G,M,B)\cdot\frac{N}{N-1}\cdot\frac{N^n-1}{N^n}.
    \end{align*}
\end{proof}

\thmcartesianpowerbwopt*

\thmspathcartesianexpopt*

\begin{proof}
    To prove the theorem, it is sufficient to show that if $G_1$ and $G_2$ have BW-optimal BFB schedules, then $G_1\square G_2$ has a BW-optimal BFB schedule. By Theorem~\ref{thm:shortestlpopt}, let $x^*_{v_1,(w_1,u_1),t_1}$s and $x^*_{v_2,(w_2,u_2),t_2}$s be the solutions of (\ref{lpmodel}) on $G_1$ and $G_2$ respectively. Let $\mathbf u\!=\!(u_1,u_2),\mathbf v\!=\!(v_1,v_2)$. Define $r\!\in\![0,1]$, which we will decide later. We construct a solution of (\ref{lpmodel}) for $G_1\square G_2$ such that:
    \begin{equation}\label{eq:cartesiansolu}
        \begin{aligned}
            x_{\mathbf v,((w_1,u_2),\mathbf u),t_1+t_2}&=\begin{cases}
                r\cdot x^*_{v_1,(w_1,u_1),t_1} & \text{if $u_2\neq v_2$,} \\
                x^*_{v_1,(w_1,u_1),t_1} & \text{if $u_2=v_2$,} 
            \end{cases}\\
            x_{\mathbf v,((u_1,w_2),\mathbf u),t_1+t_2}&=\begin{cases}
                (1-r)\cdot x^*_{v_2,(w_2,u_2),t_2} & \text{if $u_1\neq v_1$,} \\
                x^*_{v_2,(w_2,u_2),t_2} & \text{if $u_1=v_1$.}
            \end{cases}
        \end{aligned}
    \end{equation}
    First of all, because $d_{G_1\square G_2}(\mathbf v,\mathbf u)\!=\!d_{G_1}(v_1,u_1)+d_{G_2}(v_2,u_2)$, it is easy to verify that (\ref{eq:cartesiansolu}) gives a BFB schedule. In addition, for any distinct $\mathbf u,\mathbf v\!\in\! G_1\square G_2$ with $u_1\!\neq\! v_1$ and $u_2\!\neq\! v_2$,
    \begin{align*}
        \sum_{\mathbf w}x_{\mathbf v,(\mathbf w,\mathbf u),t_1+t_2}
        &=r\sum_{w_1}x^*_{v_1,(w_1,u_1),t_1}+(1-r)\sum_{w_2}x^*_{v_2,(w_2,u_2),t_2}\\
        &=r+(1-r)\\
        &=1
    \end{align*}
    satisfying the equality in (\ref{lpmodel}). The $u_1\!=\!v_1$ or $u_2\!=\!v_2$ case is trivial. Because $G_1$ and $G_2$ have BW-optimal BFB schedule, by Theorem~\ref{thm:shortestbwopt}, for any $(w_1,u_1)\!\in\! E_{G_1}$,
    \begin{equation}\label{eq:cartesiang1edge}
        \sum_{v_1\in N^{-G_1}_{t}(u_1)}x^*_{v_1,(w_1,u_1),t}=\frac{N^{-G_1}_{t}}{d_1},
    \end{equation}
    where $N^{-G_1}_{t}(u_1)$ is $N^{-}_{t}(u_1)$ in $G_1$. Define $N^{-G_1\square G_2}_{t_1,t_2}(\mathbf u)\!=\!N_{t_1}^{-G_1}(u_1)\times N_{t_2}^{-G_2}(u_2)$, then it holds that
    \begin{multline*}
        N_t^{-G_1\square G_2}(\mathbf u)=\bigcup_{t_1=0}^tN^{-G_1\square G_2}_{t_1,t-t_1}(\mathbf u)\\
        =N_t^{-G_1}(u_1)\times\{u_2\}\cup\{u_1\}\times N_t^{-G_2}(u_2)\cup\bigcup_{t_1=1}^{t-1}N^{-G_1\square G_2}_{t_1,t-t_1}(\mathbf u).
    \end{multline*}
    Thus, (\ref{eq:cartesiang1edge}) gives
    \begin{equation}\label{eq:cartesianlink1}
        \sum_{\mathbf v\in N_t^{-G_1\square G_2}(\mathbf u)}x_{\mathbf v,((w_1,u_2),\mathbf u),t}=\frac{N^{-G_1}_t}{d_1}+r\sum_{t_1=1}^{t-1}\frac{N^{-G_1}_{t_1}N^{-G_2}_{t-t_1}}{d_1}.
    \end{equation}
    for any $((w_1,u_2),\mathbf u)\!\in\! E_{G_1\square G_2}$. For $G_2$, one can similarly get
    \begin{equation}\label{eq:cartesianlink2}
        \sum_{\mathbf v\in N_t^{-G_1\square G_2}(\mathbf u)}x_{\mathbf v,((u_1,w_2),\mathbf u),t}=\frac{N^{-G_2}_t}{d_2}+(1-r)\sum_{t_2=1}^{t-1}\frac{N^{-G_1}_{t-t_2}N^{-G_2}_{t_2}}{d_2}.
    \end{equation}
    The value of $r$ is the solution to $(\ref{eq:cartesianlink1})\!=\!(\ref{eq:cartesianlink2})$:
    \[
        \frac{N^{-G_1}_t}{d_1}+r\sum_{t_1=1}^{t-1}\frac{N^{-G_1}_{t_1}N^{-G_2}_{t-t_1}}{d_1}=\frac{N^{-G_2}_t}{d_2}+(1-r)\sum_{t_2=1}^{t-1}\frac{N^{-G_1}_{t-t_2}N^{-G_2}_{t_2}}{d_2}.
    \]
    To see there is always a solution $r\!\in\![0,1]$, we have $N^{-G_1}_t\leq d_1\cdot N^{-G_1}_{t-1}$ and $N^{-G_2}_t\leq d_2\cdot N^{-G_2}_{t-1}$, so
    \begin{gather*}
        \frac{N^{-G_1}_t}{d_1}-\frac{N^{-G_2}_t}{d_2}\leq\frac{N^{-G_1}_t}{d_1}\leq N^{-G_1}_{t-1}\leq\sum_{t_2=1}^{t-1}\frac{N^{-G_1}_{t-t_2}N^{-G_2}_{t_2}}{d_2},\\
        \frac{N^{-G_2}_t}{d_2}-\frac{N^{-G_1}_t}{d_1}\leq\frac{N^{-G_2}_t}{d_2}\leq N^{-G_2}_{t-1}\leq\sum_{t_1=1}^{t-1}\frac{N^{-G_1}_{t_1}N^{-G_2}_{t-t_1}}{d_1}.
    \end{gather*}
    The last inequality follows that because $G_2$ is nontrivial simple digraph, $N^{-G_2}_{1}\!=\!d_2$ and hence $N^{-G_1}_{t-1}\!=\!N^{-G_1}_{t-1}N^{-G_2}_{1}/d_2$. Note that $a+rb\!=\!c+(1-r)d$ always has a solution $r\!\in\![0,1]$ if $a-c\leq d$ and $c-a\leq b$. With $(\ref{eq:cartesianlink1})\!=\!(\ref{eq:cartesianlink2})$, by Theorem~\ref{thm:shortestbwopt}, we have constructed a BW-optimal solution of (\ref{lpmodel}) for $G_1\square G_2$. The theorem trivially follows by induction.
\end{proof}

\thmshortestpath*

\begin{proof}
    Let $v_0,v_k$ be arbitrary two distinct vertices in $V_G$ with $d(v_0,v_k)\!=\!k$. For any $x\!\in\! S$, we want to show that there exists a path taking $x$ from $v_0$ to $v_k$. At comm step $k$, conditions \ref{thm:shortestpath1} and \ref{thm:shortestpath2} guarantee that there exists $v_{k-1}\!\in\! N^-(v_k)$ and $((v_0,C_k),(v_{k-1},v_k),k)\!\in\! A$ such that $d(v_0,v_{k-1})\!=\!k-1$ and $x\!\in\! C_k$. At comm step $k-1$, similarly, it is guaranteed that there exists $v_{k-2}\!\in\! N^-(v_{k-1})$ and $((v_0,C_{k-1}),(v_{k-2},v_{k-1}),k-1)\!\in\! A$ such that $d(v_0,v_{k-2})\!=\!k-2$ and $x\!\in\! C_{k-1}$. Thus, we have a sequence of tuples in $A$:
    \begin{multline*}
        ((v_0,C_1),(v_0,v_1),1),((v_0,C_2),(v_1,v_2),2),\dots\\
        ((v_0,C_k),(v_{k-1},v_k),k),
    \end{multline*}
    where $x\!\in\! C_1\cap C_2\cap\dots\cap C_k$ as desired. In the other direction, if condition \ref{thm:shortestpath1} fails, then $A$ is not a BFB schedule; if condition \ref{thm:shortestpath2} fails, then $A$ is not a valid allgather schedule.
\end{proof}

\thmbfslatency*
\begin{proof}
    The proof is trivial.
\end{proof}

\thmshortestlpopt*

\begin{proof}
    The proof is mentioned in text.
\end{proof}

\thmshortestbwopt*

\begin{proof}
    At comm step $t$, each vertex needs to receive shards from vertices in $N^-_t(u)$. By condition \ref{thm:bwopt1} of Theorem~\ref{thm:bwopt}, each in-edge of vertex $u$ receives equal amount of data, so each in-edge receives $\frac{M}{N}|N^-_t(u)|/d$. In addition, condition \ref{thm:bwopt1} of Theorem~\ref{thm:bwopt} also forces every edge in $G$ receiving equal amount of data at any given comm step, so BW optimality is achieved if and only if $\frac{M}{N}|N^-_t(u)|/d\!=\!\frac{M}{N}|N^-_t(v)|/d\!=\!\frac{M}{N}N^-_t/d$ for all $u,v\!\in\! V_G$. Note that condition \ref{thm:bwopt2} of Theorem~\ref{thm:bwopt} is automatically satisfied. Thus, the conditions of Theorem~\ref{thm:shortestbwopt} lead to Theorem~\ref{thm:bwopt}, and vice versa.
\end{proof}

\thmshortestbwoptsuff*

\begin{proof}
    It is easy to see $N_x=da_x/b_x$. Constant $N_x$s satisfy condition \ref{thm:shortestbwopt1} of Theorem~\ref{thm:shortestbwopt}. As for \ref{thm:shortestbwopt2} of Theorem~\ref{thm:shortestbwopt}, at comm step $t$, consider a BFB schedule such that for any $u,v,w\!\in\! V_G$ with $d(v,u)\!=\!d(v,w)+1\!=\!t$, node $w$ sends $1/b_t$ of $v$'s shard to $u$. Thus, $\sum_{((v,C),(w,u))\!\in\! A_t}|C|\!=\!\frac{M}{N}a_t/b_t\!=\!\frac{M}{N}N_t/d$.
\end{proof}

\thmbalancedbfs*
\begin{proof}
    Consider a flow network, where each $j_a$ is connected to each $p_b\!\in\! f(j_a)$ with $\infty$ capacity. Source $s$ is connected to each $j_a$ with capacity 1, and each $p_b$ is connected to sink $t$ with capacity $m/d$. Thus, the workloads can be balanced if and only if the max flow is $m$. Given any subset $J$, consider the $s$-$t$ cut $(A,\bar{A})$ that $A\!=\!s+J+f(J)$. The cut has capacity $m-|J|+\frac{m}{d}|f(J)|$, which is less than $m$ if and only if the inequality is true.
\end{proof}

\thmshortestIP*

\begin{proof}
    For any $(w,u)$ at comm step $t$, since $|N^-_{t-1}(w)|\!\leq\! d^{t-1}$,
    \[
        \sum_{v} y_{v,(w,u),t}<\sum_{v}1+y_{v,(w,u),t}^{\text{LP}}\leq d^{t-1}+\sum_{v}y_{v,(w,u),t}^{\text{LP}}.
    \]
    Thus, we have $W_{u,t}\leq W_{u,t}^{\text{LP}}+d^{t-1}$. By (\ref{eq:lpT_B}),
    \begin{multline*}
        T_B-T_B^{\text{OPT}}\leq T_B-T_B^{\text{LP}}\\
        \leq\frac{M/N}{B/d}\cdot\frac{1}{P}\sum_{t=1}^{D(G)}d^{t-1}=\frac{M}{B}\cdot\frac{d(d^{D(G)}-1)}{(d-1)PN}.
    \end{multline*}
    Note that we need to divide (\ref{eq:lpT_B}) by $P$, because $y_{v,(w,u),t}\!\in\![0,P]$ in (\ref{integerprog}) while $x_{v,(w,u),t}\!\in\![0,1]$ in (\ref{lpmodel}). If $G$ is Moore optimal (i.e., $N>M_{d,D(G)-1}$), it follows that
    \[
        T_B-T_B^{\text{OPT}}<\frac{M}{B}\cdot\frac{d(d^{D(G)}-1)}{(d-1)PM_{d,D(G)-1}}=\frac{M}{B}\cdot\frac{d}{P}.
    \]
\end{proof}

\thmgenkautzdiameter*

\begin{proof}
    From \cite{genkautz}, we know that $k\leq\lceil\log_d m\rceil$. Then,
    \[
        m\geq d^{k-1}>\frac{d^{k-1}-1}{d-1}=M_{d,k-2}.
    \]
\end{proof}

\thmdegfourconstruct*
\begin{proof}
    See~\cite{circulantk2}.
\end{proof}

\section{Supplementary Tables and Figures}\label{sec:suppfigtab}

\begin{figure}[h]
    \centering
    \scalebox{0.8}{
        \begin{tikzpicture}[roundnode/.style={circle,draw=black,minimum size=7mm}]
	\node[roundnode] (2) at (0, 3*4/5) {$2$};
	\node[roundnode] (1) at (-2.12*4/5,2.12*4/5) {$1$};
	\node[roundnode] (6) at (2.12*4/5,2.12*4/5) {$6$};
	\node[roundnode] (5) at (-1*4/5,0) {$5$};
	\node[roundnode] (7) at (1*4/5,0)  {$7$};
	\node[roundnode] (4) at (-3*4/5, 0) {$4$};
	\node[roundnode] (3) at (3*4/5, 0) {$3$};
	\node[roundnode] (0) at (0, -1.89*4/5) {$0$};

	\path[-latex] (0) edge (3);
	\path[-latex] (0) edge (4);
	\path[-latex] (1) edge (2);
	\path[-latex] (1) edge (7);
	\path[-latex] (2) edge (3);
	\path[-latex] (2) edge (4);
	\path[-latex] (3) edge (6);
	\path[-latex] (3) edge (7);
	\path[-latex] (4) edge (1);
	\path[-latex] (4) edge (5);
	\path[-latex] (5) edge (0);
	\path[-latex] (5) edge (1);
	\path[-latex] (6) edge (2);
	\path[-latex] (6) edge (5);
	\path[-latex] (7) edge (0);
	\path[-latex] (7) edge (6);
\end{tikzpicture}
    }
    \caption{Diamond Topology $(N=8,d=2)$.}
    \label{fig:diamondtopo}
\end{figure}

\begin{figure}[h]
    \centering
    \includegraphics[width=\columnwidth]{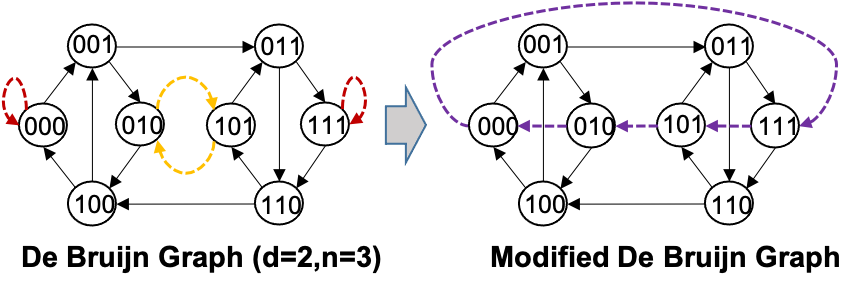}
    \caption{An Example of Modified de Bruijn Graph $(N=8,d=2)$. {\normalfont The modification rewires the self loops and 2-cycles in de Bruijn graph to form a single long cycle without violating degree constraint.}}
    \label{fig:modifieddbjg}
\end{figure}

\setlength{\rotFPtop}{0pt plus 1fil}

\begin{sidewaystable*}[p]
    \centering
    \resizebox{\textwidth}{!}{
    \begin{tabular}{|c|c|c|c|c|c|c|c|c|c|}
        \hline
        \multirow{2}{*}{Topology} & \multirow{2}{*}{Notation} & \multirow{2}{*}{Degree} & \multirow{2}{*}{Size} & \multirow{2}{*}{\shortstack[c]{Reverse- \\ Symmetric}} & \multirow{2}{*}{\shortstack[c]{Bandwidth \\ Optimal}} & \multirow{2}{*}{\shortstack[c]{Moore \\ Optimal}} & \multirow{2}{*}{\shortstack[c]{BFB \\ Schedule}} & \multirow{2}{*}{Self-Loop} & \multirow{2}{*}{MultiEdge} \\
        & & & & & & & & & \\
        \hline
        Complete & $K_m$ & $m-1$ & $m$ & \checkmark & \checkmark & \checkmark & \checkmark & $\times$ & $\times$ \\
        \hline
        \multirow{2}{*}{\shortstack[c]{Complete \\ Bipartite (Fig~\ref{fig:notationexample})}} & \multirow{2}{*}{$K_{d,d}$} & \multirow{2}{*}{$d$} & \multirow{2}{*}{$2d$} & \multirow{2}{*}{\checkmark} & \multirow{2}{*}{\checkmark} & \multirow{2}{*}{\checkmark} & \multirow{2}{*}{\checkmark} & \multirow{2}{*}{$\times$} & \multirow{2}{*}{$\times$} \\ 
        & & & & & & & & & \\
        \hline
        Hamming & $H(n,q)=K_q^{\square n}$ & $n(q-1)$ & $q^n$ & \checkmark & \checkmark & $T_L=n$ & \checkmark & $\times$ & $\times$ \\
        \hline
        \multirow{2}{*}{\shortstack[c]{Kautz}} & \multirow{2}{*}{\shortstack[c]{$K(d,n)=$ \\ $L^n(K_{d+1})$}} & \multirow{2}{*}{$d$} & \multirow{2}{*}{$d^n(d+1)$} & \multirow{2}{*}{\checkmark} & \multirow{2}{*}{\shortstack[c]{when \\ $n=0$}} & \multirow{2}{*}{\checkmark} & \multirow{2}{*}{\checkmark} & \multirow{2}{*}{\shortstack[c]{$\times$}} & \multirow{2}{*}{$\times$} \\ 
        & & & & & & & & & \\
        \hline
        \multirow{2}{*}{\shortstack[c]{Generalized \\ Kautz (\S\ref{sec:genkautz})}} & \multirow{2}{*}{$\Pi_{d,m}$} & \multirow{2}{*}{$d$} & \multirow{2}{*}{$m\geq d+1$} & \multirow{2}{*}{$\times$} & \multirow{2}{*}{\shortstack[c]{when \\ $m=d+1$}} & \multirow{2}{*}{$T_L\leq T_L^*+1$} & \multirow{2}{*}{\checkmark} & \multirow{2}{*}{\shortstack[c]{when $m\bmod$ \\ $(d+1)\neq 0$}} & \multirow{2}{*}{$\times$} \\ 
        & & & & & & & & & \\
        \hline
        Circulant (\S\ref{sec:circulant}) & $C(m,\{a_1,\dots,a_d\})$ & $d$ & $m$ & \checkmark & \checkmark & $\times$ & \checkmark & $\times$ & $\times$ \\
        \hline
        Directed Circulant & & $d$ & $d+2$ & \checkmark & \checkmark & \checkmark & \checkmark & $\times$ & $\times$ \\
        \hline
        \multirow{2}{*}{\shortstack[c]{Bidirectional \\ Ring}} & \multirow{2}{*}{BiRing$(d,m)$} & \multirow{2}{*}{even $d$} & \multirow{2}{*}{$m\geq 3$} & \multirow{2}{*}{\checkmark} & \multirow{2}{*}{\checkmark} & \multirow{2}{*}{$T_L=\lfloor m/2\rfloor$} & \multirow{2}{*}{\checkmark} & \multirow{2}{*}{$\times$} & \multirow{2}{*}{\shortstack[c]{when \\ $d>2$}} \\ 
        & & & & & & & & & \\
        \hline
        \multirow{2}{*}{\shortstack[c]{Unidirectional \\ Ring}} & \multirow{2}{*}{UniRing$(d,m)$} & \multirow{2}{*}{$d$} & \multirow{2}{*}{$m$} & \multirow{2}{*}{\checkmark} & \multirow{2}{*}{\checkmark} & \multirow{2}{*}{$T_L=m-1$} & \multirow{2}{*}{\checkmark} & \multirow{2}{*}{$\times$} & \multirow{2}{*}{\shortstack[c]{when \\ $d>1$}} \\ 
        & & & & & & & & & \\
        \hline
        Diamond (Fig~\ref{fig:diamondtopo}) & & $2$ & $8$ & $\times$ & \checkmark & \checkmark & $\times$ & $\times$ & $\times$ \\
        \hline
        \multirow{2}{*}{de Bruijn} & \multirow{2}{*}{$\text{DBJ}(d,n)$} & \multirow{2}{*}{$d$} & \multirow{2}{*}{$d^n$} & \multirow{2}{*}{\checkmark} & \multirow{2}{*}{\shortstack[c]{when \\ $n\leq 1$}} & \multirow{2}{*}{\checkmark} & \multirow{2}{*}{\checkmark} & \multirow{2}{*}{\checkmark} & \multirow{2}{*}{$\times$} \\ 
        & & & & & & & & & \\
        \hline
        \multirow{4}{*}{\shortstack[c]{Modified \\ de Bruijn \\ (Fig~\ref{fig:modifieddbjg})}} & DBJMod$(2,3)$ & $2$ & $8$ & \checkmark & \checkmark & $T_L=4$ & $\times$ & $\times$ & $\times$ \\
        \cline{2-10}
        & DBJMod$(2,4)$ & $2$ & $16$ & $\times$ & \checkmark & $T_L=5$ & $\times$ & $\times$ & $\times$ \\
        \cline{2-10}
        & DBJMod$(3,2)$ & $3$ & $9$ & $\times$ & \checkmark & $T_L=3$ & $\times$ & $\times$ & $\times$ \\
        \cline{2-10}
        & DBJMod$(4,2)$ & $4$ & $16$ & $\times$ & \checkmark & $T_L=3$ & $\times$ & $\times$ & $\times$ \\
        \hline
        \multirow{2}{*}{\shortstack[c]{Distance-Regular \\ Graphs (\S\ref{sec:dist-reg})}} & \multirow{2}{*}{$\text{DistReg}(d,m)$} & \multirow{2}{*}{$d$} & \multirow{2}{*}{$m$} & \multirow{2}{*}{\checkmark} & \multirow{2}{*}{\checkmark} & & \multirow{2}{*}{\checkmark} & \multirow{2}{*}{$\times$} & \multirow{2}{*}{$\times$} \\
        & & & & & & & & & \\
        \hline
    \end{tabular}
    }
    \vspace{0.5\baselineskip}
    \caption{Summary of Important Topologies.}
    \label{table:toposummary}
\end{sidewaystable*}

\end{document}